\definecolor{stroke1}{HTML}{2574A9} 
\date{}
    \def\IfEmptyTF#1%
\relax\detokenize{#1}\relax%
\NewDocumentCommand{\mathOrText}{m}
{%
    \ensuremath{#1}\xspace%
}
\let\originalleft\left
\let\originalright\right
\renewcommand{\left}{\mathopen{}\mathclose\bgroup\originalleft}
\renewcommand{\right}{\aftergroup\egroup\originalright}
    \DeclareRobustCommand{\bfseries}%
    {%
        \not@math@alphabet\bfseries\mathbf%
        \fontseries\bfdefault\selectfont%
        \boldmath%
    }
\crefname{ineq}{inequality}{inequalities}
\crefname{term}{term}{terms}
\crefname{cond}{condition}{conditions}
\crefname{assume}{assumption}{assumptions}
\let\oldfootnote\footnote
\newlength{\spaceBeforeFootnote} 
\newlength{\spaceAfterFootnote}  
\RenewDocumentCommand{\footnote}{o o o m}%
{%
    \IfNoValueTF{#1}%
    {%
        \oldfootnote{#4}%
    }%
    {%
        \setlength{\spaceBeforeFootnote}{\IfEmptyTF{#1}{0}{#1} em}%
        \IfNoValueTF{#2}%
        {%
            \hspace*{\spaceBeforeFootnote}\oldfootnote{#4}%
        }%
        {%
            \setlength{\spaceAfterFootnote}{\IfEmptyTF{#2}{0}{#2} em}%
            \hspace*{\spaceBeforeFootnote}\IfNoValueTF{#3}{\oldfootnote{#4}}{\oldfootnote[#3]{#4}}\hspace*{\spaceAfterFootnote}%
        }%
    }%
}
\declaretheoremstyle
[
   	spaceabove = \topsep,
   	spacebelow = \topsep,
   	headfont = \bfseries,
   	headformat = \textcolor{stroke1}{$\blacktriangleright$} \NAME~\NUMBER \NOTE,
   	notefont = \bfseries,
   	notebraces = {(}{)},
   	bodyfont = \normalfont,
   	postheadspace = 0.5 em,
   	qed = \textcolor{stroke1}{\bfseries$\blacktriangleleft$},
]
{myTheoremStyle}
\declaretheorem
[
   	style = myTheoremStyle,
   	name = Lemma,
    sharenumber = conjecture,
]
{lemma}
\declaretheorem
[
   	style = myTheoremStyle,
   	name = Corollary,
    sharenumber = conjecture,
]
{corollary}
\declaretheorem
[
   	style = myTheoremStyle,
   	name = Theorem,
    sharenumber = conjecture,
]
{theorem}
\declaretheorem
[
   	style = myTheoremStyle,
   	name = Definition,
    sharenumber = conjecture,
]
{definition}
\declaretheorem
[
    style = myTheoremStyle,
    name = Remark,
    sharenumber = conjecture,
]
{remark}
\declaretheorem
[
    style = myTheoremStyle,
    name = Observation,
    sharenumber = conjecture,
]
{observation}
\NewDocumentCommand{\functionTemplate}{m m m m o}%
{%
    \IfNoValueTF{#5}%
    {%
        \mathOrText{#1\left#2{#4}\right#3}%
    }%
    {%
        \mathOrText{#1#5#2{#4}#5#3}%
    }%
}
\newcommand*{\leftBracketType}{(}
\newcommand*{\rightBracketType}{)}
\NewDocumentCommand{\createFunction}{m m o o}%
{%
    \renewcommand*{\leftBracketType}{\IfNoValueTF{#3}{(}{#3}}%
    \renewcommand*{\rightBracketType}{\IfNoValueTF{#4}{)}{#4}}%
    \NewDocumentCommand{#1}{o o}%
    {%
        \IfNoValueTF{##1}%
        {%
            \mathOrText{#2}%
        }%
        {%
            \functionTemplate{#2}{\leftBracketType}{\rightBracketType}{##1}[##2]%
        }%
    }%
}
\DeclareDocumentCommand{\probabilisticFunctionTemplate}{m m O{} o}
{%
    \functionTemplate{#1}%
    {\lbrack}%
    {\rbrack}%
    {#2\IfEmptyTF{#3}{}{\ \IfNoValueTF{#4}{\left}{#4}\vert\ \vphantom{#2}#3\IfNoValueTF{#4}{\right.}{}}}%
    [#4]%
}
\newcommand*{\N}{\mathOrText{\mathds{N}}}
\newcommand*{\R}{\mathOrText{\mathds{R}}}
\newcommand*{\indicatorFunctionSymbol}{\mathds{1}}
\RenewDocumentCommand{\Pr}{m O{} o}%
{%
    \probabilisticFunctionTemplate{\mathrm{Pr}}{#1}[#2][#3]%
}
\NewDocumentCommand{\E}{m O{} o}%
{%
    \probabilisticFunctionTemplate{\mathrm{E}}{#1}[#2][#3]%
}
\NewDocumentCommand{\Var}{m O{} o}%
{%
    \probabilisticFunctionTemplate{\mathrm{Var}}{#1}[#2][#3]%
}
\DeclareDocumentCommand{\bigO}{m o}%
{%
    \functionTemplate{\mathrm{O}}{(}{)}{#1}[#2]%
}
\DeclareDocumentCommand{\smallO}{m o}%
{%
    \functionTemplate{\mathrm{o}}{(}{)}{#1}[#2]%
}
\DeclareDocumentCommand{\bigTheta}{m o}%
{%
    \functionTemplate{\upTheta}{(}{)}{#1}[#2]%
}
\DeclareDocumentCommand{\bigOmega}{m o}%
{%
    \functionTemplate{\upOmega}{(}{)}{#1}[#2]%
}
\DeclareDocumentCommand{\smallOmega}{m o}%
{%
    \functionTemplate{\upomega}{(}{)}{#1}[#2]%
}
\DeclareDocumentCommand{\eulerE}{o}%
{%
    \mathOrText{\mathrm{e}\IfNoValueTF{#1}{}{^{#1}}}%
}
\DeclareDocumentCommand{\poly}{m o}%
{%
    \functionTemplate{\mathrm{poly}}{(}{)}{#1}[#2]%
}
\createFunction{\id}{\mathrm{id}}
\NewDocumentCommand{\ind}{m o o}%
{%
    \IfNoValueTF{#2}%
    {%
        \mathOrText{\indicatorFunctionSymbol_{#1}}%
    }%
    {%
        \functionTemplate{\indicatorFunctionSymbol_{#1}}{(}{)}{#2}[#3]%
    }%
}
\DeclareDocumentCommand{\dom}{m o}%
{%
    \functionTemplate{\mathrm{dom}}{(}{)}{#1}[#2]%
}
\DeclareDocumentCommand{\rng}{m o}%
{%
    \functionTemplate{\mathrm{rng}}{(}{)}{#1}[#2]%
}
\DeclareDocumentCommand{\d}{o}%
{%
    \mathrm{d}\IfNoValueTF{#1}{}{^{#1}}%
}
\DeclareDocumentCommand{\set}{m m o}%
{%
    \mathOrText{\IfNoValueTF{#3}{\left}{#3}\{#1\ \IfNoValueTF{#3}{\left}{#3}\vert\ \vphantom{#1}#2\IfNoValueTF{#3}{\right.}{}\IfNoValueTF{#3}{\right}{#3}\}}%
}
\crefname{observation}{observation}{observations}
\newcommand*{\vectorize}[1]{\mathOrText{\bm{#1}}}
\DeclareDocumentCommand{\bigOTilde}{m o}%
{%
	\functionTemplate{\widetilde{\mathrm{O}}}{(}{)}{#1}[#2]%
}
\newcommand*{\powerset}[1]{\mathOrText{2^{#1}}}
\newcommand*{\dtv}[2]{\mathOrText{d_{\text{tv}}}\left(#1, #2\right)}
\DeclareMathOperator*{\esssup}{ess\,sup}
\newcommand*{\compEvent}[1]{\mathOrText{\overline{#1}}}
\newcommand*{\concat}{\mathOrText{\circ}}
\newcommand*{\size}[1]{\mathOrText{\left\vert #1 \right\vert}}
\newcommand*{\absolute}[1]{\mathOrText{\left\vert #1 \right\vert}}
\newcommand*{\neighbors}[2]{\mathOrText{N_{#1}\left(#2\right)}}
\newcommand{\intD}{\mathOrText{\text{d}}}
\newcommand*{\tensor}{\mathOrText{\otimes}}
\newcommand*{\bigTensor}{\bigotimes}
\newcommand*{\zeroFunction}{\mathOrText{\bm{0}}}
\newcommand*{\error}{\mathOrText{\varepsilon}}
\newcommand*{\errorProb}{\mathOrText{\delta}}
\NewDocumentCommand{\EWrt}{m o O{} o}%
{%
	\probabilisticFunctionTemplate{\mathrm{E}\IfNoValueF{#2}{\mathOrText{_{#2}}}}{#1}[#3][#4]%
}
\NewDocumentCommand{\VarWrt}{m o O{} o}%
{%
	\probabilisticFunctionTemplate{\mathrm{Var}\IfNoValueF{#2}{\mathOrText{_{#2}}}}{#1}[#3][#4]%
}
\DeclareDocumentCommand{\innerProduct}{m m o}
{
	\mathOrText{\left\langle #1, #2 \right\rangle\IfNoValueF{#3}{_{#3}}}
}
\DeclareDocumentCommand{\norm}{m o}
{
	\mathOrText{ \left\lVert #1 \right\rVert \IfNoValueF{#2}{_{#2}}}
}
\DeclareDocumentCommand{\Lspace}{m o}
{
	\mathOrText{ L^{#1} \IfNoValueF{#2}{\left(#2\right)}}
}
\newcommand*{\graph}{\mathOrText{G}}
\newcommand*{\vertices}{\mathOrText{V}}
\newcommand*{\edges}{\mathOrText{E}}
\DeclareDocumentCommand{\degree}{o o}
{
	\mathOrText{ d \IfNoValueF{#1}{_{#1} \IfNoValueF{#2}{\left(#2\right)}}}
}
\newcommand*{\graphs}[1]{\mathOrText{\mathcal{G}_{#1}}}
\newcommand*{\vertexSpace}{\mathOrText{X}}
\newcommand*{\vertexSigmaAlgebra}{\mathOrText{\mathcal{A}}}
\newcommand*{\vertexDistribution}{\mathOrText{\xi}}
\newcommand*{\vertexProbabilitySpace}{\mathOrText{\mathcal{X}}}
\newcommand{\graphonSymbol}{\mathOrText{W}}
\DeclareDocumentCommand{\edgeProbability}{o o}
{
	\mathOrText{ \graphonSymbol \IfNoValueF{#2}{\left(#1, #2\right)}}
}
\DeclareDocumentCommand{\graphDistribution}{m m m o}
{
	\mathOrText{ \mathds{G}_{#3, #1} \IfNoValueF{#4}{\left(#4\right)}}
}
\newcommand*{\degreeError}{\mathOrText{\alpha}}
\newcommand*{\degreeErrorProb}{\mathOrText{q}}
\newcommand*{\countOnes}[1]{\mathOrText{\left\lvert#1\right\rvert_1}}
\newcommand*{\countEdges}[3]{\mathOrText{m_{#1}^{(#2)}\left(#3\right)}}
\DeclareDocumentCommand{\spinConfiguration}{o}
{
	\mathOrText{\sigma \IfNoValueF{#1}{\left(#1\right)}}
}
\newcommand*{\spinConfigurations}[1]{\mathOrText{\Sigma_{#1}}}
\newcommand*{\fugacitySymbol}{\mathOrText{\gamma}}
\DeclareDocumentCommand{\fugacity}{o}
{
	\mathOrText{ \fugacitySymbol \IfNoValueF{#1}{\left(#1\right)}}
}
\newcommand*{\criticalFugacity}[1]{\mathOrText{\fugacitySymbol_{\text{c}}\left(#1\right)}}
\newcommand*{\initialFugacity}{\mathOrText{\fugacitySymbol_0}}
\newcommand*{\edgeInteraction}{\mathOrText{\beta}}
\newcommand*{\GibbsDistributionSymbol}{\mathOrText{\mu}}
\DeclareDocumentCommand{\GibbsDistribution}{m m m m o}
{
	\mathOrText{\GibbsDistributionSymbol_{#1}^{\left(#2, #3, #4\right)} \IfNoValueF{#5}{\left(#5\right)}}
}
\newcommand*{\discretePartitionFunctionSymbol}{\mathOrText{Z}}
\DeclareDocumentCommand{\partitionFunction}{m o o}
{
	\mathOrText{ \discretePartitionFunctionSymbol_{#1} \IfNoValueF{#3}{\left(#2, #3\right)}}
}
\DeclareDocumentCommand{\hcPartitionFunction}{m o}
{
	\mathOrText{ \discretePartitionFunctionSymbol_{#1} \IfNoValueF{#2}{\left(#2\right)}}
}
\DeclareDocumentCommand{\hcGibbsDistribution}{m m o}
{
	\mathOrText{ \GibbsDistributionSymbol_{#1}^{\left(#2\right)} \IfNoValueF{#3}{\left(#3\right)}}
}
\newcommand*{\independentSets}[1]{\mathOrText{\mathcal{I}\left(#1\right)}}
\newcommand*{\configurationToSet}[1]{\mathOrText{S_{#1}}}
\DeclareDocumentCommand{\occupationRatio}{m m o o}
{
	\mathOrText{ R_{#1}^{\left(#2\right)} 
		\IfNoValueTF{#4}
		{
			\IfNoValueF{#3}{\left(#3\right)}
		}
		{
			\left(#3 \mid #4\right)
		}
	}
}
\newcommand{\probSpace}{\mathOrText{\Omega}}
\newcommand{\sigmaAlgebra}{\mathOrText{\mathcal{F}}}
\DeclareDocumentCommand{\probMeasureIdx}{o o}
{
	\mathOrText{ \mu \IfNoValueF{#1}{_{#1}} \IfNoValueF{#2}{\left(#2\right)}}
}
\DeclareDocumentCommand{\probMeasure}{o}
{
	\mathOrText{\probMeasureIdx[][#1]}
}
\DeclareDocumentCommand{\markovOperatorIdx}{o o o}
{
	\mathOrText{ P\IfNoValueF{#1}{_{#1}} \IfNoValueF{#3}{\left(#2, #3\right)}}
}
\DeclareDocumentCommand{\markovOperator}{o o}
{
	\mathOrText{\markovOperatorIdx[][#1][#2]}
}
\DeclareDocumentCommand{\dirichlet}{o o}
{
	\mathOrText{ \mathcal{E} \IfNoValueF{#2}{\left(#1, #2\right)}}
}
\newcommand*{\pointProcessSpace}{\mathOrText{\mathds{X}}}
\newcommand*{\BorelSymbol}{\mathOrText{\mathcal{B}}}
\DeclareDocumentCommand{\Borel}{o}
{
	\mathOrText{\BorelSymbol \IfNoValueF{#1}{\left(#1\right)}}
}
\DeclareDocumentCommand{\BorelOn}{m o}
{
	\mathOrText{\BorelSymbol_{#1} \IfNoValueF{#2}{\left(#2\right)}}
}
\DeclareDocumentCommand{\dist}{o o}
{
	\mathOrText{d \IfNoValueF{#2}{\left(#1, #2\right)}}
}
\newcommand*{\volumeMeasureSymbol}{\mathOrText{\nu}}
\DeclareDocumentCommand{\volumeMeasure}{o}
{
	\mathOrText{\volumeMeasureSymbol \IfNoValueF{#1}{\left(#1\right)}}
}
\DeclareDocumentCommand{\productVolumeMeasure}{m o}
{
	\mathOrText{\volumeMeasureSymbol^{#1} \IfNoValueF{#2}{\left(#2\right)}}
}
\DeclareDocumentCommand{\countingMeasures}{o}
{
	\mathOrText{\mathcal{N} \IfNoValueF{#1}{_{#1}}}
}
\DeclareDocumentCommand{\countingMeasure}{o}
{
	\mathOrText{\eta \IfNoValueF{#1}{\left(#1\right)}}
}
\newcommand*{\countingSigmaAlgebra}{\mathOrText{\mathcal{R}}}
\DeclareDocumentCommand{\countFunction}{m o}
{
	\mathOrText{N_{#1} \IfNoValueF{#2}{\left(#2\right)}}
}
\DeclareDocumentCommand{\pointSet}{o}
{
	\mathOrText{X \IfNoValueF{#1}{_{#1}}}
}
\DeclareDocumentCommand{\DiracMeasure}{m o}
{
	\mathOrText{\delta_{#1} \IfNoValueF{#2}{\left(#2\right)}}
}
\newcommand*{\PoissonIntensity}{\mathOrText{\kappa}}
\newcommand*{\region}{\mathOrText{\mathds{V}}}
\newcommand*{\gppFugacity}{\mathOrText{\lambda}}
\DeclareDocumentCommand{\hamiltonian}{o}
{
	\mathOrText{H \IfNoValueF{#1}{\left(#1\right)}}
}
\DeclareDocumentCommand{\potential}{o o}
{
	\mathOrText{\phi \IfNoValueF{#2}{\left(#1, #2\right)}}
}
\DeclareDocumentCommand{\gppPartitonFunction}{m o o}
{
	\mathOrText{\Xi_{#1} \IfNoValueF{#3}{\left(#2, #3\right)}}
}
\newcommand*{\pointProcess}{\mathOrText{P}}
\newcommand*{\GibbsPointProcessSymbol}{\mathOrText{P}}
\newcommand*{\GibbsPointProcess}[3]{\mathOrText{\GibbsPointProcessSymbol_{#1}^{\left(#2, #3\right)}}}
\DeclareDocumentCommand{\PoissonPointProcess}{o o}
{
	\mathOrText{Q \IfNoValueF{#1}{_{#1}} \IfNoValueF{#2}{\left(#2\right)}}
}
\newcommand*{\numPoints}{\mathOrText{k}}
\DeclareDocumentCommand{\canonicalDistribution}{m m m o}
{
	\mathOrText{ \zeta^{(#1)}_{#2, #3} \IfNoValueF{#4}{\left(#4\right)}}
}
\newcommand*{\canonicalProbabilitySpace}[1]{\mathOrText{ \mathcal{X}_{#1}}}
\DeclareDocumentCommand{\canonicalEdgeProbability}{m o o}
{
	\mathOrText{ \graphonSymbol_{#1} \IfNoValueF{#3}{\left(#2, #3\right)}}
}
\DeclareDocumentCommand{\generalizedTemperedness}{m o}
{
	\mathOrText{C_{#1} \IfNoValueF{#2}{\left(#2\right)}}
}
\newcommand*{\numSpaces}{\mathOrText{N}}
\newcommand*{\deviationSymbol}{\mathOrText{\Delta}}
\DeclareDocumentCommand{\deviation}{m m o o}
{
	\mathOrText{ \deviationSymbol_{#1}^{\left(#2\right)} \IfNoValueF{#4}{\left(#3, #4\right)}}
}
\DeclareDocumentCommand{\deviationFunc}{m o}
{
	\mathOrText{ \deviationSymbol_{#1} \IfNoValueF{#2}{\left(#2\right)}}
}
\newcommand*{\uniformProbabilitySpace}{\mathOrText{\mathcal{Y}}}
\newcommand*{\uniformDistributionSymbol}{\mathOrText{u}}
\DeclareDocumentCommand{\uniformDistribution}{o}
{
	\mathOrText{\uniformDistributionSymbol \IfNoValueF{#1}{\left(#1\right)}}
}
\DeclareDocumentCommand{\uniformDistributionOn}{m o}
{
	\mathOrText{\uniformDistributionSymbol_{#1} \IfNoValueF{#2}{\left(#2\right)}}
}
\DeclareDocumentCommand{\productUniformDistributionOn}{m m o}
{
	\mathOrText{\uniformDistributionSymbol^{#2}_{#1} \IfNoValueF{#3}{\left(#3\right)}}
}
\newcommand*{\timeSymbol}{\mathOrText{t}}
\newcommand*{\sampleGraphTime}[3]{\mathOrText{\timeSymbol_{#1, #2}\left(#3\right)}}
\DeclareDocumentCommand{\PapangelouIntensity}{o o o}
{
	\mathOrText{\zeta \IfNoValueF{#1}{_{#1}}\IfNoValueF{#3}{\left(#2, #3\right)}}
}
\newcommand*{\samplePointTime}[1]{\mathOrText{\timeSymbol_{#1}}}
\newcommand*{\evaluatePotentialTime}[1]{\mathOrText{\timeSymbol_{#1}}}
\newcommand*{\samplingError}{\mathOrText{\varepsilon}}
\newcommand*{\randomPoint}{\mathOrText{X}}
\newcommand*{\samplingGraph}{\mathOrText{G}}
\newcommand*{\sampledPointSet}{\mathOrText{X}}
\newcommand*{\modifiedSampledPointSet}{\mathOrText{Y}}
\DeclareDocumentCommand{\spinConfigurationModified}{o}
{
	\mathOrText{\tau \IfNoValueF{#1}{\left(#1\right)}}
}
\newcommand*{\modfiedSamplerOutput}[1]{\mathOrText{\widehat{P_{#1}}}}
\newcommand*{\approximationGraphs}[2]{\mathOrText{A^{\left(#1\right)}_{#2}}}
\DeclareDocumentCommand{\gppDensity}{o}
{
	\mathOrText{f\IfNoValueF{#1}{\left(#1\right)}}
}
\DeclareDocumentCommand{\modifiedSamplingDensity}{m o}
{
	\mathOrText{g_{#1} \IfNoValueF{#2}{\left(#2\right)}}
}
\DeclareDocumentCommand{\densityNormalizingFunction}{m m o}
{
	\mathOrText{\Psi^{\left(#2\right)}_{#1} \IfNoValueF{#3}{\left(#3\right)}}
}
\DeclareDocumentCommand{\countingMeasureToTuple}{o}
{
	\mathOrText{\varphi\IfNoValueF{#1}{\left(#1\right)}}
}
\newcommand{\kPWCC}[2]{\mathOrText{V_{#1}(#2)}}
\newcommand{\pwcc}[1]{\mathOrText{\Delta_{#1}}}
\newcommand{\rootVertex}{\mathOrText{r}}
\DeclareDocumentCommand{\paths}{o o}
{
	\mathOrText{\mathcal{P}
		\IfNoValueTF{#2}
		{
			\IfNoValueF{#1}{_{#1}}
		}{
			_{#1, #2}
		}
	}
}
\DeclareDocumentCommand{\sawTree}{o o}
{
	\mathOrText{T
		\IfNoValueTF{#2}
		{
			\IfNoValueF{#1}{_{#1}}
		}{
			_{#1, #2}
		}
	}
}
\DeclareDocumentCommand{\orderedPaths}{o o o}
{
	\mathOrText{\mathcal{P}
		\IfNoValueTF{#3}
		{
			\IfNoValueTF{#2}{
				_{#1}^{#2}
			}{
				\IfNoValueF{#1}{_{#1}}
			}
		}{
			_{#1, #3}^{#2}
		}
	}
}
\DeclareDocumentCommand{\WeitzTree}{o o o}
{
	\mathOrText{T
		\IfNoValueTF{#3}
		{
			\IfNoValueTF{#2}{
				_{#1}^{#2}
			}{
				\IfNoValueF{#1}{_{#1}}
			}
		}{
			_{#1, #3}^{#2}
		}
	}
}
\DeclareDocumentCommand{\layer}{m o o o}
{
	\mathOrText{L
		\IfNoValueTF{#4}
		{
			\IfNoValueTF{#3}{
				_{#2}^{#3}
			}{
				\IfNoValueF{#2}{_{#2}}
			}
		}{
			_{#2, #4}^{#3}
		}
		\left(#1\right)
	}
}
\newcommand{\neighborOrderings}{\mathOrText{F}}
\DeclareDocumentCommand{\neighborOrder}{m o}
{
	\mathOrText{f_{#1} \IfNoValueF{#2}{\left(#2\right)}}
}
\newcommand{\graphFamily}{\mathOrText{\mathcal{F}}}
\newcommand*{\connectiveConstant}{\mathOrText{\Delta}}
\newcommand*{\mixingConstant}{\mathOrText{\delta}}
\title{Sampling repulsive Gibbs point processes using random graphs}
 \author{Tobias Friedrich$^{*}$ \and Andreas Göbel$^{*}$ \and Maximilian Katzmann$^{\dagger}$ \and Martin~S. Krejca$^{\ddag}$ \and Marcus Pappik$^{*}$}
\begin{document}

	\maketitle
	\begin{abstract}

We study computational aspects of repulsive Gibbs point processes, which are probabilistic models of interacting particles in a finite-volume region of space. We introduce an approach for reducing a Gibbs point process to the hard-core model, a well-studied discrete spin system. Given an instance of such a point process, our reduction generates a random graph drawn from a natural geometric model. We show that the partition function of a hard-core model on graphs generated by the geometric model concentrates around the partition function of the Gibbs point process. Our reduction allows us to use a broad range of algorithms developed for the hard-core model to sample from the Gibbs point process and approximate its partition function. This is, to the extend of our knowledge, the first approach that deals with pair potentials of unbounded range. We compare the resulting algorithms with recently established results and study further properties of the random geometric graphs with respect to the hard-core model.

\end{abstract}

\section{Introduction}

Gibbs point processes are a tool for modelling a variety of phenomena that can be described as distributions of random spatial events \cite{baddeley2006case,moller2007modern}. Such phenomena include the location of stars in the universe, a sample of cells under the microscope, or the location of pores and cracks in the ground (see \cite{moller2003statistical,van2000markov} for more on applications of Gibbs point processes). In statistical physics, such point processes are frequently used as stochastic models for gases or liquids of interacting particles \cite{ruelle1999statistical}.

A Gibbs point process on a finite-volume region $\region$ is parameterized by a fugacity $\gppFugacity$ and a pair potential $\potential$ expressing the interactions between pairs of points. Every point configuration in the region is assigned a weight according to the pair interactions~$\potential$ between all pairs of points in the configuration. One can then think of a Gibbs point process as a Poisson point process of intensity $\gppFugacity$, where the density of each configuration is scaled proportionally to its weight. The density is normalized by the partition function, which is the integral of the weights over the configuration space (see \Cref{sec:GPP_definition} for a formal definition of the model). The most famous example of such a process is the \emph{hard-sphere model}, a model of a random packing of equal-sized spheres with radius $r$. The pair potential in the hard-sphere model defines hard-core interactions, i.e., configurations where two points closer than some distance $2r$ have weight zero, while all other configurations have weight one. In this article, we consider Gibbs point processes with \emph{repulsive} potentials, that is, pair potentials in which adding a point to a configuration does not increase its weight. The hard-sphere model, for example, does have a repulsive pair potential, however, we do not restrict ourselves to hard-core potentials and allow for soft-core interactions.


The two most fundamental algorithmic tasks considered on Gibbs point processes are to sample from the model and to compute its partition function, which are closely related. Understanding for which potentials and fugacities these two tasks are tractable is an ambitious endeavour. Towards this goal, there has been a plethora of algorithmic results on Gibbs point processes spanning several decades. Notably, the Markov chain Monte Carlo method was developed for sampling an instance of the hard-sphere model with 224 particles \cite{Metropolis}. Since then, a variety of exact and approximate sampling algorithms for such point processes have been proposed in the literature, and their efficiency has been studied extensively both without \cite{garcia2000perfect,haggstrom1999characterization} and with rigorous running time guarantees \cite{moller1989rate,huber2012spatial,guo2021perfect,michelen2022strong,anand2023perfect}. The key objective of rigorous works is to identify a parameter regime for their respective model for which a randomized algorithm for sampling and approximating the partition function exists with running time polynomial in the volume of the region. In addition, deterministic algorithms for approximating the partition function have also appeared in the literature \cite{friedrich2021algorithms,jenssen2022quasipolynomial} with running time quasi-polynomial in the volume of the region.

This recent flurry of algorithmic results on Gibbs point processes can be attributed to progress in understanding the computational properties of discrete spin systems, such as the hard-core model. Within these works, two main approaches can be identified for transferring insights from discrete spin systems to Gibbs point processes. The first one, which includes results such as \cite{helmuth2022correlation,michelen2022strong,anand2023perfect}, considers properties proven to hold in discrete spin systems and translates them to the continuous setting of Gibbs point processes. More precisely, these works consider the notion of strong spatial mixing, which has been strongly connected to algorithmic properties of discrete spin systems~\cite{weitz2006counting,sinclair2017spatial,feng2022perfect}, and translate it to an analogous notion for Gibbs point processes to obtain algorithmic results. 
A common pattern in these works is that once the parameter regime for which strong spatial mixing holds is established, one needs to prove from scratch that this implies efficient algorithms in Gibbs point processes. In addition, the definition of strong spatial mixing for Gibbs points processes assumes that the pair interactions of two particles is always of bounded range, i.e., if two particles are placed at distance greater than some constant $r \in \R_{\ge 0}$, they do not interact with each other.

The second approach, used in~\cite{friedrich2021spectral,friedrich2021algorithms}, is to discretize the model, i.e., reduce it to an instance of the hard-core model and then solve the respective algorithmic problem for the hard-core model. In this case, the algorithmic arsenal developed over the years for the hard-core model is now readily available for the instances resulting from this reduction. The main downside of these approaches is that they only apply to the hard-sphere model, a special case of bounded-range repulsive interactions.

\paragraph{Our contributions.} We introduce a natural approach for reducing repulsive point processes to the hard-core model. Given an instance $(\region,\gppFugacity,\potential)$ of such a point process, we generate a random graph by sampling $n\in\bigTheta{\region^2}$ point-vertices independently and uniformly at random in $\region$ and by connecting each pair of points with an edge drawn with an appropriate probability, which depends on $\potential$. We show that computational properties of the hard-core model on graphs generated by this model and with an appropriately scaled fugacity transfer to the originating Gibbs point process.

We first show that the partition function of the hard-core model on these graphs concentrates around the partition function of the Gibbs point process. Using existing algorithms for the hard-core model as a black box, our result immediately yields randomized approximation algorithms for the partition function of the point process in running time polynomial in the volume of $\region$.
Furthermore, we show that sampling an independent set from the generated hard-core model and returning the positions of its vertices in $\region$ results in an approximate sampler from the distribution of the Gibbs point process. Our approach, in contrast to all previous algorithmic work in the literature, does not require the pair potential $\potential$ of the point process to be of bounded range. This includes various models of interest in statistical physics, such as the (hard-core) Yukawa model~\cite{el2000line,rowlinson1989yukawa}, the Gaussian overlap model~\cite{berne1972gaussian}, the generalized exponential model~\cite{bacher2014explaining}, and the Yoshida--Kamakura model~\cite{yoshida1974liquid}.

Finally, we identify the parameter regime (in terms of $\gppFugacity$ and $\potential$) for which the generated hard-core instance exhibits strong spatial mixing. This parameter regime is identical to the best known regime for which repulsive point processes with bounded-range potentials have strong spatial mixing \cite{michelen2022strong}.

\subsection{Gibbs point processes}
\label{sec:GPP_definition}
We now formally define the notion of a Gibbs point process.
As usual in the theory of point processes, we assume the underlying space to be a complete, separable metric space $(\pointProcessSpace, \dist)$ equipped with the Borel algebra $\Borel = \Borel[\pointProcessSpace]$ and a reference measure $\volumeMeasure$ on $(\pointProcessSpace, \Borel)$ that assigns finite volume to bounded measurable sets.
In this work, we study Gibbs point processes $\GibbsPointProcess{\region}{\gppFugacity}{\potential}$ on bounded measurable regions $\region \subseteq \pointProcessSpace$ that are parameterized by a \emph{fugacity} parameter $\gppFugacity \in \R_{\ge 0}$ and a repulsive (i.e., non-negative), symmetric, measurable \emph{potential function} $\potential\colon \pointProcessSpace^2 \to \R_{\ge 0} \cup \{\infty\}$.
Such a process $\GibbsPointProcess{\region}{\gppFugacity}{\potential}$ is defined by a density with respect to a Poisson point process $\PoissonPointProcess[\gppFugacity]$ with intensity $\gppFugacity$ on $\pointProcessSpace$.
For every finite point configuration $\vectorize{x}=(x_1, \dots, x_\numPoints) \in \region^{\numPoints}$, this density is proportional to $\eulerE^{- \hamiltonian[x_1, \dots, x_\numPoints]}$, where $\hamiltonian$ is the \emph{Hamiltonian}
\[
	\hamiltonian[x_1 \dots, x_\numPoints] = \sum\nolimits_{\{i, j\} \in \binom{[\numPoints]}{2}} \potential[x_i][x_j] .
\]
More precisely, the density can be expressed explicitly as
\[
	\frac{\intD \GibbsPointProcess{\region}{\gppFugacity}{\potential}}{\intD \PoissonPointProcess[\gppFugacity]} (x_1, \dots, x_\numPoints) = \frac{\ind{\forall i \in [\numPoints] \colon x_i \in \region} \cdot \eulerE^{-\hamiltonian[x_1, \dots, x_k]}  \eulerE^{\gppFugacity \volumeMeasure[\region]}}{\gppPartitonFunction{\region}[\gppFugacity][\potential]} ,
\]
where the normalizing constant $\gppPartitonFunction{\region}[\gppFugacity][\potential]$ is the \emph{partition function}
\begin{align*}
	\gppPartitonFunction{\region}[\gppFugacity][\potential]
	&= 1 + \sum_{\numPoints \in \N_{\ge 1}} \frac{\gppFugacity^{\numPoints}}{\numPoints!} \int_{\region^{\numPoints}} \eulerE^{- \hamiltonian[x_1, \dots, x_\numPoints]} \productVolumeMeasure{\numPoints}[\intD \vectorize{x}] .
\end{align*}


\subsection{Randomized reduction to the hard-core model}
\label{subsec:discretization}
Our approach is to reduce the problem of sampling from a repulsive Gibbs point process and approximating its partition function to the analogous problems for a discrete hard-core model, which we briefly introduce.
For an undirected graph $\graph=(\vertices, \edges)$, let $\independentSets{\graph} \subseteq \powerset{\vertices}$ denote the set of independent sets of $\graph$.
For a vertex activity $\fugacity \in \R_{\ge 0}$, the \emph{hard-core model} on $\graph$ is a probability distribution~$\hcGibbsDistribution{\graph}{\fugacity}$ on~$\independentSets{\graph}$ that assigns each independent set $I \in \independentSets{\graph}$ a probability proportional to $\fugacity^{\size{I}}$.
The normalizing constant of this distribution, $\hcPartitionFunction{\graph}[\fugacity] = \sum_{I \in \independentSets{\graph}} \fugacity^{\size{I}}$, is called the \emph{hard-core partition function} on $\graph$.

The goal is to reduce the problem of approximate sampling from $\GibbsPointProcess{\region}{\gppFugacity}{\potential}$ to approximate sampling from $\hcGibbsDistribution{\graph}{\fugacity}$ and, similarly, to reduce the problem of approximating $\gppPartitonFunction{\region}[\gppFugacity][\potential]$ to approximating $\hcPartitionFunction{\graph}[\fugacity]$ for a suitably chosen graph $\graph$ and vertex activity $\fugacity$.
The advantage of this approach is that sampling from a hard-core model as well as approximating hard-core partition functions are well studied problems.
Specifically, a sequence of recent papers \cite{anari2021entropic,anari2021spectral,chen2022rapid,chen2020rapid,chen2021optimal} established approximate sampling from $\hcGibbsDistribution{\graph}{\fugacity}$ in $\bigOTilde{\size{\vertices}}$ running time and randomized approximation of $\hcPartitionFunction{\graph}[\fugacity]$ in $\bigOTilde{\size{\vertices}^2}[\big]$ running time for graphs~$\graph$ with maximum degree $\degree$ for all $\fugacity$ strictly below the \emph{tree threshold} $\criticalFugacity{\degree} \coloneqq \frac{\left(\degree - 1\right)^{\degree - 1}}{\left(\degree - 2\right)^{\degree}}$.

Our reduction is inspired by the discretization schemes in \cite{friedrich2021algorithms,friedrich2021spectral}.
These approaches are limited to the hard-sphere model (and similar models with hard-core interactions) in specific regions of Euclidean space.
In this setting, the utilized graph $\graph$ is essentially a unit-disk graph in $\region$.
This procedure comes with two major disadvantages.
Firstly, the analysis heavily depends geometric arguments and is therefore restricted to regions in Euclidean space that satisfy various requirements.
Secondly, it is not obvious how this technique extends to general repulsive potentials $\potential$ and especially how to account for soft-core interactions.

We circumvent the above problems by investigating hard-core models on a suitably chosen family of random graphs.
For a bounded measurable region $\region \subseteq \pointProcessSpace$, let $\uniformDistributionOn{\region}$ denote the uniform distribution on $\region$.
That is, for $x \sim \uniformDistributionOn{\region}$, we have $\Pr{x \in A} = \frac{\volumeMeasure[A]}{\volumeMeasure[\region]}$ for every measurable $A \subseteq \region$, and $\Pr{x \notin \region} = 0$.
For a repulsive potential $\potential$ and a positive integer $n \in \N_{\ge 1}$, we consider a random-graph model $\canonicalDistribution{n}{\region}{\potential}$ on the set of undirected graphs with vertex set $[n]$, where $\canonicalDistribution{n}{\region}{\potential}$ is defined by the following natural procedure to generate a graph:
\begin{enumerate}
	\item For each $i \in [n]$, draw a uniform random point $x_i \sim \uniformDistributionOn{\region}$ independently.
	\item For all $i, j \in [n]$ with $i \neq j$, connect $i$ and $j$ with an edge with probability $1 - \eulerE^{- \potential[x_i][x_j]}$ independently.
\end{enumerate}
Readers familiar with graphons might notice that this random-graph model can be expressed as a graphon-based random graph ($\edgeProbability$-random graph) for a suitably chosen graphon $\edgeProbability$.
We discuss this perspective later in the introduction.
Moreover, we would like to mentioned that a similar graph construction based on points from a Poisson point process was used in \cite{betsch2021uniqueness} to prove uniqueness of the infinite-volume Gibbs measure for $\gppFugacity < \frac{1}{\generalizedTemperedness{\potential}}$ via percolation.

The key property of the graphs from $\canonicalDistribution{n}{\region}{\potential}$ is that, for a suitably chosen vertex activity $\fugacity$, their hard-core partition functions concentrate around $\gppPartitonFunction{\region}[\gppFugacity][\potential]$.
This property is at the core of our reduction.

\begin{restatable}{theorem}{gppConcentration}
	\label{thm:gpp_concentration}
	Let $(\pointProcessSpace, \dist)$ be a complete, separable metric space, let $\Borel = \Borel[\pointProcessSpace]$ be the Borel algebra, and let $\volumeMeasure$ be a locally finite reference measure on $(\pointProcessSpace, \Borel)$.
	Let $\region \subseteq \pointProcessSpace$ be bounded and measurable, let $\gppFugacity \in \R_{\ge 0}$, and let $\potential\colon \pointProcessSpace^2 \to \R_{\ge 0} \cup \{\infty\}$ be a symmetric repulsive potential.
	For all $\error \in (0, 1]$, $\errorProb \in (0, 1]$ and $n \ge 4 \error^{-2} \errorProb^{-1} \max\left\{\eulerE^6 \gppFugacity^2 \volumeMeasure[\region]^2, \ln\left(4 \error^{-1}\right)^2\right\}$, it holds that, for $\graph \sim \canonicalDistribution{n}{\region}{\potential}$,
	\[
		\Pr{\absolute{\hcPartitionFunction{\graph}[\frac{\gppFugacity \volumeMeasure[\region]}{n}] - \gppPartitonFunction{\region}[\gppFugacity][\potential]} \ge \error  \gppPartitonFunction{\region}[\gppFugacity][\potential]} \le \errorProb .
		\qedhere
	\]
\end{restatable}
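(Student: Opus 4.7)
The plan is to bound separately the bias $\absolute{\E{\hcPartitionFunction{\graph}[\fugacity]} - \gppPartitonFunction{\region}[\gppFugacity][\potential]}$ and the variance $\Var{\hcPartitionFunction{\graph}[\fugacity]}$, and then conclude via Chebyshev's inequality. Throughout, set $\lambda \coloneqq \gppFugacity \volumeMeasure[\region]$ (so $\fugacity = \lambda/n$) and $M_k \coloneqq \frac{\gppFugacity^k}{k!} \int_{\region^k} \eulerE^{-\hamiltonian[x_1,\dots,x_k]} \productVolumeMeasure{k}[\intD \vectorize{x}]$, so that $\gppPartitonFunction{\region}[\gppFugacity][\potential] = \sum_{k \ge 0} M_k$.

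\textbf{Mean and bias.} Write $\hcPartitionFunction{\graph}[\fugacity] = \sum_{S \subseteq [n]} \fugacity^{\size{S}} \ind{A_S}$, where $A_S$ is the event that $S$ is independent in $\graph$. Using the independence of $(x_i)_i$ and of the edge-indicators given positions, a direct computation yields $\E{\hcPartitionFunction{\graph}[\fugacity]} = \sum_{k=0}^n \frac{n!}{(n-k)!\,n^k}\,M_k$, so the bias splits into $\sum_{k=2}^n \bigl(1 - \prod_{j=0}^{k-1}(1-j/n)\bigr)\,M_k$ and the tail $\sum_{k>n} M_k$. For the main part, $1 - \prod_{j=0}^{k-1}(1-j/n) \le \binom{k}{2}/n$, while the repulsiveness inequality $k\,M_k \le \lambda\,M_{k-1}$ (obtained by integrating out one coordinate of $M_k$ and using $\eulerE^{-\sum_j \potential[\cdot][x_j]} \le 1$) iterated gives $\sum_k k(k-1)\,M_k \le \lambda^2\,\gppPartitonFunction{\region}[\gppFugacity][\potential]$; the tail is controlled using $M_k \le \lambda^k/k!$ and a Poisson-tail estimate (which also produces the $\ln(4/\error)^2$ term in the hypothesis when $\lambda$ is small). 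Altogether, $\absolute{\E{\hcPartitionFunction{\graph}[\fugacity]} - \gppPartitonFunction{\region}[\gppFugacity][\potential]} \le \bigO{\lambda^2/n}\, \gppPartitonFunction{\region}[\gppFugacity][\potential]$, which is at most $(\error/2)\,\gppPartitonFunction{\region}[\gppFugacity][\potential]$ under the stated hypothesis on $n$.

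\textbf{Variance via Efron--Stein.} I would represent $\graph$ using the independent variables $(x_i)_{i \in [n]}$ and $(U_{ij})_{i<j}$ uniform in $[0,1]$, with edge $\{i,j\}$ placed iff $U_{ij} \le 1 - \eulerE^{-\potential[x_i][x_j]}$. The key observation is the multiplicative stability of the partition function coming from the hard-core recursion $\hcPartitionFunction{\graph}[\fugacity] = \hcPartitionFunction{\graph \setminus \{v\}}[\fugacity] + \fugacity\, \hcPartitionFunction{\graph \setminus N_\graph[v]}[\fugacity]$, which sandwiches $\hcPartitionFunction{\graph \setminus \{v\}}[\fugacity] \le \hcPartitionFunction{\graph}[\fugacity] \le (1+\fugacity)\,\hcPartitionFunction{\graph \setminus \{v\}}[\fugacity]$. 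Hence resampling any $x_i$ (which only alters edges incident to $i$) changes $\hcPartitionFunction{\graph}[\fugacity]$ by at most $\fugacity\,\hcPartitionFunction{\graph}[\fugacity]$, and toggling a single $U_{ij}$ changes it by at most $\fugacity^2\,\hcPartitionFunction{\graph}[\fugacity]$ via the analogous two-vertex identity $\absolute{\hcPartitionFunction{\graph \cup \{i,j\}}[\fugacity] - \hcPartitionFunction{\graph}[\fugacity]} = \fugacity^2\, \hcPartitionFunction{\graph \setminus N_\graph[\{i,j\}]}[\fugacity]$. Summing these Efron--Stein contributions,
\[
\Var{\hcPartitionFunction{\graph}[\fugacity]} \le \Bigl(n\,\fugacity^2 + \binom{n}{2}\,\fugacity^4\Bigr)\, \E{\hcPartitionFunction{\graph}[\fugacity]^2} \le \bigO{\lambda^2/n}\, \E{\hcPartitionFunction{\graph}[\fugacity]^2},
\]
and rearranging $\E{\hcPartitionFunction{\graph}[\fugacity]^2} = \Var{\hcPartitionFunction{\graph}[\fugacity]} + \E{\hcPartitionFunction{\graph}[\fugacity]}^2$ together with the bias bound yields $\Var{\hcPartitionFunction{\graph}[\fugacity]} \le \bigO{\lambda^2/n}\, \gppPartitonFunction{\region}[\gppFugacity][\potential]^2$.

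\textbf{Conclusion.} Chebyshev's inequality then gives
\[
\Pr{\absolute{\hcPartitionFunction{\graph}[\fugacity] - \E{\hcPartitionFunction{\graph}[\fugacity]}} \ge (\error/2)\,\gppPartitonFunction{\region}[\gppFugacity][\potential]} \le \frac{4\,\Var{\hcPartitionFunction{\graph}[\fugacity]}}{\error^2\, \gppPartitonFunction{\region}[\gppFugacity][\potential]^2} \le \bigO{\lambda^2/(n\,\error^2)},
\]
which is at most $\errorProb$ under the stated hypothesis on $n$; combining with the bias bound via the triangle inequality completes the proof. The principal obstacle is the variance bound: a direct second-moment expansion of $\sum_{S,T} \fugacity^{\size{S}+\size{T}} \Pr{A_S \cap A_T}$ naturally incurs a factor of $\eulerE^\lambda$ through crude estimates like $\sum_T \fugacity^{\size{T}} \le (1+\fugacity)^n$, and the multiplicative Efron--Stein argument above is what keeps the relative variance at order $\lambda^2/n$ rather than $\lambda^2 \eulerE^\lambda / n$, matching the polynomial dependence on $\lambda$ in the theorem's hypothesis.
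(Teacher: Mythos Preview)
Your proposal is correct and follows essentially the same route as the paper: the paper likewise splits into a bias estimate (\Cref{lemma:bound_hc_expectation}, comparing the expansion of $\E{\hcPartitionFunction{\graph}[\fugacity]}$ term-by-term with $\gppPartitonFunction{\region}[\gppFugacity][\potential]$) and a concentration step via the Efron--Stein inequality applied to the representation by $(x_i)_i$ and $(U_{ij})_{i<j}$, using exactly your multiplicative stability bounds $\fugacity$ (for a vertex) and $\fugacity^2$ (for an edge) before rearranging $\Var{Z}\le C\,\E{Z^2}$ and applying Chebyshev (\Cref{thm:selfbounded_concentration_simplified}, \Cref{lemma:remove_edge}, \Cref{lemma:add_vertex}, \Cref{thm:concentration_partition_function}). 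The only cosmetic difference is that the paper controls the bias by truncating the series at level $m\approx\max\{\eulerE^3\lambda,\ln(2/\error)\}$ and using $(1-m/n)^m\ge 1-m^2/n$, whereas you use the cleaner pointwise bound $1-\prod_{j<k}(1-j/n)\le\binom{k}{2}/n$ together with $k(k-1)M_k\le\lambda^2 M_{k-2}$; both lead to the same $O(\lambda^2/n)$ relative bias.
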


Informally, \Cref{thm:gpp_concentration} says that, for $n \in \bigTheta{\volumeMeasure[\region]^2}[\big]$ and $\graph \sim \canonicalDistribution{n}{\region}{\potential}$, the hard-core partition function $\hcPartitionFunction{\graph}[\fugacity[n]]$ with $\fugacity[n] = \frac{\gppFugacity \volumeMeasure[\region]}{n}$ is strongly concentrated around the partition function of the repulsive Gibbs point process $\gppPartitonFunction{\region}[\gppFugacity][\potential]$.
In \cite[Proposition $5.8$]{friedrich2021algorithms}, it was argued that the partition function of an unrestricted Poisson point process in a bounded measurable region $\region$ of Euclidean space cannot be approximated by the hard-core partition function $\hcPartitionFunction{\graph}[\fugacity[n]]$ for any graph $\graph$ on $n$ vertices if $n \in \smallO{\volumeMeasure[\region]^2}[\big]$.
As the unrestricted Poisson point process is a special case of a repulsive Gibbs point process with constant zero potential, this implies that our concentration result in \Cref{thm:gpp_concentration} is tight in terms of its asymptotic dependency on the volume $\volumeMeasure[\region]$.

A remarkable aspect of \Cref{thm:gpp_concentration} is the generality in which it holds with respect to the underlying space $\pointProcessSpace$.
In particular, we do not need any additional assumptions regarding its geometry.
We achieve this by deriving \Cref{thm:gpp_concentration} from a corollary of the Efron--Stein inequality \cite{efron1981jackknife}.
This corollary gives a convenient-to-use way for proving concentration of functions of independent random inputs, given that changing one input of the function only leads to small relative changes of its output.
Using this corollary, we prove \Cref{thm:gpp_concentration} by bounding changes in the hard-core partition function of a graph that are caused by small alternations of the graph structure.
We proceed by discussing this approach in detail.

\subsubsection*{Proving concentration}
We prove \Cref{thm:gpp_concentration} in two steps.
First, we show that, for $\graph \sim \canonicalDistribution{n}{\region}{\potential}$ and $\fugacity[n] = \frac{\gppFugacity \volumeMeasure[\region]}{n}$, the expected hard-core partition function $\E{\hcPartitionFunction{\graph}[\fugacity[n]]}$ converges rapidly to the partition function of the point process $\gppPartitonFunction{\region}[\gppFugacity][\potential]$ as~$n$ grows.
We prove this by fixing $n \in \bigTheta{\volumeMeasure[\region]^2}[\big]$ sufficiently large and rewriting $\E{\hcPartitionFunction{\graph}[\fugacity[n]]}$ as a sum of expectations.
We then relate each of the first $m < n$ terms in that sum with the corresponding term in $\gppPartitonFunction{\region}[\gppFugacity][\potential]$, which requires comparing the expected number of independent sets of size $k \le m$ in a graph $\graph \sim \canonicalDistribution{n}{\region}{\potential}$ with $\frac{n^k}{k! \volumeMeasure[\region]^k} \int_{\region^k} \eulerE^{- \hamiltonian[\vectorize{x}]} \intD \vectorize{x}$.
Finally, we argue that all terms of order higher than $m$ can be discarded.

Once convergence of the expectation is established, it remains to prove that the distribution of the partition functions $\hcPartitionFunction{\graph}[\fugacity[n]]$ for $\graph \sim \canonicalDistribution{n}{\region}{\potential}$ concentrates around this expectation.
To prove the latter, we derive \Cref{thm:selfbounded_concentration_simplified} from the Efron--Stein inequality \cite{efron1981jackknife}. This corollary roughly states that the output of a function $f$ on a product of probability spaces concentrates around its expectation if $f$ exhibits sufficiently small \emph{relative} changes when any component of its input is changed. Similar methods for proving concentration usually require the output of the function $f$ to exhibit small \emph{absolute} changes (i.e., $f$ to be Lipschitz, see \cite{mcdiarmid1989method,mcdiarmid1998concentration}), which does not hold in our setting.

To apply this concentration bound, we need to express the partition function of a graph drawn from~$\canonicalDistribution{n}{\region}{\potential}$ as a function of independent random inputs.
To this end, we model a random graph $\graph \sim \canonicalDistribution{n}{\region}{\potential}$ based on $n$ points $\vectorize{x} = (x_i)_{i \in [n]}$, each independently drawn from~$\uniformDistributionOn{\region}$, and $\frac{n(n-1)}{2}$ independent random variables $\vectorize{y}=(y_{i, j})_{1 \le i < j \le n}$, each uniformly distributed on the real interval $[0, 1]$.
Given the random vectors $\vectorize{x}$ and $\vectorize{y}$, we construct a graph by connecting vertices $i < j$ by an edge if and only if $y_{i, j} \le 1 - \eulerE^{-\potential[x_i][x_j]}$.
Note that the resulting graph is distributed according to $\canonicalDistribution{n}{\region}{\potential}$.
Thus, we express the hard-core partition function on the random-graph model $\canonicalDistribution{n}{\region}{\potential}$ as a function $f(\vectorize{x}, \vectorize{y})$ for $\vectorize{x}$ and $\vectorize{y}$ as described above.
The effect of changing a component of $\vectorize{y}$ is bounded by the relative change of the hard-core partition function when adding or removing an edge.
On the other hand, the effect of changing a component of $\vectorize{x}$, say $x_i$, is bounded by considering the change of the hard-core partition function when altering the neighborhood of a single vertex~$i$.
Bounding both effects and applying \Cref{thm:selfbounded_concentration_simplified} yields the desired concentration result (\Cref{thm:gpp_concentration}).

In fact, a similar argument as above applies to a broad class of antiferromagnetic spin systems on graphon-based random graphs that contains our application as a special case.
We discuss his more general setting after demonstrating the sampling and approximation results for Gibbs point processes that can be obtained from \Cref{thm:gpp_concentration}.

\subsection{Algorithmic implications}
\label{subsec:approximation_sampling}
We proceed by showcasing some algorithmic results for repulsive Gibbs point processes that follow from \Cref{thm:gpp_concentration}.
More specifically, we focus on $\error$-approximate sampling from the point process and obtaining a randomized $\error$-approximation of the partition function.
Formally, the problem of $\error$\emph{-approximate sampling} from $\GibbsPointProcess{\region}{\gppFugacity}{\potential}$ is defined as producing a random point configuration with a distribution that has a total variation distance of at most~$\error$ to~$\GibbsPointProcess{\region}{\gppFugacity}{\potential}$.
Analogously, the problem of $\error$\emph{-approximating} $\gppPartitonFunction{\region}[\gppFugacity][\potential]$ is defined as computing some value $x \in \R$ such that $(1-\error)\gppPartitonFunction{\region}[\gppFugacity][\potential] \le x \le (1+\error)\gppPartitonFunction{\region}[\gppFugacity][\potential]$.
Moreover, an algorithm is called a \emph{randomized} $\error$\emph{-approximation} if it outputs an $\error$-approximation of~$\gppPartitonFunction{\region}[\gppFugacity][\potential]$ with probability at least $\frac{2}{3}$.
The choice of the constant $\frac{2}{3}$ is rather arbitrary here, as the error probability can be made smaller than every $\errorProb \in \R_{>0}$ by taking the median of $\bigO{\log\left(\errorProb^{-1}\right)}$ independent runs, as long as the error probability of each run is some constant smaller than~$\frac{1}{2}$.
Furthermore, we consider an $\error$-approximate sampler and a (randomized) $\error$-approximation algorithm as efficient if their running time is polynomial in the volume $\volumeMeasure[\region]$ and in $\error^{-1}$.

Recent rigorous results establish bounds on the fugacity regime of different models for which these algorithmic problems can be solved efficiently.
Often, these bounds are stated in terms of the \emph{temperedness constant} $\generalizedTemperedness{\potential}$, which is defined as
\[
	\generalizedTemperedness{\potential} = \sup_{x_1 \in \pointProcessSpace} \int_{\pointProcessSpace} \absolute{1 - \eulerE^{-\potential[x_1][x_2]}} \volumeMeasure[\intD x_2].
\]
This value can be seen as measure for the strength of interactions between points.

Given our concentration result (\Cref{thm:gpp_concentration}), a straightforward idea for approximating~$\gppPartitonFunction{\region}[\gppFugacity][\potential]$ is to sample a graph $\graph \sim \canonicalDistribution{n}{\region}{\potential}$ and try to approximate its hard-core partition function.
A refined version of this procedure leads to the following theorem.

\begin{restatable}{theorem}{approximateGpp}
	\label{thm:approximate_gpp}
	Let $(\pointProcessSpace, \dist)$ be a complete, separable metric space, let $\Borel = \Borel[\pointProcessSpace]$ be the Borel algebra, and let~$\volumeMeasure$ be a locally finite reference measure on $(\pointProcessSpace, \Borel)$.
	Let $\region \subseteq \pointProcessSpace$ be bounded and measurable, let $\gppFugacity \in \R_{\ge 0}$, and let $\potential\colon \pointProcessSpace^2 \to \R_{\ge 0} \cup \{\infty\}$ be a symmetric repulsive potential.
	Assume there is a sampler for $\canonicalDistribution{n}{\region}{\potential}$ with running time $\sampleGraphTime{\region}{\potential}{n}$.

	If $\gppFugacity < \frac{\eulerE}{\generalizedTemperedness{\potential}}$,
	then, for all $\error \in (0, 1]$, there is a randomized $\error$-approximation algorithm for $\gppPartitonFunction{\region}[\gppFugacity][\potential]$ with running time in $\bigOTilde{\volumeMeasure[\region]^{4} \error^{-6}}[\big] + \sampleGraphTime{\region}{\potential}{\bigOTilde{\volumeMeasure[\region]^2 \error^{-2}}[\big]}$.
\end{restatable}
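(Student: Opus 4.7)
The plan is to sample $\graph \sim \canonicalDistribution{n}{\region}{\potential}$ for $n \in \bigOTilde{\volumeMeasure[\region]^2 \error^{-2}}$ and apply a black-box randomized approximation algorithm for the hard-core partition function $\hcPartitionFunction{\graph}[\fugacity[n]]$ with vertex activity $\fugacity[n] = \gppFugacity \volumeMeasure[\region]/n$. By \Cref{thm:gpp_concentration}, choosing the hidden constant in $n$ large enough ensures that, with probability at least $11/12$, $\hcPartitionFunction{\graph}[\fugacity[n]]$ lies within a multiplicative factor $(1 \pm \error/3)$ of $\gppPartitonFunction{\region}[\gppFugacity][\potential]$. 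Hence any $(1 \pm \error/3)$-approximation of $\hcPartitionFunction{\graph}[\fugacity[n]]$ yields the desired $\error$-approximation of $\gppPartitonFunction{\region}[\gppFugacity][\potential]$ after a routine adjustment of constants.

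The main technical step is to certify that, with high probability, the maximum degree $\Delta(\graph)$ of $\graph$ satisfies $\fugacity[n] < \criticalFugacity{\Delta(\graph)}$, placing the instance in the efficient regime of the hard-core algorithms. Since $\criticalFugacity{d} \ge \eulerE/d$ for all $d \ge 2$, it suffices to guarantee $\Delta(\graph) \cdot \fugacity[n] < \eulerE$. Conditional on $x_i \sim \uniformDistributionOn{\region}$, the degree of vertex $i$ is a sum of $n-1$ independent Bernoullis with parameter $p(x_i) = \volumeMeasure[\region]^{-1} \int_{\region}(1 - \eulerE^{-\potential[x_i][y]}) \volumeMeasure[\intD y] \le \generalizedTemperedness{\potential}/\volumeMeasure[\region]$, so its mean is at most $\mu := (n-1)\generalizedTemperedness{\potential}/\volumeMeasure[\region]$. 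Writing $\gppFugacity \generalizedTemperedness{\potential} = (1-\zeta)\eulerE$ with $\zeta \in (0,1]$ (the slack afforded by the hypothesis), I would split into two regimes. When $\mu \ge c\zeta^{-2}\ln n$ for a suitable constant $c$, a multiplicative Chernoff bound and a union bound over vertices give $\Delta(\graph) \le (1+\zeta/4)\mu$ with probability at least $11/12$, whence $\Delta(\graph)\cdot\fugacity[n] \le (1+\zeta/4)(1-\zeta)\eulerE < \eulerE$. When $\mu < c\zeta^{-2}\ln n$, the Chernoff tail $\Pr{\degree[i] \ge k} \le (\eulerE \mu/k)^k$ together with a union bound yields $\Delta(\graph) = \bigO{\ln n/\ln\ln n}$ with probability at least $11/12$, and then $\Delta(\graph)\cdot\fugacity[n] = \bigOTilde{\gppFugacity \volumeMeasure[\region]/n} = \smallO{1}$ since $n = \bigOmega{\volumeMeasure[\region]^2}$, again delivering $\Delta(\graph)\cdot\fugacity[n] < \eulerE$.

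Putting everything together, the algorithm samples $\graph \sim \canonicalDistribution{n}{\region}{\potential}$ in $\sampleGraphTime{\region}{\potential}{n}$ time, deterministically verifies the maximum-degree bound on $\graph$, and invokes the randomized $\error/3$-approximation of $\hcPartitionFunction{\graph}[\fugacity[n]]$ amplified to success probability $11/12$ by taking the median of a constant number of runs; this step runs in $\bigOTilde{n^2\error^{-2}}$ time. A union bound over the three failure events (concentration, degree control, hard-core approximation), each of probability at most $1/12$, gives overall success probability at least $3/4 > 2/3$, and substituting $n \in \bigOTilde{\volumeMeasure[\region]^2 \error^{-2}}$ yields the claimed running time $\bigOTilde{\volumeMeasure[\region]^4 \error^{-6}}[\big] + \sampleGraphTime{\region}{\potential}{\bigOTilde{\volumeMeasure[\region]^2 \error^{-2}}[\big]}$. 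I expect the main obstacle to be the max-degree analysis: the slack $\zeta$ in the hypothesis $\gppFugacity \generalizedTemperedness{\potential} < \eulerE$ must be threaded carefully through the Chernoff argument so that $\fugacity[n] < \criticalFugacity{\Delta(\graph)}$ holds with a positive constant margin in both regimes of $\mu$, and this is precisely where the strict inequality in the hypothesis is used.
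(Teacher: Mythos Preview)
Your proposal is correct and follows essentially the same route as the paper: sample $\graph \sim \canonicalDistribution{n}{\region}{\potential}$ with $n \in \bigOTilde{\volumeMeasure[\region]^2\error^{-2}}$, invoke \Cref{thm:gpp_concentration} for the concentration step, establish a high-probability bound on $\Delta(\graph)$ via Chernoff so that $\fugacity[n] < \criticalFugacity{\Delta(\graph)}$, then apply the known $\bigOTilde{n^2\error^{-2}}$ hard-core FPRAS and union-bound the three failure events. The only cosmetic difference is that the paper avoids your two-regime case split by folding a $\generalizedTemperedness{\potential}^{-1}$-dependent lower bound on $n$ into the constant of $N$, which forces $\mu \gtrsim \ln n$ and lets a single multiplicative Chernoff bound (their \Cref{lemma:degree_bound}) handle the degree uniformly; your small-$\mu$ branch is harmless but unnecessary once the hidden constants are allowed to depend on $\generalizedTemperedness{\potential}$ and $\gppFugacity$.
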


With respect to sampling from $\GibbsPointProcess{\region}{\gppFugacity}{\potential}$, it is less obvious how \Cref{thm:gpp_concentration} can be utilized.
However, under mild assumptions, we obtain an approximate sampler, based on \Cref{thm:gpp_concentration}, by the following procedure: Sample an independent set $I \in \independentSets{\graph}$ (approximately) from~$\hcGibbsDistribution{\graph}{\fugacity[n]}$ and output the point configuration $\{x_i\}_{i \in I}$.
Given that $\GibbsPointProcess{\region}{\gppFugacity}{\potential}$ is simple, which means that drawing a point configuration that contains the same point multiple times has probability zero, a refined version of the approach sketched above leads to the following result.

\begin{theorem}
	\label{thm:sampling_simplified}
	Let $(\pointProcessSpace, \dist)$ be a complete, separable metric space, let $\Borel = \Borel[\pointProcessSpace]$ be the Borel algebra, and let $\volumeMeasure$ be a locally finite reference measure on $(\pointProcessSpace, \Borel)$.
	Let $\region \subseteq \pointProcessSpace$ be bounded and measurable, let $\gppFugacity \in \R_{\ge 0}$, and let $\potential\colon \pointProcessSpace^2 \to \R_{\ge 0} \cup \{\infty\}$ be a symmetric repulsive potential.
	Assume we can sample from the uniform distribution $\uniformDistributionOn{\region}$ in time $\samplePointTime{\region}$ and, for every $x, y \in \region$, we can evaluate $\potential[x][y]$ in time $\evaluatePotentialTime{\potential}$.

	If the Gibbs point process $\GibbsPointProcess{\region}{\gppFugacity}{\potential}$ is simple and $\gppFugacity < \frac{\eulerE}{\generalizedTemperedness{\potential}}$,
	then, for every $\samplingError \in \R_{>0}$, there exists an $\samplingError$-approximate sampling algorithm for $\GibbsPointProcess{\region}{\gppFugacity}{\potential}$ with running time in $\bigOTilde{\volumeMeasure[\region]^2 \samplingError^{-4} + \volumeMeasure[\region]^2 \samplingError^{-3} \samplePointTime{\region} + \volumeMeasure[\region]^4 \samplingError^{-6} \evaluatePotentialTime{\potential}}[\big]$.
\end{theorem}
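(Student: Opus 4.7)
The plan is to follow the three-step procedure sketched immediately above the theorem statement: (1) sample $n$ points $x_1, \ldots, x_n$ i.i.d.\ from $\uniformDistributionOn{\region}$ and construct $\graph \sim \canonicalDistribution{n}{\region}{\potential}$ from them; (2) sample an independent set $I \subseteq [n]$ approximately from $\hcGibbsDistribution{\graph}{\fugacity[n]}$ with $\fugacity[n] = \gppFugacity \volumeMeasure[\region]/n$ using a known hard-core sampler; and (3) output the configuration $\{x_i : i \in I\}$. I would pick $n = \bigOTilde{\volumeMeasure[\region]^2 \samplingError^{-2}}$ so that \Cref{thm:gpp_concentration} applies with both relative error and failure probability $\bigO{\samplingError}$.

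To control the running time of step~(2), I first show that $\graph$ lies in the tractable regime for the hard-core model with high probability. The expected degree of each vertex of $\graph$ is at most $n \generalizedTemperedness{\potential}/\volumeMeasure[\region]$ by definition of $\generalizedTemperedness{\potential}$, and a Chernoff bound followed by a union bound over vertices gives $\degree(\graph) \le (1 + \smallO{1}) n \generalizedTemperedness{\potential}/\volumeMeasure[\region]$ with high probability. Since $\criticalFugacity{d} = (1 + \smallO{1}) \eulerE / d$ as $d \to \infty$, the assumption $\gppFugacity < \eulerE/\generalizedTemperedness{\potential}$ translates to $\fugacity[n] \le (1 - \bigOmega{1}) \criticalFugacity{\degree(\graph)}$, placing us in the regime where recent hard-core samplers approximately sample from $\hcGibbsDistribution{\graph}{\fugacity[n]}$ in $\bigOTilde{n \samplingError^{-\bigO{1}}}$ time with total variation error $\bigO{\samplingError}$.

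The main obstacle, and the technical heart of the proof, is bounding the total variation distance between the law of the output configuration and $\GibbsPointProcess{\region}{\gppFugacity}{\potential}$ by $\bigO{\samplingError}$. The plan is to write the density of the output with respect to the Poisson point process $\PoissonPointProcess[\gppFugacity]$ explicitly and compare it to the density of $\GibbsPointProcess{\region}{\gppFugacity}{\potential}$ given in \Cref{sec:GPP_definition}. For a $k$-tuple of distinct points $(y_1, \ldots, y_k)$, the output density factors as a combinatorial weight for selecting which $k$ of the $n$ sample positions take the values $y_j$, an activity contribution $\fugacity[n]^k$, an interaction factor $\eulerE^{-\hamiltonian[y_1, \ldots, y_k]}$ arising from the requirement that the corresponding vertices form an independent set in $\graph$, and a ratio of hard-core partition functions taken in expectation over the $n - k$ remaining random positions. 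Applying \Cref{thm:gpp_concentration} (and a variant of it for the graph conditioned on $k$ fixed positions, which admits the same Efron--Stein analysis) shows that this ratio concentrates around the analogous ratio of point-process partition functions, while the combinatorial prefactor $\frac{n!\,\fugacity[n]^k}{(n-k)!\,\volumeMeasure[\region]^k}$ equals $\gppFugacity^k$ up to an error $\bigO{k^2/n}$ that is negligible uniformly for $k \ll \sqrt{n}$. The simpleness hypothesis is used here to ensure that coincident $x_i$'s constitute a null event of the target measure and hence do not contribute.

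To finalize, I combine the errors from concentration, the approximate hard-core sampler, and the rare events that $\graph$ has unusually high maximum degree or yields an independent set with $\omega(\sqrt{n})$ vertices. Each contributes $\bigO{\samplingError}$ to the total variation distance. Accounting for generating $\graph$ in $n\,\samplePointTime{\region} + \binom{n}{2}\,\evaluatePotentialTime{\potential}$ time, running the hard-core sampler in $\bigOTilde{n \samplingError^{-\bigO{1}}}$ time, and using median-of-means to boost success probability yields the stated running time $\bigOTilde{\volumeMeasure[\region]^2 \samplingError^{-4} + \volumeMeasure[\region]^2 \samplingError^{-3} \samplePointTime{\region} + \volumeMeasure[\region]^4 \samplingError^{-6} \evaluatePotentialTime{\potential}}$.
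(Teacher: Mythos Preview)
Your overall strategy matches the paper's: it also defines exactly the three-step sampler you describe, analyzes an idealized variant with exact hard-core sampling, derives the density of its output with respect to $\PoissonPointProcess[\gppFugacity]$, compares it pointwise to the Gibbs density using the concentration result applied both to the full $n$-vertex graph and to the $(n-k)$-vertex residual graph, and handles large configurations via stochastic domination by a Poisson (for the target) and a binomial (for the output). The paper formalizes the density identification through void probabilities and the R\'enyi--M\"onch theorem, which is where the simpleness hypothesis enters; your informal ``write the density explicitly'' is the same computation in disguise.

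Two slips to fix. First, your choice $n = \bigOTilde{\volumeMeasure[\region]^2 \samplingError^{-2}}$ is too small: \Cref{thm:gpp_concentration} requires $n \ge 4\error^{-2}\errorProb^{-1}\cdot(\ldots)$, so to get both relative error and failure probability of order $\samplingError$ you need $n = \bigOTilde{\volumeMeasure[\region]^2 \samplingError^{-3}}$. This is exactly what yields the $\samplingError^{-3}$ and $\samplingError^{-6}$ exponents in the stated running time, which your smaller $n$ would not reproduce. Second, ``median-of-means to boost success probability'' does not make sense for a sampler; there is no scalar to take a median of. The paper instead absorbs all failure events (bad degree, bad concentration, approximate hard-core sampler error) directly into the total-variation budget, each contributing $O(\samplingError)$, with no amplification step.
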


There are two main differences in the assumptions of the approximation result (\Cref{thm:approximate_gpp}) and the sampling result (\Cref{thm:sampling_simplified}).
First, the sampling result requires the point process to be simple.
The reason is that, in order to bound the total variation distance between the output of our sampler and $\GibbsPointProcess{\region}{\gppFugacity}{\potential}$, we derive a density of that output with respect to a Poisson point process.
This task is greatly simplified by assuming that $\GibbsPointProcess{\region}{\gppFugacity}{\potential}$ is simple, as it allows for an easier characterization of the output distribution of our sampling, based on a theorem by R\'{e}nyi--Mönch (see \cite[Theorem 9.2.XII]{daley2008introduction}).
However, assuming the point process to be simple is only a minor restriction, as it is satisfied for most applications of point processes.
For example, it is trivially satisfied if the reference volume measure $\volumeMeasure$ is not-atomic (i.e., assigns volume~$0$ to single points).
This includes the most frequently studied case of Gibbs point processes in Euclidean space but also a variety of other spaces, such as Gibbs point processes in hyperbolic spaces or in Riemannian manifolds.

Second, our sampling result requires efficient sampling from the uniform distribution~$\uniformDistributionOn{\region}$ and an efficient way to compute the potential $\potential$.
In contrast to that, \Cref{thm:approximate_gpp} only assumes an efficient way to sample a graph from $\canonicalDistribution{n}{\region}{\potential}$.
We state the theorems in this way to emphasize that, for approximating $\gppPartitonFunction{\region}[\gppFugacity][\potential]$, we only need to sample from the random-graph model $\canonicalDistribution{n}{\region}{\potential}$.
Our sampling procedure additionally requires the position $x_i \in \region$ for each vertex $i \in [n]$ along with the graph to output the point configuration, associated to a random independent set drawn from the hard-core model.
%

Last, we briefly discuss the origin of the fugacity bound $\frac{\eulerE}{\generalizedTemperedness{\potential}}$ in our algorithmic results.
Write $\fugacity[n] = \frac{\gppFugacity \volumeMeasure[\region]}{n}$ for every $n \in \N_{\ge 1}$.
Note that our algorithms rely on either an efficient approximation of the hard-core partition function $\hcPartitionFunction{\graph}[\fugacity[n]]$ or an efficient approximate sampler for an independent set from $\hcGibbsDistribution{\graph}{\fugacity[n]}$ for a random graph $\graph \sim \canonicalDistribution{n}{\region}{\potential}$.
As discussed earlier, such computational results are known for general graphs of maximum degree $\degree$ as long as the parameter $\fugacity$ is below the corresponding tree threshold $\criticalFugacity{\degree}$.
Observe that $\criticalFugacity{\degree} \approx \frac{\eulerE}{\degree}$ for large $\degree$.
Thus, roughly speaking, we can perform the necessary computational tasks as long as $\fugacity[n] = \frac{\gppFugacity \volumeMeasure[\region]}{n} < \frac{\eulerE}{\degree[\graph]}$, where $\degree[\graph]$ is the maximum degree of the graph $\graph$ that was drawn from $\canonicalDistribution{n}{\region}{\potential}$.
Equivalently, this is $\gppFugacity < \frac{\eulerE n}{\degree[\graph] \volumeMeasure[\region]}$.
The main observation is now that, for $\graph \sim \canonicalDistribution{n}{\region}{\potential}$, the expected degree of an arbitrary vertex of $\graph$ is upper-bounded by $\frac{n \generalizedTemperedness{\potential}}{\volumeMeasure[\region]}$.
By proving that, with sufficiently high probability, the maximum degree $\degree[\graph]$ is not much larger than this value, we obtain the desired bound of $\frac{\eulerE}{\generalizedTemperedness{\potential}}$.
%

\subsection{Strong spatial mixing, connective constants, and improved bounds}

A property closely tied to the existence of efficient algorithms for the hard-core model is strong spatial mixing. Strong spatial mixing describes a particular way how dependencies between distant vertices in a graph $\graph$ decay. The definition is easiest stated in terms of the occupation probability of a vertex $v \in \vertices_{\graph}$ (i.e., the probability that $v$ is in the independent set, drawn from a hard-core model on $\graph$) conditioned on certain vertices being occupied or unoccupied. Given two such conditions that differ at some vertex set $S \subset \vertices_{\graph}$, strong spatial mixing requires that the resulting difference in the occupation probability of every vertex $v$ is exponentially small in the graph distance between $v$ and $S$\footnote{Often it is more convenient to work with the occupation ratio, which is the occupation probability divided by the probability of the vertex to be unoccupied. However, the resulting strong spatial mixing definitions are equivalent.} (see \Cref{def:ssm} and remark \Cref{remark:ssm} for more details).

In a seminal paper, Weitz \cite{weitz2006counting} proved that, for a graph $\graph$ of maximum degree $\degree_\graph$ and vertex $v \in \vertices_{\graph}$, we can construct a tree with root $v$ such that the occupation probability of $v$ in the tree is the same as in $\graph$.
This continues to hold when conditioning on the state of other vertices by translating the condition to the tree appropriately. It follows that if this tree exhibits strong spatial mixing with respect to the root, then this property also holds for $\graph$. Moreover, in this case, a recursive computation on this tree can be used to approximate the occupation probability of $v$, which results in a sampling and approximation algorithm with running time $\size{\vertices_\graph}^{\bigO{\log(\degree_\graph)}}$, for the hard-core model at vertex activity up to $\criticalFugacity{\degree_\graph}$ on $\graph$.

Subsequently, this result was improved by a more refined analysis \cite{sinclair2013spatial,sinclair2017spatial}. We elaborate further. The tree used by Weitz, which we refer to as the \emph{Weitz tree}, is a truncated version of a self-avoiding walk tree\footnote{In \cite{weitz2006counting} the tree is actually not truncated but certain vertices in the tree are fixed to be always occupied or unoccupied. However, this is equivalent to truncating the tree.} (see \Cref{sec:saw_tree} for a formal definition). This truncation accounts for the effect of cycles on the hard-core distribution in the original graph. In \cite{sinclair2013spatial,sinclair2017spatial}, it was shown that strong spatial mixing results can be derived by bounding the connective constant $\connectiveConstant$, which describes the growth of the Weitz tree (see \Cref{def:connective_constant}). In particular, it was shown that strong spatial mixing applies up to a vertex activity of $\fugacity < \criticalFugacity{\connectiveConstant}$, which improves bounds derived from the maximum degree. Consequentially, this implies sampling and approximation algorithm for the hard-core model for this parameter regime with running time $\size{\vertices_\graph}^{\bigO{\log(\connectiveConstant)}}$.

Inspired by these results on the hard-core model, Michelen and Perkins \cite{michelen2021potential} recently introduced the potential-weighted connective constant $\pwcc{\potential}$ for repulsive Gibbs point processes (see \Cref{sec:pwcc} for a formal definition).
It can be seen as an alternative to $\generalizedTemperedness{\potential}$ that is more sensitive to the structure of the underlying space $\pointProcessSpace$.
In particular, for any non-trivial potential, $\pwcc{\potential}$ is strictly smaller than the temperedness constant.
Moreover, it was shown in \cite{michelen2022strong} that repulsive Gibbs point processes with bounded-range potentials exhibit a notion of strong spatial mixing up to a fugacity of $\gppFugacity < \eulerE/\pwcc{\potential}$ (see \cite[Definition 1]{michelen2022strong}).
This result was used to derive a polynomial-time approximate sampling algorithm and a randomized approximation algorithm for the partition function of bounded-range repulsive Gibbs point processes in the same fugacity regime.

Given a repulsive Gibbs point process with fugacity $\gppFugacity < \eulerE/\pwcc{\potential}$, we show in \Cref{sec:connective_constant} that the hard-core model that is obtained from our reduction exhibits strong spatial mixing with high probability. In particular our result holds without bounded-range assumptions. We prove this by using the results in \cite{sinclair2013spatial,sinclair2017spatial}. Towards this, we establish a rigorous connection between the potential-weighted connective constant of a repulsive Gibbs point process and the connective constant of a graph from $\canonicalDistribution{n}{\region}{\potential}$.
More precisely, we show that, for any $\epsilon > 0$ and $n \ge \bigTheta{\volumeMeasure[\region]}$, the connective constant of a graph from $\canonicalDistribution{n}{\region}{\potential}$ is bounded by $\eulerE^{\epsilon} \frac{n}{\volumeMeasure[\region]} \pwcc{\potential}$ with probability at least $1 - \frac{1}{n}$ (see \Cref{thm:connective_constant} for the formal statement).
To obtain this result, we make use of the fact that the construction of the Weitz tree leaves some degree of freedom.
That is, for the same graph $\graph$ and root vertex $v \in \vertices_{\graph}$, different Weitz trees can be constructed, which differ in how the self-avoiding walk tree is truncated.
In our setting, we carefully need to choose the truncation based on the underlying location of vertices in $\region$.
In particular, it is important for us to define the connective constant in terms of the Weitz tree and not the full self-avoiding walk tree.
The latter would only yield a bound of $\frac{n}{\volumeMeasure[\region]} \generalizedTemperedness{\potential}$, which would be no improvement over the maximum degree of $\graph \sim \canonicalDistribution{n}{\region}{\potential}$.

Given the above graphical interpretation of $\pwcc{\potential}$, we immediately obtain that, for all $\gppFugacity < \eulerE/\pwcc{\potential}$, a hard-core model with vertex activity $\fugacity(n) = \frac{\volumeMeasure[\region]}{n} \gppFugacity$ exhibits strong spatial mixing on $\graph \sim \canonicalDistribution{n}{\region}{\potential}$ with probability at least $1 - \frac{1}{n}$ (see \Cref{cor:ssm_discretization} for the formal statement).
This result holds for any repulsive potential, without bounded-range assumption.
However, it should be noted that the strong spatial mixing is with respect to the graph distance and not the distance metric of the underlying space $\pointProcessSpace$.

This strong spatial mixing result for $\graph \sim \canonicalDistribution{n}{\region}{\potential}$ gives our reduction further algorithmic consequences.
Using the deterministic algorithm for approximating that partition function of a hard-core model proposed by Weitz \cite{weitz2006counting} (see also \cite{sinclair2013spatial,sinclair2017spatial}) and \Cref{thm:gpp_concentration}, our strong spatial mixing result yields a randomized approximation for the partition function of repulsive Gibbs point processes with arbitrary range potentials for $\gppFugacity < \eulerE/\pwcc{\potential}$ with quasi-polynomial running time $\volumeMeasure[\region]^{\bigO{\ln(\volumeMeasure[\region])}}$.
A similar result can be derived in the setting of approximate sampling.

\subsection{A more general concentration result: antiferromagnetic partition functions on graphon-based random graphs} \label{subsec:discrete_spin_systems}
So far, we discussed how concentration of hard-core partition functions $\hcPartitionFunction{\graph}[\frac{\gppFugacity \volumeMeasure[\region]}{n}]$ for random graphs $\graph \sim \canonicalDistribution{n}{\region}{\potential}$, stated in \Cref{thm:gpp_concentration} is obtained from \Cref{thm:selfbounded_concentration_simplified}.
However, as hinted earlier, a more general concentration result can be obtained for a large class of antiferromagnetic two-state spin systems on graphon-based random graph models.
As such spin systems have been studied extensively \cite{li2013correlation,sinclair2014approximation,sly2012computational}, we believe this result to be of independent interest.
In what follows, we outline this more general concentration result and show how \Cref{thm:gpp_concentration} follows as a special case of it.

We start by introducing the class of spin systems to which it applies.
For an undirected graph $\graph = (\vertices, \edges)$ with vertices $\vertices$ and edges $\edges \subseteq \binom{\vertices}{2}$, we denote by $\spinConfigurations{\graph}$ the set of all functions $\spinConfiguration\colon \vertices \to \{0, 1\}$.
To simplify notation, we assume $\vertices = [n]$ for some $n \in \N$.
A \emph{two-state spin system} with parameters $\fugacity, \edgeInteraction_0, \edgeInteraction_1 \in \R_{\ge 0}$ on $\graph$ is a probability distribution $\GibbsDistribution{\graph}{\fugacity}{\edgeInteraction_{0}}{\edgeInteraction_{1}}$ on $\spinConfigurations{\graph}$ with
\[
	\GibbsDistribution{\graph}{\fugacity}{\edgeInteraction_{0}}{\edgeInteraction_{1}}[\spinConfiguration] = \frac{\fugacity^{\countOnes{\spinConfiguration}}  \edgeInteraction_{0}^{\countEdges{\graph}{0}{\spinConfiguration}} \edgeInteraction_{1}^{\countEdges{\graph}{1}{\spinConfiguration}}}{\partitionFunction{\graph}[\fugacity][\edgeInteraction_{0}, \edgeInteraction_{1}]}	,
\]
where $\countOnes{\spinConfiguration} = \size{\spinConfiguration^{-1}(1)}$ counts the number of vertices that are assigned to $1$, $\countEdges{\graph}{a}{\spinConfiguration} = \sum_{\{i, j\} \in \edges} \ind{\spinConfiguration[i] = \spinConfiguration[j] = a}$ counts the number of edges with both endpoints assigned to $a \in \{0, 1\}$, and the normalizing constant $\partitionFunction{\graph}[\fugacity][\edgeInteraction_{0}, \edgeInteraction_{1}]$ is the partition function
\[
	\partitionFunction{\graph}[\fugacity][\edgeInteraction_{0}, \edgeInteraction_{1}] = \sum\nolimits_{\spinConfiguration \in \spinConfigurations{\graph}} \fugacity^{\countOnes{\spinConfiguration}}  \edgeInteraction_{0}^{\countEdges{\graph}{0}{\spinConfiguration}} \edgeInteraction_{1}^{\countEdges{\graph}{1}{\spinConfiguration}} .
\]
A two-state spin system is \emph{antiferromagnetic} if $\edgeInteraction_{0} \edgeInteraction_{1} \le 1$.
Our concentration result applies to antiferromagnetic two-state spin systems with $\edgeInteraction_{0} = 1$.
In this case, we omit $\edgeInteraction_{0}$ completely, write $\edgeInteraction = \edgeInteraction_{1} \in [0, 1]$, and denote the partition function by $\partitionFunction{\graph}[\fugacity][\edgeInteraction]$.

Our concentration result for partition functions $\partitionFunction{\graph}[\fugacity][\edgeInteraction]$ applies to all graphon-based random-graph models.
Here, we refer to graphons in the most general sense, as defined in \cite[Chapter $13$]{lovasz2012large}.
That is, for a probability space $\vertexProbabilitySpace = (\vertexSpace, \vertexSigmaAlgebra, \vertexDistribution)$, a \emph{graphon} is a symmetric function $\edgeProbability\colon \vertexSpace^2 \to [0, 1]$ that is measurable with respect to the product algebra $\vertexSigmaAlgebra^2 = \vertexSigmaAlgebra \tensor \vertexSigmaAlgebra$.
Note that, even though we call the function $\edgeProbability$ the graphon, we mean implicitly that a graphon is a tuple of an underlying probability space and a suitable function $\edgeProbability$.
One useful aspect of a graphon $\edgeProbability$ is that it naturally defines a family of random-graph models, sometimes called $\edgeProbability$-random graphs (see \cite[Chapter $11$]{lovasz2012large}).
For every $n \in \N_{\ge 1}$, we denote by~$\graphDistribution{n}{\vertexProbabilitySpace}{\edgeProbability}$ a distribution on undirected graphs with vertex set $[n]$ that is induced by the following procedure for generating a random graph:
\begin{enumerate}
	\item Draw a tuple $(x_1, \dots x_n) \in \vertexSpace^n$ according to the product distribution $\vertexDistribution^n$.
	\item For all $i, j \in [n], i \neq j$, add the edge $\{i, j\}$ independently with probability $\edgeProbability[x_i][x_j]$.
\end{enumerate}
Observe that $\graphDistribution{n}{\vertexProbabilitySpace}{\edgeProbability}$ encompasses classical random-graph models, such as Erdős--Rényi random graphs and geometric random graphs.

Applying \Cref{thm:selfbounded_concentration_simplified} and using essentially the same arguments as in our proof sketch for \Cref{thm:gpp_concentration} yields the following result.

\begin{restatable}{theorem}{concentrationPartitionFunctionSimplified}
	\label{thm:concentration_partition_function_simplified}
	Let $\edgeProbability$ be a graphon on the probability space $\vertexProbabilitySpace = (\vertexSpace, \vertexSigmaAlgebra, \vertexDistribution)$.
	Let $\fugacity\colon \N_{\ge 1} \to \R_{\ge 0}$ such that $\fugacity[n] \le  \initialFugacity n^{-\frac{1 + \alpha}{2}}$ for some $\initialFugacity \in \R_{\ge 0}$ and $\alpha \in \R_{>0}$.
	For all $\edgeInteraction \in [0, 1]$, $\error \in (0, 1]$, $\errorProb \in (0, 1]$, $n \ge \left(2 \initialFugacity^2 \error^{-2} \errorProb^{-1}\right)^{\frac{1}{\alpha}}$, and $\graph \sim \graphDistribution{n}{\vertexProbabilitySpace}{\edgeProbability}$, it holds that
	\[
		\Pr{\absolute{\partitionFunction{\graph}[\fugacity[n]][\edgeInteraction] - \E{\partitionFunction{\graph}[\fugacity[n]][\edgeInteraction]}} \ge \error \E{\partitionFunction{\graph}[\fugacity[n]][\edgeInteraction]}}[][\big] \le \errorProb .
	\qedhere
	\]
\end{restatable}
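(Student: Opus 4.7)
The plan is to reuse the strategy outlined for \Cref{thm:gpp_concentration}: represent $\graph \sim \graphDistribution{n}{\vertexProbabilitySpace}{\edgeProbability}$ as a deterministic function of independent random inputs and then invoke \Cref{thm:selfbounded_concentration_simplified}. Let $\vectorize{x} = (x_1, \ldots, x_n)$ be drawn from the product distribution $\vertexDistribution^n$, and, independently of $\vectorize{x}$, let $\vectorize{y} = (y_{i,j})_{1 \le i < j \le n}$ consist of independent uniform random variables on $[0, 1]$. Declare the edge $\{i, j\}$ to be present in $\graph$ exactly when $y_{i,j} \le \edgeProbability[x_i][x_j]$; the resulting graph is distributed as $\graphDistribution{n}{\vertexProbabilitySpace}{\edgeProbability}$, so the function $f(\vectorize{x}, \vectorize{y}) \coloneqq \partitionFunction{\graph(\vectorize{x}, \vectorize{y})}[\fugacity[n]][\edgeInteraction]$ has the target distribution.

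Next I would bound the relative change of $f$ when one coordinate is resampled. For a coordinate $y_{i,j}$: the sole effect of flipping $y_{i,j}$ is to add or remove the edge $\{i, j\}$. Splitting $\partitionFunction{\graph}[\fugacity[n]][\edgeInteraction] = A + B$ according to whether $\spinConfiguration[i] = \spinConfiguration[j] = 1$, adding the edge replaces $A$ by $\edgeInteraction \cdot A$. Because $\edgeInteraction \in [0, 1]$, I may drop all edge-factors in $A$ incident to $i$ or $j$ to obtain $A \le \fugacity[n]^2 \partitionFunction{\graph \setminus \{i, j\}}[\fugacity[n]][\edgeInteraction] \le \fugacity[n]^2 \partitionFunction{\graph}[\fugacity[n]][\edgeInteraction]$, so the relative change in $f$ is at most $\fugacity[n]^2$. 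For a coordinate $x_i$: only edges incident to $i$ are affected. Split $\partitionFunction{\graph}[\fugacity[n]][\edgeInteraction] = Z_0 + Z_1$ by the spin at $i$: the term $Z_0$ equals $\partitionFunction{\graph \setminus i}[\fugacity[n]][\edgeInteraction]$ and does not depend on $x_i$, while $Z_1 = \fugacity[n] \sum_{\tau \in \spinConfigurations{\graph \setminus i}} h(\tau) \, w_{\vectorize{x}}(\tau)$ with $h(\tau) \ge 0$ independent of $x_i$ and a weight $w_{\vectorize{x}}(\tau) \in [0, 1]$ encoding the edges from $i$ to its current neighbors. Hence $Z_1$ and its resampled counterpart both lie in $[0, \fugacity[n] Z_0]$, yielding a relative change of at most $\fugacity[n]$.

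Plugging these sensitivities into \Cref{thm:selfbounded_concentration_simplified}, the $\binom{n}{2}$ edge-coordinates jointly contribute $\bigO{n^2 \fugacity[n]^4}$ to the squared relative-deviation budget and the $n$ point-coordinates contribute $\bigO{n \fugacity[n]^2}$. Under the hypothesis $\fugacity[n] \le \initialFugacity n^{-(1 + \alpha)/2}$ both terms are dominated by $\initialFugacity^2 n^{-\alpha}$ (with the constant $2$ absorbed exactly as in the statement), and the target tail bound $\errorProb$ at scale $\error$ is achieved precisely when $n \ge (2 \initialFugacity^2 \error^{-2} \errorProb^{-1})^{1/\alpha}$, which matches the threshold in the theorem.

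The main obstacle will be the $x_i$-sensitivity: a single change in $x_i$ can simultaneously alter up to $n - 1$ edges, so any McDiarmid-style absolute bound would blow up with $n$. The crucial insight is that only the $\spinConfiguration[i] = 1$ part of the partition function is touched, and that part carries an unavoidable prefactor of $\fugacity[n]$; combined with $\edgeInteraction \in [0, 1]$ preventing any edge modification from inflating the weight of a fixed configuration, this yields a dimension-free relative change of order $\fugacity[n]$ even though the graph around $i$ has been globally rewired.
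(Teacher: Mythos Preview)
Your proposal is correct and follows essentially the same route as the paper: the paper also realizes $\graph$ via independent inputs $(\vectorize{x},\vectorize{y})$, proves the edge-sensitivity bound $\fugacity[n]^2$ and the vertex-sensitivity bound $\fugacity[n]$ as separate lemmas (\Cref{lemma:remove_edge} and \Cref{lemma:add_vertex}), and then applies \Cref{thm:selfbounded_concentration_simplified} with $C = n\fugacity[n]^2 + \binom{n}{2}\fugacity[n]^4 \le 2\initialFugacity^2 n^{-\alpha}$. The only minor point to tighten is that \Cref{thm:selfbounded_concentration_simplified} requires the bound against $\min\{f(\vectorize{z}),f(\vectorize{z}')\}$, so in your edge step you should note that $A \le \fugacity[n]^2\,\partitionFunction{\graph\setminus\{i,j\}}[\fugacity[n]][\edgeInteraction]$ is already bounded by the \emph{smaller} of the two partition functions (the one with the edge present), not just by $\partitionFunction{\graph}[\fugacity[n]][\edgeInteraction]$.
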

\begin{remark}
	In fact, \Cref{thm:concentration_partition_function_simplified} can easily be extended to an even more general setting, where we consider a sequence of probability spaces $(\vertexProbabilitySpace_{n})_{n \in \N}$ and an associated sequence of graphons $(\edgeProbability_{n})_{n \in \N}$ (i.e., each $\edgeProbability_{n}$ is a graphon on $\vertexProbabilitySpace_{n}$).
	This makes the result for example applicable to popular models studied in network theory, such as hyperbolic random graphs~\cite{krioukov2010hyperbolic} and geometric inhomogeneous random graphs~\cite{DBLP:journals/tcs/BringmannKL19}.
\end{remark}

A surprising aspect of \Cref{thm:concentration_partition_function_simplified} is that, even though $\E{\partitionFunction{\graph}[\fugacity[n]][\edgeInteraction]} \ge 1 + n \fugacity[n]$ diverges for $\fugacity[n] \in \omega(n^{-1})$ as $n$ increases, \Cref{thm:concentration_partition_function_simplified} still ensures that the distribution of the partition functions gets more and more concentrated as long as $\fugacity[n] \in \smallO{n^{-\frac{1}{2}}}[\big]$.

We derive \Cref{thm:gpp_concentration} as a special case of \Cref{thm:concentration_partition_function_simplified}.
To see how this works, first observe that, for $\edgeInteraction=0$, it holds that $\partitionFunction{\graph}[\fugacity[n]][\edgeInteraction]$ is the hard-core partition function of a graph $\graph$ with parameter $\fugacity \in \R_{\ge 0}$.
Moreover, by setting $\vertexProbabilitySpace = (\region, \Borel, \uniformDistributionOn{\region})$, considering a graphon $\canonicalEdgeProbability{\potential}[x_1][x_2] = 1 - \eulerE^{- \potential[x_1][x_2]}$, $x_1, x_2 \in \region$, on $\vertexProbabilitySpace$, we obtain $\canonicalDistribution{n}{\region}{\potential} = \graphDistribution{n}{\vertexProbabilitySpace}{\canonicalEdgeProbability{\potential}}$.
This way of expressing $\canonicalDistribution{n}{\region}{\potential}$ establishes a connection between repulsive Gibbs point processes and hard-core models on graphon-based random graphs.
Lastly, setting $\fugacity_0 = \gppFugacity \volumeMeasure[\region]$ and $\fugacity[n] = \fugacity_0 n^{-1}$ and applying \Cref{thm:concentration_partition_function_simplified} yields the desired concentration result for hard-core partition functions on $\canonicalDistribution{n}{\region}{\potential}$.

\subsection{Organization of the technical details}
The technical details in the appendix are organized as follows.
In \Cref{sec:prelim}, we formally introduce the notion of antiferromagnetic spin systems and Gibbs point processes.
Note that for the latter one, we base our definition on the notion of random counting measures.
This is common in the theory of point processes but slightly differs from the definition given in the introduction.
In \Cref{sec:concentration}, we prove our concentration result for partition functions of antiferromagnetic spin systems on graphon-based random graphs.
We continue by showing in \Cref{sec:approximation} how to efficiently approximate the partition functions of a Gibbs point process with arbitrary repulsive potentials via hard-core partition functions of random graphs based on a suitably constructed graphon. In \Cref{sec:sampling}, we show how a similar approach can be used to obtain an approximate sampler for repulsive Gibbs point processes. Finally, in \Cref{sec:connective_constant} we give a high-probability bound on the connective constant of the graphs obtained from our reduction and derive corresponding strong spatial mixing results.

	\section{Preliminaries}\label{sec:prelim}

We formally introduce the discrete antiferromagnetic spin systems we investigate, as well as Gibbs point processes.

\subsection{Antiferromagnetic spin systems}
For an undirected graph $\graph = (\vertices, \edges)$ with vertices $\vertices$ and edges $\edges \subseteq \binom{\vertices}{2}$, we denote by $\spinConfigurations{\graph}$ the set of all functions $\spinConfiguration: \vertices \to \{0, 1\}$.
Without loss of generality, we are going to assume the canonical vertex set $\vertices = [n]$ for some $n \in \N$.
A \emph{two-state spin system} with parameters $\fugacity, \edgeInteraction_0, \edgeInteraction_1 \in \R_{\ge 0}$ on $\graph$ is a probability distribution $\GibbsDistribution{\graph}{\fugacity}{\edgeInteraction_{0}}{\edgeInteraction_{1}}$ on $\spinConfigurations{\graph}$ with
\[
	\GibbsDistribution{\graph}{\fugacity}{\edgeInteraction_{0}}{\edgeInteraction_{1}}[\spinConfiguration] = \frac{\fugacity^{\countOnes{\spinConfiguration}}  \edgeInteraction_{0}^{\countEdges{\graph}{0}{\spinConfiguration}} \edgeInteraction_{1}^{\countEdges{\graph}{1}{\spinConfiguration}}}{\partitionFunction{\graph}[\fugacity][\edgeInteraction_{0}, \edgeInteraction_{1}]}	,
\]
where $\countOnes{\spinConfiguration} = \size{\spinConfiguration^{-1}(1)}$ counts the number of vertices assigned that are to $1$, $\countEdges{\graph}{a}{\spinConfiguration} = \sum_{\{i, j\} \in \edges} \ind{\spinConfiguration[i] = \spinConfiguration[j] = a}$  counts the number of edges with both endpoints assigned to $a \in \{0, 1\}$ and $\partitionFunction{\graph}[\fugacity][\edgeInteraction_{0}, \edgeInteraction_{1}]$ is the normalizing constant
\[
	\partitionFunction{\graph}[\fugacity][\edgeInteraction_{0}, \edgeInteraction_{1}] = \sum_{\spinConfiguration \in \spinConfigurations{\graph}} \fugacity^{\countOnes{\spinConfiguration}}  \edgeInteraction_{0}^{\countEdges{\graph}{0}{\spinConfiguration}} \edgeInteraction_{1}^{\countEdges{\graph}{1}{\spinConfiguration}} .
\]
Note that we implicitly assume $\edgeInteraction_{0} \neq 0$ or $\edgeInteraction_{1} \neq 0$, as $\GibbsDistribution{\graph}{\fugacity}{\edgeInteraction_{0}}{\edgeInteraction_{1}}$ may not be defined otherwise.

Usually, $\GibbsDistribution{\graph}{\fugacity}{\edgeInteraction_{0}}{\edgeInteraction_{1}}$ is referred to as the \emph{Gibbs distribution} of the model and $\partitionFunction{\graph}$ is called the \emph{partition function}.
Further, a two-state spin system is \emph{antiferromagnetic} if $\edgeInteraction_{0} \edgeInteraction_{1} \le 1$.
For our concentration result, we focus on the setting where $\edgeInteraction_{0} = 1$.
In this case, we omit $\edgeInteraction_{0}$ completely and write $\edgeInteraction = \edgeInteraction_{1} \in [0, 1]$ and denote the partition function by $\partitionFunction{\graph}[\fugacity][\edgeInteraction]$.
Of special interest within this class of antiferromagnetic two-state spin systems in the \emph{hard-core model}, which results from setting $\edgeInteraction = 0$.
In this case, we might just omit the edge interactions $\edgeInteraction$ completely and write $\hcGibbsDistribution{\graph}{\fugacity}$ and $\hcPartitionFunction{\graph}[\fugacity]$.
Note that this implies that only configurations $\spinConfiguration \in \spinConfigurations{\graph}$ for which $\spinConfiguration^{-1}(1)$ is an independent set in $\graph$ can have non-zero probability.
For us, this model is especially relevant, as we show that concentration of hard-core partition functions on random graphs can be used to derive randomized approximations for the partition function of repulsive Gibbs point processes, which are introduced in the next section.

\subsection{Gibbs point processes}
\label{sec:prelim:gpp}
We introduce the notion of Gibbs point processes that is used throughout this paper.
For a formal treatment, it is common to model point processes as random counting measures.
Note that this is different from the simplified definition that we gave in the introduction.
For a more detailed overview on the theory of point processes and specifically Gibbs point processes, see \cite{jansen2018gibbsian}.

Let $(\pointProcessSpace, \dist)$ be a complete, separable metric space and let $\Borel = \Borel[\pointProcessSpace]$ be the Borel algebra of that space.
Let $\volumeMeasure$ be a locally finite reference measure on $(\pointProcessSpace, \Borel)$ such that all bounded measurable sets have finite measure.
Denote by $\countingMeasures$ the set of all locally finite counting measures on $(\pointProcessSpace, \Borel)$.
Formally, this is the set of all measures $\countingMeasure$ on $(\pointProcessSpace, \Borel)$ with values in $\N \cup \{\infty\}$ such that $\volumeMeasure[A] < \infty$ implies $\countingMeasure[A] < \infty$ for all $A \in \Borel$.
For each $A \in \Borel$, define a map $\countFunction{A}: \countingMeasures \to \N \cup \{\infty\}$ with $\countingMeasure \mapsto \countingMeasure[A]$ and let $\countingSigmaAlgebra$ be the sigma algebra on $\countingMeasures$ that is generated by the set of those maps $\{\countFunction{A} \mid A \in \Borel\}$.
A \emph{point process} on $\pointProcessSpace$ is now a measurable map from some probability space to the measurable space $(\countingMeasures, \countingSigmaAlgebra)$.
With some abuse of terminology, we call any probability distribution on $(\countingMeasures, \countingSigmaAlgebra)$ a point process, as we can only use the identity as measurable mapping from $\countingMeasure$ to itself.
Moreover, a point process is call \emph{simple} if $\countFunction{x}[\countingMeasure] \le 1$ with probability $1$, where we write $\countFunction{x}$ for $\countFunction{\{x\}}$.

Note that every counting measure $\countingMeasure \in \countingMeasures$ is associated with a multiset of points in $\pointProcessSpace$.
To see this, define $\pointSet[\countingMeasure] = \{x \in \pointProcessSpace \mid \countFunction{x}[\countingMeasure] > 0\}$.
Then $\countingMeasure$ can be expressed as a weighted sum of Dirac measures
\[
	\countingMeasure = \sum_{x \in \pointSet[\countingMeasure]} \countFunction{x}[\countingMeasure] \DiracMeasure{x} .
\]
In this sense, $\countingMeasure$ is associated with a multiset of points $x \in \pointSet[\countingMeasure]$, each occurring with finite multiplicity $\countFunction{x}[\countingMeasure]$.
We may use such a \emph{point configuration} interchangeably with its corresponding counting measure.

An important example for point processes are Poisson point processes.
A \emph{Poisson point process} with intensity $\PoissonIntensity \in \R_{\ge 0}$ on $(\pointProcessSpace, \dist)$ is uniquely defined by the following properties
\begin{itemize}
	\item for all bounded measurable $A \subseteq \pointProcessSpace$ it holds that $\countFunction{A}$ is Poisson distributed with intensity $\PoissonIntensity \volumeMeasure[A]$ and
	\item for all $m \in \N_{\ge 2}$ and disjoint measurable $A_1, \dots, A_m \subseteq \pointProcessSpace$ it holds that $\countFunction{A_1}, \dots, \countFunction{A_m}$ are independent.
\end{itemize}

Generally speaking, a \emph{Gibbs point process} is a point process that is absolutely continuous with respect to a Poisson point process.
For a bounded measurable $\region \subseteq \pointProcessSpace$ let $\countingMeasures[\region]$ denote the set of locally finite counting measures $\countingMeasure \in \countingMeasures$ that satisfy $\countFunction{A}[\countingMeasure] = 0$ for all measurable $A \subseteq \pointProcessSpace \setminus \region$.
In this work we are interested in Gibbs point processes $\GibbsPointProcess{\region}{\gppFugacity}{\potential}$ on bounded measurable regions $\region \subseteq \pointProcessSpace$ that are parameterized by a \emph{fugacity} parameter $\gppFugacity \in \R_{\ge 0}$ and non-negative, symmetric, measurable \emph{potential function} $\potential: \pointProcessSpace^2 \to \R_{\ge 0} \cup \{\infty\}$.
Formally, such a process $\GibbsPointProcess{\region}{\gppFugacity}{\potential}$ is defined by having a density with respect to a Poisson point process with intensity $\gppFugacity$ of the form
\[
	\frac{\intD \GibbsPointProcess{\region}{\gppFugacity}{\potential}}{\intD \PoissonPointProcess[\gppFugacity]} (\countingMeasure) = \frac{\ind{\countingMeasure \in \countingMeasures[\region]} \eulerE^{-\hamiltonian[\countingMeasure]}  \eulerE^{\gppFugacity \volumeMeasure[\region]}}{\gppPartitonFunction{\region}[\gppFugacity][\potential]}
\]
where $\hamiltonian: \countingMeasures \to \R_{\ge 0} \cup \{\infty\}$ is the \emph{Hamiltonian} defined by
\[
	\hamiltonian[\countingMeasure] = \sum_{\{x, y\} \in \binom{\pointSet[\countingMeasure]}{2}} \countFunction{x}[\countingMeasure] \countFunction{y}[\countingMeasure] \potential[x][y] + \sum_{x \in \pointSet[\countingMeasure]} \frac{\countFunction{x}[\countingMeasure] (\countFunction{x}[\countingMeasure]-1)}{2} \potential[x][x].
\]
The normalizing constant $\gppPartitonFunction{\region}[\gppFugacity][\potential]$ is usually called the \emph{(grand-canonical) partition function} and can be written explicitly as
\begin{align*}
	\gppPartitonFunction{\region}[\gppFugacity][\potential]
	&= 1 + \sum_{\numPoints \in \N_{\ge 1}} \frac{\gppFugacity^{\numPoints}}{\numPoints!} \int_{\region^{\numPoints}} \eulerE^{- \hamiltonian[\DiracMeasure{x_1} + \dots + \DiracMeasure{x_\numPoints}]} \productVolumeMeasure{\numPoints}[\intD \vectorize{x}] \\
	&= 1 + \sum_{\numPoints \in \N_{\ge 1}} \frac{\gppFugacity^{\numPoints}}{\numPoints!} \int_{\region^{\numPoints}} \prod_{\{i, j\} \in \binom{[k]}{2}} \eulerE^{- \potential[x_i][x_j]} \productVolumeMeasure{\numPoints}[\intD \vectorize{x}] .
\end{align*}

	\section{Concentration of partition functions of antiferromagnetic spin systems on graphon-based random graphs}\label{sec:concentration}

The main tool we will use to derive our concentration bounds is the Efron--Stein inequality.
For $\numSpaces \in \N_{\ge 1}$ let $\{(\probSpace_i, \sigmaAlgebra_i, \probMeasureIdx[i])\}_{i \in [\numSpaces]}$ be a collection of probability spaces and let $f: \probSpace \to \R$ be a measurable function on the product space $(\probSpace, \sigmaAlgebra, \probMeasure) = \bigTensor_{i \in [\numSpaces]} (\probSpace_i, \sigmaAlgebra_i, \probMeasureIdx[i])$.
For each $i \in [\numSpaces]$ define a function $\deviation{i}{f}: \probSpace \times \probSpace_i \to \R_{\ge 0}$, where, for every $\vectorize{x} = (x_1, \dots, x_{\numSpaces}) \in \probSpace$ and $y_i \in \probSpace_i$, the value $\deviation{i}{f}[\vectorize{x}][y_i]$ is defined as the squared difference in $f$ that is caused by replacing $x_i$ in $\vectorize{x}$ with $y_i$.
Formally, this is $\deviation{i}{f}[\vectorize{x}][y_i] = (f(\vectorize{x}) - f(\vectorize{y}))^2$ where $\vectorize{y} = (x_1, \dots, x_{i-1}, y_i, x_{i+1}, \dots, x_{\numSpaces})$.
The Efron--Stein inequality bounds the variance of $f$ under $\probMeasure$ based on the local squared deviations $\deviation{i}{f}[\vectorize{x}][y_i]$.

\begin{theorem}[{Efron--Stein inequality \cite{efron1981jackknife}}]
	\label{thm:variance_local_deviation}
	Let $\{(\probSpace_i, \sigmaAlgebra_i, \probMeasureIdx[i])\}_{i \in [\numSpaces]}$ be probability spaces with product space $(\probSpace, \sigmaAlgebra, \probMeasure) = \bigTensor_{i \in [\numSpaces]} (\probSpace_i, \sigmaAlgebra_i, \probMeasureIdx[i])$.
	For every $\sigmaAlgebra$-measurable function $f: \probSpace \to \R$ it holds that
	\[
		\VarWrt{f}[\probMeasure] \le \frac{1}{2} \sum\nolimits_{i \in [\numSpaces]} \EWrt{\deviation{i}{f}}[\probMeasure \times \probMeasureIdx[i]] . \qedhere
	\]
\end{theorem}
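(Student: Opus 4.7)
The plan is to derive the bound by the classical combination of a Doob-martingale decomposition, tensorization of the variance, and the identity expressing the variance of a random variable via its independent copy. For a random element $\vectorize{X} = (X_1, \dots, X_{\numSpaces}) \sim \probMeasure$, I would first introduce the filtration $\sigmaAlgebra^{(i)} = \sigma(X_1, \dots, X_i)$ and the Doob martingale $g_i = \mathrm{E}[f(\vectorize{X}) \mid \sigmaAlgebra^{(i)}]$, so that $g_0 = \EWrt{f}[\probMeasure]$ and $g_{\numSpaces} = f(\vectorize{X})$. Orthogonality of the martingale increments in $L^2$ then yields
\[
\VarWrt{f}[\probMeasure] = \sum_{i=1}^{\numSpaces} \EWrt{(g_i - g_{i-1})^2}[\probMeasure] .
\]
(If $f \notin L^2(\probMeasure)$, the right-hand side of the claim is infinite and there is nothing to prove.)

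Next, I would bound each martingale increment by the expected conditional variance of $f$ with respect to the $i$-th coordinate alone. By the tower property $g_{i-1} = \mathrm{E}[g_i \mid \sigmaAlgebra^{(i-1)}]$, so $\EWrt{(g_i - g_{i-1})^2}[\probMeasure] = \EWrt{\mathrm{Var}(g_i \mid \sigmaAlgebra^{(i-1)})}[\probMeasure]$. Since $g_i$ is $\sigmaAlgebra^{(i)}$-measurable, this conditional variance is taken over $X_i$ alone with $X_1, \dots, X_{i-1}$ fixed; writing $g_i$ as the integral of $f$ against the product law of $X_{i+1}, \dots, X_{\numSpaces}$ and applying conditional Jensen gives the pointwise bound $\mathrm{Var}(g_i \mid \sigmaAlgebra^{(i-1)}) \le \mathrm{E}[\mathrm{Var}(f(\vectorize{X}) \mid (X_j)_{j \ne i}) \mid \sigmaAlgebra^{(i-1)}]$. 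Summing and unconditioning produces the standard tensorization inequality $\VarWrt{f}[\probMeasure] \le \sum_{i=1}^{\numSpaces} \mathrm{E}[\mathrm{Var}(f(\vectorize{X}) \mid (X_j)_{j \ne i})]$.

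Finally, I would rewrite each inner conditional variance using an independent copy. Let $Y_i \sim \probMeasureIdx[i]$ be independent of $\vectorize{X}$, and set $\vectorize{X}^{(i)} = (X_1, \dots, X_{i-1}, Y_i, X_{i+1}, \dots, X_{\numSpaces})$. The elementary identity $\mathrm{Var}(Z) = \tfrac{1}{2}\mathrm{E}[(Z - Z')^2]$, valid for any square-integrable $Z$ with an independent copy $Z'$, applied conditionally on $(X_j)_{j \ne i}$ with $Z = f(\vectorize{X})$ and $Z' = f(\vectorize{X}^{(i)})$, gives $\mathrm{Var}(f(\vectorize{X}) \mid (X_j)_{j \ne i}) = \tfrac{1}{2}\mathrm{E}[\deviation{i}{f}[\vectorize{X}][Y_i] \mid (X_j)_{j \ne i}]$. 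Substituting into the tensorization inequality and unconditioning yields the stated bound. The only real obstacle is measure-theoretic bookkeeping on the product space: well-definedness of the Doob martingale, clean factoring of the conditional expectations over the independent coordinates, and the lift of the variance-squared-difference identity to the conditional setting. All of these are routine via Fubini and the standard theory of conditional expectation on product probability spaces.
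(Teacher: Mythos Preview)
The paper does not prove this theorem: it is stated as a cited result from \cite{efron1981jackknife}, with only a remark that the inequality extends from real-valued random variables to functions on arbitrary product probability spaces. Your proof is correct and follows the standard route (Doob-martingale decomposition, tensorization via conditional Jensen, and the independent-copy identity for the variance), so there is nothing to compare against in the paper itself.
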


\begin{remark}
	The Efron--Stein inequality is usually stated for functions of independent real-valued random variables. However, it extends to functions on products of arbitrary probability spaces.
\end{remark}

\Cref{thm:variance_local_deviation} immediately gives a concentration result for $f$ whenever $\frac{1}{2} \sum_{i \in [\numSpaces]} \EWrt{\deviation{i}{f}}[\probMeasure \times \probMeasureIdx[i]][][\big]$ is of order of magnitude $\EWrt{f}[\probMeasure]^2$ by using Chebyshev's inequality.
However, obtaining such a bound might turn out difficult, especially if $\EWrt{f}[\probMeasure]$ is hard to compute explicitly.
For our setting, we derive the following corollary of \Cref{thm:variance_local_deviation}.

\begin{restatable}{corollary}{selfboundedConcentrationSimplified}[{Corollary of the Efron--Stein inequality}]
	\label{thm:selfbounded_concentration_simplified}
	Let $\{(\probSpace_i, \sigmaAlgebra_i, \probMeasureIdx[i])\}_{i \in [\numSpaces]}$ be
	probability spaces with product space $(\probSpace, \sigmaAlgebra, \probMeasure) = \bigTensor_{i \in [\numSpaces]} (\probSpace_i, \sigmaAlgebra_i, \probMeasureIdx[i])$, and let $f\colon \probSpace \to \R$ be an $\sigmaAlgebra$-measurable function.
	Assume that there are $c_i \in \R_{\ge 0}$ for $i \in [\numSpaces]$ such that $C \coloneqq \sum_{i \in [\numSpaces]} c_i^2 < 2$ and, for all $\vectorize{x} = (x_j)_{j \in [\numSpaces]} \in \probSpace$ and $\vectorize{y} = (y_j)_{j \in [\numSpaces]} \in \probSpace$ that disagree only at position $i$, it holds that
	\[
	\absolute{f(\vectorize{x}) - f(\vectorize{y})} \le c_i \cdot \min\{\absolute{f(\vectorize{x})}, \absolute{f(\vectorize{y})}\} .
	\]
	Then, for all $\error \in \R_{> 0}$, it holds that
	\[
	\Pr{\absolute{f - \EWrt{f}[\probMeasure]} \ge \error \EWrt{f}[\probMeasure]} \le \left(\frac{2}{2 - C} - 1\right) \frac{1}{\error^2} .\qedhere
	\]
\end{restatable}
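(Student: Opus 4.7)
The plan is to combine the Efron--Stein inequality (\Cref{thm:variance_local_deviation}) with Chebyshev's inequality, using the self-bounding hypothesis to obtain a self-referential bound on the variance that can be solved for explicitly.

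First I would convert the pointwise self-bounding hypothesis into a pointwise bound on the local squared deviation. Squaring the assumption $\absolute{f(\vectorize{x}) - f(\vectorize{y})} \le c_i \min\{\absolute{f(\vectorize{x})}, \absolute{f(\vectorize{y})}\}$ gives, for every $\vectorize{x} \in \probSpace$ and $y_i \in \probSpace_i$,
\[
	\deviation{i}{f}[\vectorize{x}][y_i] = (f(\vectorize{x}) - f(\vectorize{y}))^2 \le c_i^2\, f(\vectorize{x})^2.
\]
Taking expectations with respect to $\probMeasure \tensor \probMeasureIdx[i]$ yields $\EWrt{\deviation{i}{f}}[\probMeasure \tensor \probMeasureIdx[i]] \le c_i^2\, \EWrt{f^2}[\probMeasure]$.

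Next I would apply \Cref{thm:variance_local_deviation} and sum the above bound over $i \in [\numSpaces]$ to obtain
\[
	\VarWrt{f}[\probMeasure] \le \frac{1}{2} \sum_{i \in [\numSpaces]} c_i^2\, \EWrt{f^2}[\probMeasure] = \frac{C}{2} \EWrt{f^2}[\probMeasure].
\]
Decomposing $\EWrt{f^2}[\probMeasure] = \VarWrt{f}[\probMeasure] + \EWrt{f}[\probMeasure]^2$ and exploiting the crucial hypothesis $C < 2$ (so that $1 - C/2 > 0$), I would rearrange to isolate the variance:
\[
	\VarWrt{f}[\probMeasure] \le \frac{C}{2-C}\, \EWrt{f}[\probMeasure]^2 = \left(\frac{2}{2-C} - 1\right) \EWrt{f}[\probMeasure]^2.
\]
Finally, Chebyshev's inequality applied to the deviation threshold $\error \EWrt{f}[\probMeasure]$ (interpreting $\EWrt{f}[\probMeasure]$ as non-negative, which is automatic in the intended applications to partition functions) gives the claimed concentration bound.

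The only subtlety, which is really just a bookkeeping point rather than an obstacle, is the self-referential character of the bound: Efron--Stein controls $\VarWrt{f}[\probMeasure]$ in terms of the second moment, and one must fold $\VarWrt{f}[\probMeasure]$ back in via $\EWrt{f^2}[\probMeasure] = \VarWrt{f}[\probMeasure] + \EWrt{f}[\probMeasure]^2$ before isolating the variance. This step is precisely where the assumption $C < 2$ is used, and it is what allows the corollary to trade control of second moments (which are typically hard to bound directly) for control in terms of the first moment (which appears naturally on the right-hand side of the concentration statement).
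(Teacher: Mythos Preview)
Your proposal is correct and follows essentially the same route as the paper: bound each $\EWrt{\deviation{i}{f}}[\probMeasure \tensor \probMeasureIdx[i]]$ by $c_i^2\,\EWrt{f^2}[\probMeasure]$ via the pointwise hypothesis, apply Efron--Stein to get $\VarWrt{f}[\probMeasure] \le \tfrac{C}{2}\EWrt{f^2}[\probMeasure]$, fold in $\EWrt{f^2}[\probMeasure] = \VarWrt{f}[\probMeasure] + \EWrt{f}[\probMeasure]^2$ and use $C<2$ to isolate the variance, then finish with Chebyshev. The paper's only cosmetic difference is that it first solves for $\EWrt{f^2}[\probMeasure] \le \tfrac{2}{2-C}\EWrt{f}[\probMeasure]^2$ and then substitutes back, which is algebraically equivalent to your direct rearrangement.
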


\begin{proof}
	First, we observe that $\absolute{f(\vectorize{x}) - f(\vectorize{y})} \le c_i \min\{\absolute{f(\vectorize{x})}, \absolute{f(\vectorize{y})}\} \le c_i \absolute{f(\vectorize{x})}$ implies $\EWrt{\deviation{i}{f}}[\probMeasure \times \probMeasureIdx[i]] \le c_i^2 \EWrt{f^2}[\probMeasure]$ for all $i \in [\numSpaces]$.
	Thus, by \Cref{thm:variance_local_deviation}, we have $\VarWrt{f}[\probMeasure] \le \frac{C}{2} \EWrt{f^2}[\probMeasure]$.
	Now, recall that by definition $\VarWrt{f}[\probMeasure] = \EWrt{f^2}[\probMeasure] - \EWrt{f}[\probMeasure]^2$, which implies $\EWrt{f^2}[\probMeasure] - \EWrt{f}[\probMeasure]^2 \le \frac{C}{2} \EWrt{f^2}[\probMeasure]$.
	Rearranging for $\EWrt{f^2}[\probMeasure]$ and using the fact that $\frac{C}{2} < 1$ yields $\EWrt{f^2}[\probMeasure] \le \frac{2}{2 - C} \EWrt{f}[\probMeasure]^2$.
	Substituting this back into the definition of the variance, we obtain
	\[
		\VarWrt{f}[\probMeasure] \le \left(\frac{2}{2 - C} - 1\right) \EWrt{f}[\probMeasureIdx]^2 .
	\]
	The claim follows immediately by applying Chebyshev's inequality.
\end{proof}

\begin{remark}
	Usually, we want to characterize concentration asymptotically in $\numSpaces$.
	In this setting, \Cref{thm:selfbounded_concentration_simplified} tells us that, if $c_i \in \bigO{\numSpaces^{-\frac{1 + \alpha}{2}}}$ for all $i \in [\numSpaces]$ and some $\alpha > 0$, then, for all $\error \in \R_{>0}$ and $\errorProb \in (0, 1]$ such that $\error^2 \errorProb < 1$, it is sufficient to choose $\numSpaces \in \bigTheta{\errorProb^{-\frac{1}{\alpha}} \error^{-\frac{2}{\alpha}}}$ to ensure
	\[
		\Pr{\absolute{f - \EWrt{f}[\probMeasure]} \ge \error \EWrt{f}[\probMeasure]} \le \errorProb. \qedhere
	\]
\end{remark}

We are now ready to use \Cref{thm:selfbounded_concentration_simplified} and derive a concentration result for the partition functions of antiferromagnetic two-state spin systems for graphon-based random graph model.

Let us recall the definition of graphons and graphon-based random graphs that we are using (see \cite[Chapter $10$ \& $13$]{lovasz2012large}).
Let $\vertexProbabilitySpace = (\vertexSpace, \vertexSigmaAlgebra, \vertexDistribution)$ be a probability space.
A graphon on $\vertexProbabilitySpace$ is a symmetric function $\edgeProbability: \vertexSpace^2 \to [0, 1]$ that is measurable with respect to the product algebra $\vertexSigmaAlgebra^2 = \vertexSigmaAlgebra \tensor \vertexSigmaAlgebra$.
For $n \in \N_{\ge 1}$ we denote by $\graphs{n}$ the set of all graphs on the canonical vertex set $[n] = \{1, \dots, n\}$.
Note that each graph in $\graphs{n}$ is fully characterized by its edge set $\edges$.
For every $n \in \N_{\ge 1}$ the random graph model induced by a graphon $\edgeProbability$ on a probability space $(\vertexSpace, \vertexSigmaAlgebra, \vertexDistribution)$ is described by generating a random graph $\graph = ([n], E)$ by
\begin{itemize}
	\item drawing a tuple $(x_1, \dots x_n) \in \vertexSpace^n$ according to the product distribution $\vertexDistribution^n$ and
	\item adding the edge $\{i, j\}$ for all $i, j \in [n], i \neq j$ independently with probability $\edgeProbability[x_i][x_j]$.
\end{itemize}
Formally, this gives a probability distribution $\graphDistribution{n}{\vertexProbabilitySpace}{\edgeProbability}$ on $\graphs{n}$ with
\[
	\graphDistribution{n}{\vertexProbabilitySpace}{\edgeProbability}[\graph] = \bigintsss_{\vertexSpace^n} \left(\prod_{\{i, j\} \in E} \edgeProbability[x_i][x_j]\right) \cdot \left(\prod_{\{i, j\} \in \binom{[n]}{2} \setminus E} \left(1 - \edgeProbability[x_i][x_j]\right)\right) \vertexDistribution^n(\intD \vectorize{x}) 
\] 
for all $\graph \in \graphs{n}$, where $\vectorize{x} = (x_i)_{i \in [n]}$ inside the integral.

To apply \Cref{thm:selfbounded_concentration_simplified} to partition functions on random graphs from $\graphDistribution{n}{\vertexProbabilitySpace}{\edgeProbability}$, we will need to bound how much the partition function changes when applying small modifications to the structure of a graph.
More specifically, we want to get a bound on the relative change of the partition function, given that we
\begin{itemize}
	\item add or remove a single edge, or
	\item add or remove a set of edges that are all incident to the same vertex.
\end{itemize} 
The following two lemmas provide such bounds.
\begin{lemma}
	\label{lemma:remove_edge}
	Let $\graph = (\vertices, \edges)$ be an undirected graph and, for any $e \in \edges$ let $\graph' = (\vertices, \edges \setminus \{e\})$.
	For all $\fugacity \in \R_{\ge 0}$ and $\edgeInteraction \in [0, 1]$ it holds that 
	\[
		0 \le \partitionFunction{\graph'}[\fugacity][\edgeInteraction] - \partitionFunction{\graph}[\fugacity][\edgeInteraction] \le \fugacity^2 	\partitionFunction{\graph}[\fugacity][\edgeInteraction]
	\]
	and especially
	\[
		\absolute{\partitionFunction{\graph'}[\fugacity][\edgeInteraction] - \partitionFunction{\graph}[\fugacity][\edgeInteraction]} \le \fugacity^2 	\min\{\partitionFunction{\graph}[\fugacity][\edgeInteraction], \partitionFunction{\graph'}[\fugacity][\edgeInteraction]\}.
		\qedhere
	\]
\end{lemma}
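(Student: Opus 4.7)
I would write $e = \{u, v\}$ and partition $\spinConfigurations{\graph} = \spinConfigurations{\graph'}$ according to whether $\spinConfiguration[u] = \spinConfiguration[v] = 1$. Configurations with $\spinConfiguration[u] = 0$ or $\spinConfiguration[v] = 0$ satisfy $\countEdges{\graph}{1}{\spinConfiguration} = \countEdges{\graph'}{1}{\spinConfiguration}$ and cancel in the difference $\partitionFunction{\graph'}[\fugacity][\edgeInteraction] - \partitionFunction{\graph}[\fugacity][\edgeInteraction]$. For configurations with $\spinConfiguration[u] = \spinConfiguration[v] = 1$, removing $e$ drops exactly one monochromatic-$1$ edge, so $\countEdges{\graph'}{1}{\spinConfiguration} = \countEdges{\graph}{1}{\spinConfiguration} - 1 \ge 0$. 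This yields the exact identity
\begin{align*}
\partitionFunction{\graph'}[\fugacity][\edgeInteraction] - \partitionFunction{\graph}[\fugacity][\edgeInteraction]
= (1 - \edgeInteraction) \sum_{\substack{\spinConfiguration \in \spinConfigurations{\graph}\\ \spinConfiguration[u] = \spinConfiguration[v] = 1}} \fugacity^{\countOnes{\spinConfiguration}} \edgeInteraction^{\countEdges{\graph}{1}{\spinConfiguration} - 1},
\end{align*}
which is non-negative because $\edgeInteraction \in [0, 1]$, establishing the first inequality.

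For the upper bound, I will compare summands via a spin-flip map $\phi$ from $\{\spinConfiguration \in \spinConfigurations{\graph} : \spinConfiguration[u] = \spinConfiguration[v] = 1\}$ into $\spinConfigurations{\graph}$, defined by flipping the spins at $u$ and $v$ to $0$ while leaving all other spins unchanged. Clearly $\phi$ is injective, and writing $\spinConfigurationModified = \phi(\spinConfiguration)$, we have $\countOnes{\spinConfigurationModified} = \countOnes{\spinConfiguration} - 2$. Moreover, flipping $u$ and $v$ from $1$ to $0$ destroys every monochromatic-$1$ edge incident to $u$ or $v$ in $\spinConfiguration$, in particular $e$ itself, hence $\countEdges{\graph}{1}{\spinConfigurationModified} \le \countEdges{\graph}{1}{\spinConfiguration} - 1$. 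Combined with $\edgeInteraction \in [0, 1]$ and $1 - \edgeInteraction \le 1$, this gives, for every such $\spinConfiguration$,
\begin{align*}
(1 - \edgeInteraction)\, \fugacity^{\countOnes{\spinConfiguration}} \edgeInteraction^{\countEdges{\graph}{1}{\spinConfiguration} - 1}
\le \fugacity^2 \cdot \fugacity^{\countOnes{\spinConfigurationModified}} \edgeInteraction^{\countEdges{\graph}{1}{\spinConfigurationModified}}.
\end{align*}
Summing over the domain of $\phi$ and using injectivity to bound the total by $\fugacity^2 \partitionFunction{\graph}[\fugacity][\edgeInteraction]$ yields $\partitionFunction{\graph'}[\fugacity][\edgeInteraction] - \partitionFunction{\graph}[\fugacity][\edgeInteraction] \le \fugacity^2 \partitionFunction{\graph}[\fugacity][\edgeInteraction]$.

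The ``$\min$''-form of the inequality is then free: since we already have $\partitionFunction{\graph}[\fugacity][\edgeInteraction] \le \partitionFunction{\graph'}[\fugacity][\edgeInteraction]$, the minimum equals $\partitionFunction{\graph}[\fugacity][\edgeInteraction]$, and $\absolute{\partitionFunction{\graph'}[\fugacity][\edgeInteraction] - \partitionFunction{\graph}[\fugacity][\edgeInteraction]}$ equals the quantity already bounded by $\fugacity^2 \partitionFunction{\graph}[\fugacity][\edgeInteraction]$. No step is genuinely difficult; the only minor subtlety is the convention $0^0 = 1$ in the hard-core case $\edgeInteraction = 0$, where the pointwise comparison still goes through because a summand with $\countEdges{\graph}{1}{\spinConfiguration} > 1$ vanishes on both sides, and a summand with $\countEdges{\graph}{1}{\spinConfiguration} = 1$ maps under $\phi$ to a configuration with $\countEdges{\graph}{1}{\spinConfigurationModified} = 0$.
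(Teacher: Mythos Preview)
Your proof is correct and follows essentially the same approach as the paper: partition configurations by the spins at the endpoints of $e$, observe that only the $\spinConfiguration[u]=\spinConfiguration[v]=1$ terms contribute to the difference, and bound those via the spin-flip map sending both endpoints to $0$. Your version is slightly tidier in that you first write down the exact identity for $\partitionFunction{\graph'}[\fugacity][\edgeInteraction] - \partitionFunction{\graph}[\fugacity][\edgeInteraction]$ and then bound it, whereas the paper bounds the two pieces of $\partitionFunction{\graph'}[\fugacity][\edgeInteraction]$ separately to get $\partitionFunction{\graph'}[\fugacity][\edgeInteraction] \le (1+\fugacity^2)\partitionFunction{\graph}[\fugacity][\edgeInteraction]$; but the core combinatorial step is identical.
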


\begin{proof}
	Without loss of generality, assume $\vertices = [n]$ for some $n \in \N_{\ge 2}$ and let $e = \{i, j\}$ for $i, j \in [n]$.
	Note that $\spinConfigurations{\graph} = \spinConfigurations{\graph'}$, as their vertex sets are identical.
	Further, observe that, for all $\spinConfiguration \in \spinConfigurations{\graph}$, it holds that $\countEdges{\graph}{1}{\spinConfiguration} \ge \countEdges{\graph'}{1}{\spinConfiguration}$.
	Thus, we have $\edgeInteraction^{\countEdges{\graph}{1}{\spinConfiguration}} \le \edgeInteraction^{\countEdges{\graph'}{1}{\spinConfiguration}}$ and $\partitionFunction{\graph}[\fugacity][\edgeInteraction] \le \partitionFunction{\graph'}[\fugacity][\edgeInteraction]$, which proves 
	\[
		0 \le \partitionFunction{\graph'}[\fugacity][\edgeInteraction] - \partitionFunction{\graph}[\fugacity][\edgeInteraction] .
	\]
	We proceed by rewriting the partition function of $\graph'$ as
	\[
		\partitionFunction{\graph'}[\fugacity][\edgeInteraction] = \sum_{\substack{\spinConfiguration \in \spinConfigurations{\graph'}:\\ \spinConfiguration[i] = 0 \text{ or } \spinConfiguration[j] = 0}} \fugacity^{\countOnes{\spinConfiguration}} \edgeInteraction^{\countEdges{\graph'}{1}{\spinConfiguration}} + \sum_{\substack{\spinConfiguration \in \spinConfigurations{\graph'}:\\ \spinConfiguration[i] = \spinConfiguration[j] = 1}} \fugacity^{\countOnes{\spinConfiguration}} \edgeInteraction^{\countEdges{\graph'}{1}{\spinConfiguration}}.
	\]
	Observe that
	\[
		\sum_{\substack{\spinConfiguration \in \spinConfigurations{\graph'}:\\ \spinConfiguration[i] = 0 \text{ or } \spinConfiguration[j] = 0}}  \fugacity^{\countOnes{\spinConfiguration}} \edgeInteraction^{\countEdges{\graph'}{1}{\spinConfiguration}} 
		= \sum_{\substack{\spinConfiguration \in \spinConfigurations{\graph}:\\ \spinConfiguration[i] = 0 \text{ or } \spinConfiguration[j] = 0}}  \fugacity^{\countOnes{\spinConfiguration}} \edgeInteraction^{\countEdges{\graph}{1}{\spinConfiguration}}
	 	\le \partitionFunction{\graph}[\fugacity][\edgeInteraction] .
	\]
	For every $k \in [n]$, let $\neighbors{\graph'}{k}$ denote the neighbors of vertex $k$ in $\graph'$.
	We have
	\begin{align*}
		\sum_{\substack{\spinConfiguration \in \spinConfigurations{\graph'}:\\ \spinConfiguration[i] = \spinConfiguration[j] = 1}} \fugacity^{\countOnes{\spinConfiguration}} \edgeInteraction^{\countEdges{\graph'}{1}{\spinConfiguration}}
		&= \sum_{\substack{\spinConfiguration \in \spinConfigurations{\graph'}:\\ \spinConfiguration[i] = \spinConfiguration[j] = 0}} \fugacity^{\countOnes{\spinConfiguration}+2}   \edgeInteraction^{\countEdges{\graph'}{1}{\spinConfiguration}} \edgeInteraction^{\sum_{k \in \neighbors{\graph'}{i}} \spinConfiguration[k]} \edgeInteraction^{\sum_{k \in \neighbors{\graph'}{j}} \spinConfiguration[k]} \\
		&\le \fugacity^2 \sum_{\substack{\spinConfiguration \in \spinConfigurations{\graph'}:\\ \spinConfiguration[i] = \spinConfiguration[j] = 0}} \fugacity^{\countOnes{\spinConfiguration}}   \edgeInteraction^{\countEdges{\graph'}{1}{\spinConfiguration}} \\
		&= \fugacity^2 \sum_{\substack{\spinConfiguration \in \spinConfigurations{\graph}:\\ \spinConfiguration[i] = \spinConfiguration[j] = 0}} \fugacity^{\countOnes{\spinConfiguration}}   \edgeInteraction^{\countEdges{\graph}{1}{\spinConfiguration}} \\
		&\le \fugacity^2 \partitionFunction{\graph}[\fugacity][\edgeInteraction] .
	\end{align*}
	We conclude that $\partitionFunction{\graph'}[\fugacity][\edgeInteraction] \le \left(1 + \fugacity^2\right) \partitionFunction{\graph}[\fugacity][\edgeInteraction]$ and thus
	\[
		\partitionFunction{\graph'}[\fugacity][\edgeInteraction] - \partitionFunction{\graph}[\fugacity][\edgeInteraction]
		\le \fugacity^2 \partitionFunction{\graph}[\fugacity][\edgeInteraction] .
	\]
	The upper bound on $\absolute{\partitionFunction{\graph'}[\fugacity][\edgeInteraction] - \partitionFunction{\graph}[\fugacity][\edgeInteraction]}$ follows immediately.
\end{proof}

\begin{lemma}
	\label{lemma:add_vertex}
	Let $\graph = (\vertices, \edges)$ be an undirected graph and without loss of generality assume $\vertices = [n]$ for $n \in \N$.
	Let $\edges_{H}, \edges_{H'} \subseteq \{\{n+1, i\} \mid i \in [n]\}$, and set $H=([n+1], \edges \cup \edges_{H})$ and $H'=([n+1], \edges \cup \edges_{H'})$.
	For all $\fugacity \in \R_{\ge 0}$ and $\edgeInteraction \in [0, 1]$ it holds that 
	\[
		\absolute{\partitionFunction{H}[\fugacity][\edgeInteraction] - \partitionFunction{H'}[\fugacity][\edgeInteraction]} \le \fugacity  \partitionFunction{\graph}[\fugacity][\edgeInteraction]
		\le \fugacity \min\{\partitionFunction{H}[\fugacity][\edgeInteraction], \partitionFunction{H'}[\fugacity][\edgeInteraction]\} .
		\qedhere
	\]
\end{lemma}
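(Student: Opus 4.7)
The plan is to decompose both $\partitionFunction{H}[\fugacity][\edgeInteraction]$ and $\partitionFunction{H'}[\fugacity][\edgeInteraction]$ according to the value of the added vertex $n+1$, and then compare each piece to $\partitionFunction{\graph}[\fugacity][\edgeInteraction]$. This mirrors the decomposition used in \Cref{lemma:remove_edge}, but now the new edges are all incident to a common vertex, which is precisely what makes both inequalities fall out cleanly.

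\textbf{Step 1: split by the spin of the new vertex.} For $K \in \{H, H'\}$, partition $\spinConfigurations{K}$ according to whether $\spinConfiguration[n+1] = 0$ or $\spinConfiguration[n+1] = 1$:
\[
\partitionFunction{K}[\fugacity][\edgeInteraction] = \sum_{\substack{\spinConfiguration \in \spinConfigurations{K}:\\ \spinConfiguration[n+1] = 0}} \fugacity^{\countOnes{\spinConfiguration}} \edgeInteraction^{\countEdges{K}{1}{\spinConfiguration}} + \sum_{\substack{\spinConfiguration \in \spinConfigurations{K}:\\ \spinConfiguration[n+1] = 1}} \fugacity^{\countOnes{\spinConfiguration}} \edgeInteraction^{\countEdges{K}{1}{\spinConfiguration}}.
\]
The first sum equals $\partitionFunction{\graph}[\fugacity][\edgeInteraction]$ for both $K = H$ and $K = H'$, since any edge in $\edges_{H}$ or $\edges_{H'}$ is incident to $n+1$ and therefore contributes nothing to $\countEdges{K}{1}{\spinConfiguration}$ when $\spinConfiguration[n+1]=0$; what remains is exactly the partition function on $\graph$.

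\textbf{Step 2: control the $\spinConfiguration[n+1]=1$ part.} Writing $\spinConfiguration'$ for the restriction of $\spinConfiguration$ to $[n]$, for the second sum we have
\[
\sum_{\substack{\spinConfiguration \in \spinConfigurations{K}:\\ \spinConfiguration[n+1] = 1}} \fugacity^{\countOnes{\spinConfiguration}} \edgeInteraction^{\countEdges{K}{1}{\spinConfiguration}} = \fugacity \sum_{\spinConfiguration' \in \spinConfigurations{\graph}} \fugacity^{\countOnes{\spinConfiguration'}} \edgeInteraction^{\countEdges{\graph}{1}{\spinConfiguration'}} \edgeInteraction^{\sum_{\{n+1, i\} \in \edges_{K}} \spinConfiguration'[i]}.
\]
Because $\edgeInteraction \in [0, 1]$, the final factor lies in $[0,1]$, so this whole expression lies in $[0, \fugacity \partitionFunction{\graph}[\fugacity][\edgeInteraction]]$. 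Combining with Step~1 yields, for each $K \in \{H, H'\}$,
\[
\partitionFunction{\graph}[\fugacity][\edgeInteraction] \le \partitionFunction{K}[\fugacity][\edgeInteraction] \le (1 + \fugacity) \partitionFunction{\graph}[\fugacity][\edgeInteraction].
\]

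\textbf{Step 3: conclude.} From the lower bound $\partitionFunction{\graph}[\fugacity][\edgeInteraction] \le \min\{\partitionFunction{H}[\fugacity][\edgeInteraction], \partitionFunction{H'}[\fugacity][\edgeInteraction]\}$, we immediately obtain $\fugacity \partitionFunction{\graph}[\fugacity][\edgeInteraction] \le \fugacity \min\{\partitionFunction{H}[\fugacity][\edgeInteraction], \partitionFunction{H'}[\fugacity][\edgeInteraction]\}$. For the first inequality, both $\partitionFunction{H}[\fugacity][\edgeInteraction] - \partitionFunction{\graph}[\fugacity][\edgeInteraction]$ and $\partitionFunction{H'}[\fugacity][\edgeInteraction] - \partitionFunction{\graph}[\fugacity][\edgeInteraction]$ lie in the interval $[0, \fugacity \partitionFunction{\graph}[\fugacity][\edgeInteraction]]$, so their difference $\partitionFunction{H}[\fugacity][\edgeInteraction] - \partitionFunction{H'}[\fugacity][\edgeInteraction]$ is bounded in absolute value by $\fugacity \partitionFunction{\graph}[\fugacity][\edgeInteraction]$, giving the claim.

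No real obstacle here: the only care needed is the observation that all perturbed edges share the endpoint $n+1$, so conditioning on $\spinConfiguration[n+1]$ cleanly isolates all the dependence on $\edges_{H}$ versus $\edges_{H'}$, producing a multiplicative slack of exactly $\fugacity$ rather than the $\fugacity^{2}$ factor from \Cref{lemma:remove_edge}.
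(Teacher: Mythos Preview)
Your proof is correct and follows essentially the same approach as the paper: both establish the sandwich $\partitionFunction{\graph}[\fugacity][\edgeInteraction] \le \partitionFunction{K}[\fugacity][\edgeInteraction] \le (1+\fugacity)\partitionFunction{\graph}[\fugacity][\edgeInteraction]$ for $K \in \{H, H'\}$ and conclude. The only minor difference is that the paper derives this sandwich by invoking the edge-monotonicity of \Cref{lemma:remove_edge} and considering the extremal choices $\edges_{H} = \emptyset$ and $\edges_{H} = \{\{n+1,i\} \mid i \in [n]\}$, whereas you obtain it directly by splitting on $\spinConfiguration[n+1]$; your version is slightly more self-contained.
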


\begin{proof}
	By \Cref{lemma:remove_edge}, we know that removing an edge from a graph doesn't decrease the partition function.
	Thus, $\partitionFunction{H}[\fugacity][\edgeInteraction]$ is maximized by choosing $\edges_{H} = \emptyset$ and minimized by choosing $\edges_{H} = \{\{n+1, i\} \mid i \in [n]\}$.
	Consequently, we have
	\[
		\partitionFunction{H}[\fugacity][\edgeInteraction] \le (1 + \fugacity) \partitionFunction{\graph}[\fugacity][\edgeInteraction]
	\] 
	and 
	\[
		\partitionFunction{H}[\fugacity][\edgeInteraction] 
		\ge  \partitionFunction{\graph}[\fugacity][\edgeInteraction] + \fugacity
		\ge \partitionFunction{\graph}[\fugacity][\edgeInteraction] .
	\] 
	As the same holds for $H'$, we obtain
	\[
		\absolute{\partitionFunction{H}[\fugacity][\edgeInteraction] - \partitionFunction{H'}[\fugacity][\edgeInteraction]} \le \fugacity \partitionFunction{\graph}[\fugacity][\edgeInteraction] 
	\]
	and the claim follows by noting that $\partitionFunction{\graph}[\fugacity][\edgeInteraction] \le \min\{\partitionFunction{H}[\fugacity][\edgeInteraction], \partitionFunction{H'}[\fugacity][\edgeInteraction]\}$.
\end{proof}

Based on \Cref{lemma:add_vertex,lemma:remove_edge}, we use \Cref{thm:selfbounded_concentration_simplified} to prove the following statement. 

\begin{theorem}
	\label{thm:concentration_partition_function}
	Let $\edgeProbability$ be a graphon on the probability space $\vertexProbabilitySpace = (\vertexSpace, \vertexSigmaAlgebra, \vertexDistribution)$.
	Let $\fugacity: \N_{\ge 1} \to \R_{\ge 0}$ such that $\fugacity[n] \le \initialFugacity n^{-\frac{1 + \alpha}{2}}$ for some $\initialFugacity \in \R_{\ge 0}$ and $\alpha \in \R_{>0}$.
	For all $\edgeInteraction \in [0, 1]$, $\error \in \R_{>0}$, $n > \initialFugacity^{\frac{2}{\alpha}}$ and $\graph \sim \graphDistribution{n}{\vertexProbabilitySpace}{\edgeProbability}$ it holds that
	\[
		\Pr{\absolute{\partitionFunction{\graph}[\fugacity[n]][\edgeInteraction] - \E{\partitionFunction{\graph}[\fugacity[n]][\edgeInteraction]}} \ge \error \E{\partitionFunction{\graph}[\fugacity[n]][\edgeInteraction]}}[][\big] \le \frac{\initialFugacity^2}{\left(n^{\alpha} - \initialFugacity^2\right) \error^2} .
		\qedhere
	\]
\end{theorem}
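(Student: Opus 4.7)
The plan is to apply \Cref{thm:selfbounded_concentration_simplified} after representing $\partitionFunction{\graph}[\fugacity[n]][\edgeInteraction]$ as a function of independent random variables. I would encode a graph $\graph \sim \graphDistribution{n}{\vertexProbabilitySpace}{\edgeProbability}$ through $n$ independent samples $x_i \sim \vertexDistribution$ together with $\binom{n}{2}$ independent uniform random variables $y_{i,j}$ on $[0, 1]$, placing the edge $\{i, j\}$ if and only if $y_{i,j} \le \edgeProbability[x_i][x_j]$. By construction, the resulting graph is distributed as $\graphDistribution{n}{\vertexProbabilitySpace}{\edgeProbability}$, so $f(\vectorize{x}, \vectorize{y}) \coloneqq \partitionFunction{\graph}[\fugacity[n]][\edgeInteraction]$ is a measurable function on a product of $n + \binom{n}{2}$ probability spaces, which is exactly the setting of \Cref{thm:selfbounded_concentration_simplified}.

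The heart of the argument is then to produce suitable coefficients $c_i$ bounding the relative change of $f$ under single-coordinate perturbations. Modifying one $y_{i,j}$ can add or delete only the edge $\{i, j\}$, so \Cref{lemma:remove_edge} yields the choice $c_{y_{i,j}} = \fugacity[n]^2$. Modifying one $x_i$ can alter an arbitrary subset of the edges incident to vertex $i$ but no other edges, so \Cref{lemma:add_vertex} (with the subgraph induced on $[n] \setminus \{i\}$ playing the role of the fixed base graph) gives $c_{x_i} = \fugacity[n]$. Both lemmas feature $\min\{\cdot, \cdot\}$ on the right-hand side, which is precisely the two-sided form required by \Cref{thm:selfbounded_concentration_simplified}; checking this match-up is really the only subtle point in the reduction, since standard bounded-difference concentration would only control absolute changes.

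The remaining calculation is direct. Summing squared coefficients yields
\[
	C = n \fugacity[n]^2 + \binom{n}{2} \fugacity[n]^4 .
\]
Substituting $\fugacity[n]^2 \le \initialFugacity^2 n^{-(1 + \alpha)}$ bounds the first summand by $\initialFugacity^2 n^{-\alpha}$ and the second by $\tfrac{1}{2}(\initialFugacity^2 n^{-\alpha})^2$; the hypothesis $n > \initialFugacity^{2/\alpha}$ ensures $\initialFugacity^2 n^{-\alpha} < 1$, which makes the second summand at most half of the first, hence $C \le 2 \initialFugacity^2 n^{-\alpha} < 2$. Plugging this into \Cref{thm:selfbounded_concentration_simplified} and using the monotonicity of $x \mapsto x/(2 - x)$ on $[0, 2)$ produces the advertised tail bound $\frac{\initialFugacity^2}{(n^\alpha - \initialFugacity^2) \error^2}$. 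I do not foresee any substantive obstacle beyond carefully verifying this bookkeeping and the two-sided form of the perturbation estimates.
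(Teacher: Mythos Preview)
Your proposal is correct and follows essentially the same approach as the paper: the same encoding of $\graph$ via $(\vectorize{x},\vectorize{y})$, the same use of \Cref{lemma:remove_edge} and \Cref{lemma:add_vertex} to obtain the coefficients $\fugacity[n]^2$ and $\fugacity[n]$, and the same bound $C \le 2\initialFugacity^2 n^{-\alpha} < 2$ before invoking \Cref{thm:selfbounded_concentration_simplified}. The only cosmetic difference is the direction of the inequality defining edge presence, which is immaterial.
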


\begin{proof}
	We aim for applying \Cref{thm:selfbounded_concentration_simplified} to prove our claim.
	To this end, for each $n \in \N_{\ge 1}$ we need to write the partition function $\partitionFunction{\graph}[\fugacity[n]][\edgeInteraction]$ for $\graph \sim \graphDistribution{n}{\vertexProbabilitySpace}{\edgeProbability}$ as a function on a product of $\sigma$-finite probability spaces.
	At first, an obvious choice seems to be $\vertexProbabilitySpace^{n}$ together with $\binom{n}{2}$ additional binary random variables, one for each potential edge $\{i, j\} \in \binom{[n]}{2}$.
	However, note that the edges might not necessarily be independent, meaning that the resulting product distribution would not resemble  $\graphDistribution{n}{\vertexProbabilitySpace}{\edgeProbability}$.
	Instead, let $\uniformProbabilitySpace = ([0, 1], \Borel[[0, 1]], \uniformDistribution)$, where $\Borel[[0, 1]]$ is the Borel algebra restricted to $[0, 1]$ and $\uniformDistribution$ is the uniform distribution on that interval. 
	We consider the probability space $\vertexProbabilitySpace^{n} \tensor \uniformProbabilitySpace^{\binom{n}{2}}$.
	
	For $\vectorize{x} \in \vertexSpace^{n}$ and $\vectorize{y} \in [0, 1]^{\binom{n}{2}}$ let $\vectorize{x} \concat \vectorize{y} \in \vertexSpace^{n} \times [0, 1]^{\binom{n}{2}}$ denote the concatenation of $\vectorize{x}$ and $\vectorize{y}$.
	We construct a measurable function $g: \vertexSpace^n \times [0, 1]^{\binom{n}{2}} \to \graphs{n}$ by mapping every $\vectorize{z} = \vectorize{x} \concat \vectorize{y} \in \vertexSpace^n \times [0, 1]^{\binom{n}{2}}$ with $\vectorize{x} = (x_i)_{i \in [n]} \in \vertexSpace^n$ and $\vectorize{y} = (y_{i, j})_{1 \le i < j \le n} \in [0, 1]^{\binom{n}{2}}$ to $g(\vectorize{z}) = ([n], \edges)$ such that, for all $i<j$, it holds that $\{i, j\} \in \edges$ if and only if $\edgeProbability[x_i][x_j] \ge y_{i, j}$.
	Simple calculations show that, for $\vectorize{z} \sim \vertexDistribution^n \times \uniformDistribution^{\binom{n}{2}}$, it holds that $g(z) \sim \graphDistribution{n}{\vertexProbabilitySpace}{\edgeProbability}$.
	Now, let $f: \vertexSpace^n \times [0, 1]^{\binom{n}{2}} \to \R$ with $\vectorize{z} \mapsto \partitionFunction{g(z)}[\fugacity[n]][\edgeInteraction]$.
	In order to apply \Cref{thm:selfbounded_concentration_simplified}, we need to bound the relative change of $f(\vectorize{z})$ if we change one component of $\vectorize{z}$.
	Let $\vectorize{x'} = (x_1, \cdots, x_{i-1}, x_{i}', x_{i+1}, \dots, x_{n}) \in \vertexSpace^n$ for any $i \in [n]$.
	Then $g(\vectorize{x'} \concat \vectorize{y})$ can only differ from $g(\vectorize{z})$ on edges that are incident to vertex $i$. 
	Thus, by \Cref{lemma:add_vertex}, it holds that
	\[
		\absolute{f(\vectorize{z}) - f(\vectorize{x'} \concat \vectorize{y})} \le \fugacity[n] \min\{f(\vectorize{z}), f(\vectorize{x'} \concat \vectorize{y})\}.
	\] 
	Now, let $\vectorize{y'} = (y_{i, j}' )_{1 \le i < j \le n} \in [0, 1]^{\binom{n}{2}}$ such that $y_{i, j}' = y_{i, j}$ except for one pair $1 \le i < j \le n$.
	Note that $g(\vectorize{z})$ and $g(\vectorize{x} \concat \vectorize{y'})$ differ by at most one edge. 
	By \Cref{lemma:remove_edge}, we have
	\[
		\absolute{f(\vectorize{z}) - f(\vectorize{x} \concat \vectorize{y'})} \le \fugacity[n]^2 \min\{f(\vectorize{z}), f(\vectorize{x} \concat \vectorize{y'})\} .
	\] 
	Furthermore, note that for $\fugacity[n] \le \initialFugacity n^{-\frac{1 + \alpha}{2}}$ and $n > \initialFugacity^{\frac{2}{\alpha}}$ it holds that
	\[
		C = n \fugacity[n]^2 + \binom{n}{2} \fugacity[n]^4 
		\le \initialFugacity^2 n^{-\alpha} + \initialFugacity^4 n^{-2 \alpha}
		\le 2 \initialFugacity^2 n^{-\alpha}
		< 2 .
	\]
	Thus, by \Cref{thm:selfbounded_concentration_simplified} we obtain
	\begin{align*}
		\Pr{\absolute{\partitionFunction{\graph}[\fugacity[n]][\edgeInteraction] - \E{\partitionFunction{\graph}[\fugacity[n]][\edgeInteraction]}} \ge \error \E{\partitionFunction{\graph}[\fugacity[n]][\edgeInteraction]}}[][\big] 
		& \le \left(\frac{2}{2 - C} - 1\right) \frac{1}{\error^2} \\
		&\le \left(\frac{1}{1 - \initialFugacity^2 n^{-\alpha}} - 1\right) \frac{1}{\error^2} \\
		&= \frac{\initialFugacity^2}{\left(n^{\alpha} - \initialFugacity^2\right) \error^2} ,
	\end{align*}  
	which concludes the proof.
\end{proof}

\Cref{thm:concentration_partition_function_simplified} follows immediately from \Cref{thm:concentration_partition_function}.

\concentrationPartitionFunctionSimplified*

\begin{proof}
	For $\error \le 1$ and $\errorProb \le 1$ it holds that $n \ge \left(2 \initialFugacity^2 \error^{-2} \errorProb^{-1}\right)^{\frac{1}{\alpha}} > \initialFugacity^{\frac{2}{\alpha}}$.
	Applying \Cref{thm:concentration_partition_function} yields
	\begin{align*}
		\Pr{\absolute{\partitionFunction{\graph}[\fugacity[n]][\edgeInteraction] - \E{\partitionFunction{\graph}[\fugacity[n]][\edgeInteraction]}} \ge \error \E{\partitionFunction{\graph}[\fugacity[n]][\edgeInteraction]}}[][\big] 
		&\le \frac{1}{(2 \error^{-2} \errorProb^{-1} - 1) \error^2} \\
		&= \frac{\error^2 \errorProb}{(2 - \error^2 \errorProb) \error^2} \\
		&= \frac{\errorProb}{2 - \error^2 \errorProb} \\
		&\le \errorProb. 
		\qedhere 
	\end{align*}
	
\end{proof}

	\section{Application to repulsive Gibbs point processes}\label{sec:approximation}
We use our concentration results for antiferromagnetic spin systems to relate repulsive Gibbs point processes to a hard-core model on carefully constructed classes of random graphs.
To this end, let $(\pointProcessSpace, \dist)$ be a complete, separable metric space, let $\Borel = \Borel[\pointProcessSpace]$ be the Borel algebra and let $\volumeMeasure$ be a locally finite reference measure on $(\pointProcessSpace, \Borel)$.
For every bounded and measurable $\region \subseteq \pointProcessSpace$ we define a probability space $\canonicalProbabilitySpace{\region} = (\region, \BorelOn{\region}, \uniformDistributionOn{\region})$, where $\BorelOn{\region}$ denotes the restriction of $\Borel$ to $\region$ and $\uniformDistributionOn{\region}$ is the probability measure on $(\region, \BorelOn{\region})$ that is defined via the constant density $\frac{1}{\volumeMeasure[\region]}$ with respect to $\volumeMeasure$ restricted to $\region$.
For every symmetric, repulsive and measurable pair potential function $\potential: \pointProcessSpace^2 \to \R_{\ge 0} \cup \{\infty\}$ and all $n \in \N_{\ge 1}$.
Define $\canonicalEdgeProbability{\potential}: \region^2 \to [0, 1]$ with $\canonicalEdgeProbability{\potential}[x][y] = 1 - \eulerE^{- \potential[x][y]}$ and observe that $\canonicalEdgeProbability{\potential}$ is a graphon on $\canonicalProbabilitySpace{\region}$. We proceed by considering the random graph model $\canonicalDistribution{n}{\region}{\potential} = \graphDistribution{n}{\canonicalProbabilitySpace{\region}}{\canonicalEdgeProbability{\potential}}$.

The following lemma relates the expected hard-core partition function on $\canonicalDistribution{n}{\region}{\potential}$ with the partition function of the continuous Gibbs point process $\gppPartitonFunction{\region}[\gppFugacity][\potential]$.

\begin{lemma}
	\label{lemma:bound_hc_expectation}
	Let $(\pointProcessSpace, \dist)$ be a complete separable metric space, let $\Borel = \Borel[\pointProcessSpace]$ be the Borel algebra and let $\volumeMeasure$ be a locally finite reference measure on $(\pointProcessSpace, \Borel)$.
	Let $\region \subseteq \pointProcessSpace$ be bounded and measurable, let $\gppFugacity \in \R_{\ge 0}$ and let $\potential: \pointProcessSpace^2 \to \R_{\ge 0} \cup \{\infty\}$ be a symmetric repulsive potential.
	For all $\error \in \R_{>0}$ and $n \ge 2 \error^{-1} \max\left\{\eulerE^6 \gppFugacity^2 \volumeMeasure[\region]^2, \ln\left(2 \error^{-1}\right)^2\right\}$ it holds that
	\[
		\left(1 - \error\right) \gppPartitonFunction{\region}[\gppFugacity][\potential]
		\le  \EWrt{\hcPartitionFunction{\graph}[\frac{\gppFugacity \volumeMeasure[\region]}{n}]}[\graph \sim \canonicalDistribution{n}{\region}{\potential}]
		\le  \gppPartitonFunction{\region}[\gppFugacity][\potential].
		\qedhere
	\]
\end{lemma}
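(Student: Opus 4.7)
The plan is to expand $\E{\hcPartitionFunction{\graph}[\fugacity[n]]}$ as a sum over independent-set sizes and compare it termwise with the cluster expansion of $\gppPartitonFunction{\region}[\gppFugacity][\potential]$. First I would condition on the uniformly drawn positions $\vectorize{x}=(x_1,\dots,x_n)\sim\uniformDistributionOn{\region}^{n}$. By construction of $\canonicalDistribution{n}{\region}{\potential}$, each potential edge $\{i,j\}$ is present independently with probability $1-\eulerE^{-\potential[x_i][x_j]}$, so the conditional probability that a fixed $S\in\binom{[n]}{k}$ is independent in $\graph$ is $\prod_{\{i,j\}\in\binom{S}{2}}\eulerE^{-\potential[x_i][x_j]}$. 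Summing over the $\binom{n}{k}$ choices of $S$, using exchangeability of the $x_i$'s, and substituting $\fugacity[n]=\gppFugacity\volumeMeasure[\region]/n$ gives
\[
	\E{\hcPartitionFunction{\graph}[\fugacity[n]]}=\sum_{k=0}^{n}\rho_{n,k}\cdot\frac{\gppFugacity^{k}}{k!}\int_{\region^{k}}\eulerE^{-\hamiltonian[\vectorize{x}]}\productVolumeMeasure{k}[\intD\vectorize{x}],
\]
where $\rho_{n,k}=\binom{n}{k}k!/n^{k}=\prod_{i=0}^{k-1}(1-i/n)\in[0,1]$. This matches the cluster expansion of $\gppPartitonFunction{\region}[\gppFugacity][\potential]$ term by term up to the factor $\rho_{n,k}\le 1$ and the truncation at $k=n$, which immediately yields the upper bound $\E{\hcPartitionFunction{\graph}[\fugacity[n]]}\le\gppPartitonFunction{\region}[\gppFugacity][\potential]$.

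For the lower bound, write $\Lambda=\gppFugacity\volumeMeasure[\region]$ and $a_{k}=\frac{\gppFugacity^{k}}{k!}\int_{\region^{k}}\eulerE^{-\hamiltonian[\vectorize{x}]}\productVolumeMeasure{k}[\intD\vectorize{x}]$, and pick a cutoff $m$ satisfying both $m\ge\eulerE^{2}\Lambda$ and $m\ge\ln(2/\error)$. Splitting the difference at this cutoff yields
\[
	\gppPartitonFunction{\region}[\gppFugacity][\potential]-\E{\hcPartitionFunction{\graph}[\fugacity[n]]}=\sum_{k=0}^{n}(1-\rho_{n,k})a_{k}+\sum_{k>n}a_{k}\le\frac{m^{2}}{2n}\sum_{k=0}^{m}a_{k}+\sum_{k>m}a_{k},
\]
where the bound $1-\rho_{n,k}\le k(k-1)/(2n)\le m^{2}/(2n)$ for $k\le m$ is the standard consequence of $\prod_{i}(1-a_{i})\ge 1-\sum_{i}a_{i}$. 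The first piece is controlled by $\frac{m^{2}}{2n}\gppPartitonFunction{\region}[\gppFugacity][\potential]$. For the tail, repulsiveness of $\potential$ gives $a_{k}\le\Lambda^{k}/k!$, and the Stirling-type bound $k!\ge(k/\eulerE)^{k}$ combined with $k\ge m\ge\eulerE^{2}\Lambda$ yields $\Lambda^{k}/k!\le\eulerE^{-k}$; summing the resulting geometric series gives $\sum_{k>m}a_{k}\le\eulerE^{-m}/(\eulerE-1)\le\eulerE^{-m}\le\error/2$ by the second constraint on $m$.

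Using $\gppPartitonFunction{\region}[\gppFugacity][\potential]\ge 1$ to convert the additive $\error/2$ into a relative error, the hypothesis $n\ge 2\error^{-1}\max\{\eulerE^{6}\Lambda^{2},\ln(2/\error)^{2}\}$ leaves enough slack to choose an $m$ that meets both thresholds while ensuring $m^{2}/(2n)\le\error/2$, yielding $\gppPartitonFunction{\region}[\gppFugacity][\potential]-\E{\hcPartitionFunction{\graph}[\fugacity[n]]}\le\error\gppPartitonFunction{\region}[\gppFugacity][\potential]$, which is the claimed lower bound. The main obstacle is not conceptual but the constant bookkeeping in this last step: taking $m$ slightly larger than $\eulerE^{2}\Lambda$ (e.g., $m=\lceil\eulerE^{3}\Lambda\rceil$) keeps the geometric-series ratio and the per-term decay uniform enough to absorb the extra $\eulerE^{2}$ and the factor of $2$ appearing in the hypothesis on $n$.
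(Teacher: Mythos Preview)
Your proof is correct and follows essentially the same approach as the paper: both expand the expected partition function as $\sum_k \rho_{n,k}\, a_k$ with $\rho_{n,k}=\prod_{i<k}(1-i/n)$, obtain the upper bound from $\rho_{n,k}\le 1$, and for the lower bound split at a cutoff $m\approx\max\{\eulerE^{3}\gppFugacity\volumeMeasure[\region],\ln(2/\error)\}$, controlling the low-order terms via $1-\rho_{n,k}\le m^{2}/n$ and the tail via $a_k\le(\gppFugacity\volumeMeasure[\region])^{k}/k!$ together with Stirling. The only cosmetic difference is that the paper packages the tail estimate through Lagrange's remainder for $\eulerE^{\gppFugacity\volumeMeasure[\region]}$ rather than bounding $\sum_{k>m}(\gppFugacity\volumeMeasure[\region])^{k}/k!$ term by term as you do.
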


\begin{proof}
	We start by rewriting the hard-core partition function as
	\[
		\hcPartitionFunction{\graph}[\frac{\gppFugacity \volumeMeasure[\region]}{n}]
		= 1 + \sum_{\numPoints = 1}^{n} \gppFugacity^{\numPoints} \frac{\volumeMeasure[\region]^{\numPoints}}{n^{\numPoints}} \sum_{S \in \binom{[n]}{\numPoints}} \prod_{\{i, j\} \in \binom{S}{2}} \ind{\{i, j\} \notin \edges} .
	\]
	Thus, by linearity of expectation we have
	\begin{align*}
		\EWrt{\hcPartitionFunction{\graph}[\frac{\gppFugacity \volumeMeasure[\region]}{n}]}[\graph \sim \canonicalDistribution{n}{\region}{\potential}]
		&= 1 + \sum_{\numPoints = 1}^{n} \gppFugacity^{\numPoints} \frac{\volumeMeasure[\region]^{\numPoints}}{n^{\numPoints}} \sum_{S \in \binom{[n]}{\numPoints}} \EWrt{\prod_{\{i, j\} \in \binom{S}{2}} \ind{\{i, j\} \notin \edges}}[\graph \sim \canonicalDistribution{n}{\region}{\potential}] \\
		&= 1 + \sum_{\numPoints = 1}^{n} \gppFugacity^{\numPoints} \frac{\volumeMeasure[\region]^{\numPoints}}{n^{\numPoints}} \sum_{S \in \binom{[n]}{\numPoints}} \Pr{\bigwedge_{\{i, j\} \in \binom{S}{2}} \{i, j\} \notin \edges} .
	\end{align*}
	Next, observe that for all $S \in \binom{[n]}{\numPoints}$ with $\size{S} = \numPoints$
	\begin{align*}
		\Pr{\bigwedge_{\{i, j\} \in \binom{S}{2}} \{i, j\} \notin \edges}
		&= \int_{\region^{n}} \prod_{\{i, j\} \in \binom{S}{2}} \left(1 - \canonicalEdgeProbability{\potential}[x_i][x_j]\right) \productUniformDistributionOn{\region}{n}[\intD \vectorize{x}] \\
		&= \int_{\region^{n}} \prod_{\{i, j\} \in \binom{S}{2}} \eulerE^{-\potential[x_i][x_j]} \productUniformDistributionOn{\region}{n}[\intD \vectorize{x}] \\
		&= \frac{1}{\volumeMeasure[\region]^{n}} \int_{\region^{n}} \prod_{\{i, j\} \in \binom{S}{2}} \eulerE^{-\potential[x_i][x_j]} \productVolumeMeasure{n}[\intD \vectorize{x}] \\
		&= \frac{\volumeMeasure[\region]^{n - \numPoints}}{\volumeMeasure[\region]^{n}} \int_{\region^{\numPoints}} \prod_{\{i, j\} \in \binom{[\numPoints]}{2}} \eulerE^{-\potential[x_i][x_j]} \productVolumeMeasure{\numPoints}[\intD \vectorize{x}] \\
		&= \frac{1}{\volumeMeasure[\region]^{\numPoints}} \int_{\region^{\numPoints}} \prod_{\{i, j\} \in \binom{[\numPoints]}{2}} \eulerE^{-\potential[x_i][x_j]} \productVolumeMeasure{\numPoints}[\intD \vectorize{x}] .
	\end{align*}
	This yields
	\begin{align*}
		\EWrt{\hcPartitionFunction{\graph}[\frac{\gppFugacity \volumeMeasure[\region]}{n}]}[\graph \sim \canonicalDistribution{n}{\region}{\potential}]
		&= 1 + \sum_{\numPoints = 1}^{n} \gppFugacity^{\numPoints} \frac{1}{n^{\numPoints}} \sum_{S \in \binom{[n]}{\numPoints}} \int_{\region^{\numPoints}} \prod_{\{i, j\} \in \binom{[\numPoints]}{2}} \eulerE^{-\potential[x_i][x_j]} \productVolumeMeasure{\numPoints}[\intD \vectorize{x}] \\
		&= 1 + \sum_{\numPoints = 1}^{n} \gppFugacity^{\numPoints} \frac{\binom{n}{\numPoints}}{n^{\numPoints}} \int_{\region^{\numPoints}} \prod_{\{i, j\} \in \binom{[\numPoints]}{2}} \eulerE^{-\potential[x_i][x_j]} \productVolumeMeasure{\numPoints}[\intD \vectorize{x}] \\
		&= 1 + \sum_{\numPoints = 1}^{n} \frac{\gppFugacity^{\numPoints}}{\numPoints!} \prod_{i = 0}^{\numPoints-1}\left(1 - \frac{i}{n}\right) \int_{\region^{\numPoints}} \prod_{\{i, j\} \in \binom{[\numPoints]}{2}} \eulerE^{-\potential[x_i][x_j]} \productVolumeMeasure{\numPoints}[\intD \vectorize{x}] ,
	\end{align*}
	from which the upper bound
	\[
		\EWrt{\hcPartitionFunction{\graph}[\frac{\gppFugacity \volumeMeasure[\region]}{n}]}[\graph \sim \canonicalDistribution{n}{\region}{\potential}]
		\le \gppPartitonFunction{\region}[\gppFugacity][\potential]
	\]
	follows immediately.

	For the lower bound set
	\[
	S_{m} = 1 + \sum_{\numPoints = 1}^{m} \frac{\gppFugacity^{\numPoints}}{\numPoints!} \int_{\region^{\numPoints}} \prod_{\{i, j\} \in \binom{[\numPoints]}{2}} \eulerE^{-\potential[x_i][x_j]} \productVolumeMeasure{\numPoints}[\intD \vectorize{x}]
	\]
	for any $1 \le m \le n$.
	Observe that
	\[
		\EWrt{\hcPartitionFunction{\graph}[\frac{\gppFugacity \volumeMeasure[\region]}{n}]}[\graph \sim \canonicalDistribution{n}{\region}{\potential}]  \ge \left(1 - 	\frac{m}{n}\right)^{m} S_{m} .
	\]
	Thus, for $n \ge 2 \error^{-1} m^2$ Bernoulli's inequality yields
	\[
		\EWrt{\hcPartitionFunction{\graph}[\frac{\gppFugacity \volumeMeasure[\region]}{n}]}[\graph \sim \canonicalDistribution{n}{\region}{\potential}]
		\ge \left(1 - \frac{m^2}{n}\right) S_{m}
		\ge \left(1 - \frac{\error}{2}\right) S_{m} .
	\]
	Furthermore, note that
	\begin{align*}
		\gppPartitonFunction{\region}[\gppFugacity][\potential] - S_{m}
		&= \sum_{\numPoints = m+1}^{\infty} \frac{\gppFugacity^{\numPoints}}{\numPoints!} \int_{\region^{\numPoints}} \prod_{\{i, j\} \in \binom{[\numPoints]}{2}} \eulerE^{-\potential[x_i][x_j]} \productVolumeMeasure{\numPoints}[\intD \vectorize{x}] \\
		&\le \sum_{\numPoints = m+1}^{\infty} \frac{\gppFugacity^{\numPoints} \volumeMeasure[\region]^{\numPoints}}{\numPoints!} ,
	\end{align*}
	where the last inequality comes from the fact that $\potential$ is non-negative.
	Next, observe that this is equal to the error of the Taylor expansion of $\eulerE^{\gppFugacity \volumeMeasure[\region]}$ around $0$, truncated after $m$ terms.
	Thus, by Lagrange's remainder formula, we obtain
	\[
		\gppPartitonFunction{\region}[\gppFugacity][\potential] - S_{m}
		\le \frac{\eulerE^{\gppFugacity \volumeMeasure[\region]}}{(m+1)!} \left(\gppFugacity \volumeMeasure[\region]\right)^{m+1} .
	\]
	Choosing $m \ge \max\left\{\eulerE^3 \gppFugacity \volumeMeasure[\region], \ln\left(2 \error^{-1}\right)\right\}$ and using the fact that $(m+1)! > \left(\frac{m+1}{\eulerE}\right)^{m+1}$ yields
	\[
		\gppPartitonFunction{\region}[\gppFugacity][\potential] - S_{m} \le \left(\frac{\eulerE^2 \gppFugacity \volumeMeasure[\region]}{m+1}\right)^{m+1} \le \eulerE^{- 	(m+1)} \le \frac{\error}{2} .
	\]
	As $\gppPartitonFunction{\region}[\gppFugacity][\potential] \ge 1$, we get
	\[
		S_{m}
		\ge \gppPartitonFunction{\region}[\gppFugacity][\potential] - \frac{\error}{2}
		\ge \left( 1 - \frac{\error}{2}\right) \gppPartitonFunction{\region}[\gppFugacity][\potential].
	\]
	For $n \ge 2 \error^{-1} m^2 = 2 \error^{-1} \max\left\{\eulerE^6 \gppFugacity^2 \volumeMeasure[\region]^2, \ln\left(2 \error^{-1}\right)^2\right\}$ we obtain
	\[
		\EWrt{\hcPartitionFunction{\graph}[\frac{\gppFugacity \volumeMeasure[\region]}{n}]}[\graph \sim \canonicalDistribution{n}{\region}{\potential}]
		\ge \left(1 - \frac{m}{n}\right)^{m} S_{m}
		\ge \left(1 - \frac{\error}{2}\right)^2  \gppPartitonFunction{\region}[\gppFugacity][\potential] \ge \left(1 - \error\right)  	\gppPartitonFunction{\region}[\gppFugacity][\potential] ,
	\]
	which proves the claim.
\end{proof}

\gppConcentration*

\begin{proof}
	By setting $\alpha = 1$ and $\initialFugacity = \gppFugacity \volumeMeasure[\region]$ and using the fact that
	\[
		n \ge 4 \error^{-2} \errorProb^{-1} \max\left\{\eulerE^6 \gppFugacity^2 \volumeMeasure[\region]^2, \ln\left(4 \error^{-1}\right)^2\right\}
		\ge \left(2 \initialFugacity^2 \left(\frac{\error}{2}\right)^{-2} \errorProb^{-1}\right)^{\frac{1}{\alpha}}
	\]
	\Cref{thm:concentration_partition_function_simplified} yields
	\[
		\Pr{\absolute{\hcPartitionFunction{\graph}[\frac{\gppFugacity \volumeMeasure[\region]}{n}] - \E{\hcPartitionFunction{\graph}[\frac{\gppFugacity  \volumeMeasure[\region]}{n}]}} \ge \frac{\error}{2} \E{\hcPartitionFunction{\graph}[\frac{\gppFugacity \volumeMeasure[\region]}{n}]}} \le \errorProb .
	\]
	Furthermore, by \Cref{lemma:bound_hc_expectation} we know that for
	\[
		n \ge 4 \error^{-2} \errorProb^{-1} \max\left\{\eulerE^6 \gppFugacity^2 \volumeMeasure[\region]^2, \ln\left(4 \error^{-1}\right)^2\right\}
		\ge 2 \left(\frac{\error}{2}\right)^{-1} \max\left\{\eulerE^6 \gppFugacity^2 \volumeMeasure[\region]^2, \ln\left( 2  \left(\frac{\error}{2}\right)^{-1}\right)^2\right\}
	\]
	it holds that
	\[
		\left(1 - \frac{\error}{2}\right) \gppPartitonFunction{\region}[\gppFugacity][\potential]
 		\le  \EWrt{\hcPartitionFunction{\graph}[\frac{\gppFugacity \volumeMeasure[\region]}{n}]}[\graph \sim \canonicalDistribution{n}{\region}{\potential}]
		\le  \gppPartitonFunction{\region}[\gppFugacity][\potential].
	\]
	Thus, we have
	\[
		\left(1 + \frac{\error}{2}\right) \E{\hcPartitionFunction{\graph}[\frac{\gppFugacity \volumeMeasure[\region]}{n}]}
		\le \left(1 + \frac{\error}{2}\right) \gppPartitonFunction{\region}[\gppFugacity][\potential]
		\le \left(1 + \error\right) \gppPartitonFunction{\region}[\gppFugacity][\potential]
	\]
	and similarly
	\[
		\left(1 - \frac{\error}{2}\right) \E{\hcPartitionFunction{\graph}[\frac{\gppFugacity \volumeMeasure[\region]}{n}]}
		\ge \left(1 - \frac{\error}{2}\right)^{2} \gppPartitonFunction{\region}[\gppFugacity][\potential]
		\ge \left(1 - \error\right) \gppPartitonFunction{\region}[\gppFugacity][\potential].
	\]
	We obtain
	\[
		\Pr{\absolute{\hcPartitionFunction{\graph}[\frac{\gppFugacity \volumeMeasure[\region]}{n}] - \gppPartitonFunction{\region}[\gppFugacity][\potential]} \ge \error  \gppPartitonFunction{\region}[\gppFugacity][\potential]} \le \errorProb ,
	\]
	which proves the claim.
\end{proof}

\subsection{Approximating the partition function}
One of the main applications of \Cref{thm:gpp_concentration} is that it yields a rather simple randomized procedure for approximating $\gppPartitonFunction{\region}[\gppFugacity][\potential]$.
The rough idea is as follows:
\begin{enumerate}
	\item For $n \in \N$ sufficiently large, sample a graph $\graph$ from $\canonicalDistribution{n}{\region}{\potential}$.
	\item Approximate $\hcPartitionFunction{\graph}[\frac{\gppFugacity \volumeMeasure[\region]}{n}]$ and use the result as an approximation for $\gppPartitonFunction{\region}[\gppFugacity][\potential]$.
\end{enumerate}
We are especially interested in obtaining an algorithm that is asymptotically efficient in the volume $\volumeMeasure[\region]$, as this gives a natural way to parameterize the algorithmic problem.
More specifically, we want to characterize the regime of the fugacity $\gppFugacity$ in terms of the potential $\potential$ for which we can get a randomized $\error$-approximation of $\gppPartitonFunction{\region}[\gppFugacity][\potential]$ in time polynomial in $\volumeMeasure[\region]$ and $\frac{1}{\error}$.
We characterize this fugacity regime in terms of the temperedness constant
\[
	\generalizedTemperedness{\potential} = \esssup_{x_1 \in \pointProcessSpace} \int_{\pointProcessSpace} \absolute{1 - \eulerE^{-\potential[x_1][x_2]}} \volumeMeasure[\intD x_2],
\]
where $\esssup$ denotes the essential supremum (i.e., an upper bound that holds almost everywhere).

In order to ensure that the approximation algorithm runs efficiently in $\volumeMeasure[\region]$, two ingredients are important.
First, we need to bound how large $n$ needs to be chosen to ensure that $\hcPartitionFunction{\graph}[\frac{\gppFugacity \volumeMeasure[\region]}{n}]$ is close to $\gppPartitonFunction{\region}[\gppFugacity][\potential]$ with high probability.
Second, we need to ensure that $\hcPartitionFunction{\graph}[\frac{\gppFugacity \volumeMeasure[\region]}{n}]$ can be approximated in time polynomial in $\volumeMeasure[\region]$.
Obviously, both requirements are satisfied if $n \in \poly{\volumeMeasure[\region]}$ is sufficient and if $\hcPartitionFunction{\graph}[\frac{\gppFugacity \volumeMeasure[\region]}{n}]$ can be approximated in time $\poly{n}$.
To tackle the first part, \Cref{thm:gpp_concentration} gives a useful tool.
For the second part, we will use some well known results on approximating the hard-core partition function.
\begin{theorem}[{\cite[Corollary $8.4$]{vstefankovivc2009adaptive} and \cite[Theorem $1$]{anari2021entropic}}]
	\label{thm:hc_fpras_univariate}
	Let $\graph = (\vertices, \edges)$ be an undirected graph with maximum vertex degree bounded by $\degree[\graph] \in \N_{\ge 2}$ and let $\fugacity \in \R_{\ge 0}$ with
	\[
		\fugacity < \criticalFugacity{\degree[\graph]} = \frac{\left(\degree[\graph] - 1\right)^{\degree[\graph] - 1}}{\left(\degree[\graph] - 2\right)^{\degree[\graph]}}.
	\]
	Then, for all $\error \in (0, 1]$, there is a randomized $\error$-approximation algorithm for the hard-core partition function $\hcPartitionFunction{\graph}[\fugacity]$ with running time $\bigOTilde{\size{\vertices}^2 \error^{-2}}$.
\end{theorem}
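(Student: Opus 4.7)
The statement is a standard corollary, so my plan is not to reprove the underlying spectral independence or Markov chain analysis from scratch, but to assemble two well-documented ingredients into the stated running time.

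The plan is to approximate $\hcPartitionFunction{\graph}[\fugacity]$ via the classical sampling-to-counting reduction along an adaptive annealing schedule. First, I would set up a telescoping product: pick a sequence $0 = \fugacity_0 < \fugacity_1 < \cdots < \fugacity_\ell = \fugacity$ so that
\[
    \hcPartitionFunction{\graph}[\fugacity] = \hcPartitionFunction{\graph}[\fugacity_0] \cdot \prod_{i = 1}^{\ell} \frac{\hcPartitionFunction{\graph}[\fugacity_i]}{\hcPartitionFunction{\graph}[\fugacity_{i - 1}]},
\]
noting $\hcPartitionFunction{\graph}[\fugacity_0] = 1$. Each ratio in the product can be written as an expectation with respect to $\hcGibbsDistribution{\graph}{\fugacity_{i-1}}$ of a simple function of $\size{I}$, namely $(\fugacity_i/\fugacity_{i-1})^{\size{I}}$. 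Thus, if we can sample approximately from $\hcGibbsDistribution{\graph}{\fugacity'}$ for each $\fugacity' \in \{\fugacity_0, \dots, \fugacity_\ell\}$, we can estimate each ratio by a Monte Carlo average and multiply the estimators.

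For the sampling subroutine, I would invoke the Glauber dynamics on independent sets of $\graph$ with weight $\fugacity'^{\size{I}}$. Since $\fugacity < \criticalFugacity{\degree[\graph]}$ implies the same tree-threshold condition for every $\fugacity' \le \fugacity$, the spectral-independence machinery of Anari--Liu--Oveis Gharan (with subsequent refinements cited in the introduction) gives an optimal mixing time of $\bigOTilde{\size{\vertices}}$ steps for each chain; each step is $\bigO{1}$ work after standard preprocessing, so one approximate sample costs $\bigOTilde{\size{\vertices}}$. For the annealing schedule, I would apply the adaptive cooling scheme of \v{S}tefankovi\v{c}--Vempala--Vigoda, which chooses $\ell = \bigOTilde{1}$ intermediate fugacities in a data-driven way and prescribes the number of samples per ratio so that the combined relative error of the product is at most $\error$ with constant success probability. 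Multiplying the $\bigOTilde{\size{\vertices}}$ sampling cost by the overall $\bigOTilde{\size{\vertices} \error^{-2}}$ sample budget from the Chebyshev-style variance analysis of the annealing yields the advertised $\bigOTilde{\size{\vertices}^2 \error^{-2}}$ bound.

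The only delicate step is verifying that all intermediate distributions $\hcGibbsDistribution{\graph}{\fugacity'}$ indeed lie in the rapid-mixing regime; this is immediate because the map $\fugacity' \mapsto \criticalFugacity{\degree[\graph]} - \fugacity'$ is monotone and the schedule only moves fugacity upward toward $\fugacity$. The bookkeeping needed to make the ``constant probability'' from each sampler into the claimed approximation guarantee is the usual median-of-means boost, which inflates running time only by an $\bigO{\log\error^{-1}}$ factor absorbed into the $\bigOTilde{\cdot}$. Since both the mixing time statement and the adaptive-annealing analysis are already available as cited black boxes, the only real task is to verify that their hypotheses match: maximum degree $\degree[\graph]$, activity strictly below $\criticalFugacity{\degree[\graph]}$, and the product structure $\size{I}$-weighted, all of which hold by assumption.
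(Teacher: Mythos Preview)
Your proposal is correct and matches the paper's treatment. The paper does not give a proof of this theorem at all; it simply cites it as a black box, with the accompanying remark that \v{S}tefankovi\v{c}--Vempala--Vigoda's adaptive annealing (originally stated for $\fugacity < 2/\degree[\graph]$) combined with the optimal Glauber mixing time from Anari et al.\ yields the stated bound up to $\criticalFugacity{\degree[\graph]}$---precisely the two ingredients you assemble. One small inaccuracy: the number of cooling stages $\ell$ in the SVV schedule is not $\bigOTilde{1}$ but rather grows with $n$; however, your stated total sample budget of $\bigOTilde{\size{\vertices}\error^{-2}}$ is correct, so the final running time is unaffected.
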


\begin{remark}
	In \cite{vstefankovivc2009adaptive} the result above is only stated for $\fugacity < \frac{2}{\degree[\graph]}$ as an older mixing time result for Glauber dynamics from \cite{vigoda2001note} is used. Combining their approach with the more recent mixing time bound in \cite{anari2021entropic} gives the desired bound of $\fugacity < \criticalFugacity{\degree[\graph]}$.
\end{remark}

Thus, arguing that $\hcPartitionFunction{\graph}$ for $\graph \sim \canonicalDistribution{n}{\region}{\potential}$ can be approximated in time $\poly{n}$ boils down to obtaining a probabilistic upper bound on $\degree[\graph]$.
We use the following simple lemma.

\begin{lemma}
	\label{lemma:degree_bound}
	Let $(\pointProcessSpace, \dist)$ be a complete, separable metric space, let $\Borel = \Borel[\pointProcessSpace]$ be the Borel algebra and let $\volumeMeasure$ be a locally finite reference measure on $(\pointProcessSpace, \Borel)$.
	Let $\region \subseteq \pointProcessSpace$ be bounded and measurable, let $\gppFugacity \in \R_{\ge 0}$ and let $\potential: \pointProcessSpace^2 \to \R_{\ge 0} \cup \{\infty\}$ be a symmetric repulsive potential.
	Assume $\generalizedTemperedness{\potential} > 0$.
	For $\degreeError \in \R_{>0}$, $\degreeErrorProb \in (0, 1]$, $n \ge 3 \max\left\{\degreeError^{-1}, \degreeError^{-2}\right\} \ln \left(\degreeErrorProb^{-1}\right) \generalizedTemperedness{\potential}^{-1} \volumeMeasure[\region] + 1$ and $\graph \sim \canonicalDistribution{n}{\region}{\potential}$ it holds that
	\[
		\Pr{\degree[\graph] \ge (1 + \degreeError)  \frac{n-1}{\volumeMeasure[\region]}\generalizedTemperedness{\potential}} \le \degreeErrorProb n.
		\qedhere
	\]
\end{lemma}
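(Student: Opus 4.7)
The plan is to prove a high-probability degree bound for a single vertex using a Chernoff-type bound and then take a union bound over all $n$ vertices. Fix any vertex $i \in [n]$, and condition on its position $x_i \in \region$. For each $j \in [n] \setminus \{i\}$, let $Y_j = \indicatorFunctionSymbol[\{i,j\} \in \edges]$. Conditional on $x_i$, the variables $(Y_j)_{j \neq i}$ are independent, each marginally Bernoulli with mean
\[
	\mu(x_i) \coloneqq \frac{1}{\volumeMeasure[\region]} \int_{\region} \left(1 - \eulerE^{-\potential[x_i][y]}\right) \volumeMeasure[\intD y],
\]
since $x_j \sim \uniformDistributionOn{\region}$ is drawn independently and, conditional on $(x_i, x_j)$, the edge $\{i,j\}$ is drawn with probability $1 - \eulerE^{-\potential[x_i][x_j]}$. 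Because $\potential$ is non-negative, we have $|1 - \eulerE^{-\potential[x_i][y]}| = 1 - \eulerE^{-\potential[x_i][y]}$, so the definition of the temperedness constant gives $\mu(x_i) \le \generalizedTemperedness{\potential}/\volumeMeasure[\region]$ for $\uniformDistributionOn{\region}$-almost every $x_i$.

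Setting $M \coloneqq (n-1) \generalizedTemperedness{\potential}/\volumeMeasure[\region]$, the conditional expectation $\EWrt{\degree[\graph][i]}[][x_i] = (n-1)\mu(x_i)$ is almost surely bounded above by $M$. The next step is to apply the standard multiplicative Chernoff bound for sums of independent indicators with mean at most $M$: for every $\degreeError \in \R_{>0}$,
\[
	\Pr{\degree[\graph][i] \ge (1 + \degreeError) M}[][\big] \le \exp\!\left(-\frac{\min\{\degreeError, \degreeError^2\} M}{3}\right).
\]
The hypothesis $n \ge 3 \max\{\degreeError^{-1}, \degreeError^{-2}\} \ln(\degreeErrorProb^{-1}) \generalizedTemperedness{\potential}^{-1} \volumeMeasure[\region] + 1$ is exactly chosen so that $M \ge 3 \max\{\degreeError^{-1}, \degreeError^{-2}\} \ln(\degreeErrorProb^{-1})$, which combined with the identity $\min\{\degreeError, \degreeError^2\} \cdot \max\{\degreeError^{-1}, \degreeError^{-2}\} = 1$ (verified separately in the cases $\degreeError \le 1$ and $\degreeError > 1$) yields the per-vertex bound $\Pr{\degree[\graph][i] \ge (1 + \degreeError) M} \le \degreeErrorProb$.

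The proof concludes with a union bound over all $n$ vertices:
\[
	\Pr{\degree[\graph] \ge (1+\degreeError) \frac{n-1}{\volumeMeasure[\region]} \generalizedTemperedness{\potential}} \le \sum_{i \in [n]} \Pr{\degree[\graph][i] \ge (1+\degreeError) M} \le \degreeErrorProb n.
\]
No step here is technically difficult; the only delicate points are (i) being careful that the conditional independence argument correctly produces a sum of independent Bernoullis so that a Chernoff bound is applicable, and (ii) handling both regimes $\degreeError \le 1$ and $\degreeError > 1$ simultaneously, which is what forces the $\max\{\degreeError^{-1}, \degreeError^{-2}\}$ factor in the lower bound on $n$.
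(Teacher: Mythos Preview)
Your proposal is correct and follows essentially the same approach as the paper: condition on the position of a fixed vertex, observe that its degree is a sum of $n-1$ independent Bernoulli variables with mean bounded by $\generalizedTemperedness{\potential}/\volumeMeasure[\region]$, apply a multiplicative Chernoff bound, and finish with a union bound over the $n$ vertices. The paper spells out the integration over $x_i$ and the stochastic-domination step via a single binomial $B$ with success probability $\generalizedTemperedness{\potential}/\volumeMeasure[\region]$ more explicitly than you do, but the argument is the same.
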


\begin{proof}
	By union bound, it is sufficient to argue that, for each $i \in [n]$ it holds that
	\[
		\Pr{\degree[\graph][i] \ge (1 + \degreeError)  \frac{n-1}{\volumeMeasure[\region]}\generalizedTemperedness{\potential}} \le \degreeErrorProb ,
	\]
	where $\degree[\graph][i]$ denotes the degree of vertex $i \in [n]$ in $\graph$.
	Now, observe that the random variables $\degree[\graph][i]$ for $i \in [n]$ are identically distributed.
	Thus, we can focus on $\degree[\graph][n]$ for ease of notation.
	By definition, it holds for $k \in [n-1] \cup \{0\}$ that
	\begin{align*}
		&\Pr{\degree[\graph][n] = k} \\
		&\hspace{3em}= \sum_{S \in \binom{[n-1]}{k}} \bigintsss_{\region^{n}} \left(\prod_{i \in S} \canonicalEdgeProbability{\potential}[x_n][x_i] \right) \cdot \left(\prod_{i \in [n-1] \setminus S} (1 - \canonicalEdgeProbability{\potential}[x_n][x_i]) \right) \productUniformDistributionOn{\region}{n}[\intD \vectorize{x}] \\
		&\hspace{3em}= \bigintsss_{\region} \sum_{S \in \binom{[n-1]}{k}} \left(\prod_{i \in S} \int_{\region} \canonicalEdgeProbability{\potential}[x_n][x_i] \,\uniformDistributionOn{\region}[\intD x_i]\right)
		\cdot \left(\prod_{i \in [n-1] \setminus S} \int_{\region} 1- \canonicalEdgeProbability{\potential}[x_n][x_i] \,\uniformDistributionOn{\region}[\intD x_i]\right) \,\uniformDistributionOn{\region}[\intD x_n] \\
		&\hspace{3em}= \int_{\region} \binom{n-1}{k} \left(\int_{\region} \canonicalEdgeProbability{\potential}[x_1][x_2] \,\uniformDistributionOn{\region}[\intD x_2]\right)^{k}
		\left(1 - \int_{\region} \canonicalEdgeProbability{\potential}[x_1][x_2] \,\uniformDistributionOn{\region}[\intD x_2]\right)^{n-1-k} \,\uniformDistributionOn{\region}[\intD x_1] .
	\end{align*}
	For every $x_1 \in \region$, let $B_{x_1}$ be a binomial random variable with $n-1$ trials and with success probability $\int_{\region} \canonicalEdgeProbability{\potential}[x_1][x_2]\,\uniformDistributionOn{\region}[\intD x_2]$.
	We obtain
	\[
		\Pr{\degree[\graph][n] = k} = \int_{\region} \Pr{B_{x_1} = k} \,\uniformDistributionOn{\region}[\intD x_1] ,
	\]
	which implies for all $a \in [0, n-1]$
	\begin{align*}
		\Pr{\degree[\graph][n] \ge a}
		&= \sum_{k=\lceil a \rceil}^{n-1} \int_{\region} \Pr{B_{x_1} = k} \,\uniformDistributionOn{\region}[\intD x_1] \\
		&= \int_{\region} \sum_{k=\lceil a \rceil}^{n-1} \Pr{B_{x_1} = k} \,\uniformDistributionOn{\region}[\intD x_1] \\
		&= \int_{\region} \Pr{B_{x_1} \ge a} \,\uniformDistributionOn{\region}[\intD x_1] .
	\end{align*}
	Next, let $B$ be a binomial random variable with $n-1$ trials and success probability $\frac{\generalizedTemperedness{\potential}}{\volumeMeasure[\region]}$.
	Observe that, by the definition of $\generalizedTemperedness{\potential}$, it holds for $\volumeMeasure$-almost all $x_1 \in \region$ that $\int_{\region} \canonicalEdgeProbability{\potential}[x_1][x_2]\,\uniformDistributionOn{\region}[\intD x_2] \le \frac{\generalizedTemperedness{\potential}}{\volumeMeasure[\region]}$.
	Thus, we have that $B$ stochastically dominates $B_{x_1}$ for $\uniformDistributionOn{\region}$-almost all $x_1 \in \region$.
	Consequently, we obtain
	\[
		\Pr{\degree[\graph][n] \ge a}
		\le \int_{\region} \Pr{B \ge a} \,\uniformDistributionOn{\region}[\intD x_1]
		= \Pr{B \ge a} .
	\]
	Observing that $\E{B} = \frac{n-1}{\volumeMeasure[\region]}\generalizedTemperedness{\potential}$ and applying Chernoff bound yields
	\begin{align*}
		\Pr{\degree[\graph][n] \ge (1 + \degreeError)  \frac{n-1}{\volumeMeasure[\region]}\generalizedTemperedness{\potential}}
		\le \eulerE^{- \frac{\min\left\{\degreeError, \degreeError^{2}\right\} \generalizedTemperedness{\potential} (n-1)}{3 \volumeMeasure[\region]}} .
	\end{align*}
	Setting $n \ge 3 \max\left\{\degreeError^{-1}, \degreeError^{-2}\right\} \ln \left(\degreeErrorProb^{-1}\right) \generalizedTemperedness{\potential}^{-1} \volumeMeasure[\region] + 1$ we have $\Pr{\degree[\graph][n] \ge (1 + \degreeError)  \frac{n-1}{\volumeMeasure[\region]}\generalizedTemperedness{\potential}} \le \degreeErrorProb$, which proves the claim.
\end{proof}

Combining \Cref{thm:gpp_concentration}, \Cref{lemma:degree_bound}, and \Cref{thm:hc_fpras_univariate}, we obtain the following algorithmic result.

\approximateGpp*

\begin{proof}
	We start by giving a more precise outline of the algorithmic idea.
	To this end, we define
	\[
	N = \max \left\{\substack{
		324 \error^{-2} \max\left\{ \eulerE^6 \gppFugacity^2 \volumeMeasure[\region]^2,\, \ln\left(4 \error^{-1}\right)^2 \right\}, \\
		24 \max\left\{\frac{1}{\eulerE - \gppFugacity \generalizedTemperedness{\potential}}, \frac{\gppFugacity \generalizedTemperedness{\potential} }{\left(\eulerE - \gppFugacity \generalizedTemperedness{\potential}\right)^{2}}\right\} \gppFugacity \volumeMeasure[\region] \ln\left(24 \max\left\{\frac{1}{\eulerE - \gppFugacity \generalizedTemperedness{\potential}}, \frac{\gppFugacity \generalizedTemperedness{\potential} }{\left(\eulerE - \gppFugacity \generalizedTemperedness{\potential}\right)^{2}}\right\} \gppFugacity \volumeMeasure[\region]\right)^2
	}\right\}.
	\]
	We now use the following procedure to approximate $\gppPartitonFunction{\region}[\gppFugacity][\potential]$:
	\begin{enumerate}
		\item Choose some integer $n \ge N$.
		\item Draw a graph $\graph$ from $\canonicalDistribution{n}{\region}{\potential}$. \label{step:draw_graph}
		\item If $\degree[\graph] \ge \frac{\eulerE n}{\gppFugacity \volumeMeasure[\region]}$, return an arbitrary value. \label{step:check_degree}
		\item Else, use the algorithm from \Cref{thm:hc_fpras_univariate} to $\frac{\error}{3}$-approximate $\hcPartitionFunction{\graph}[\frac{\gppFugacity \volumeMeasure[\region]}{n}]$ with an error probability of at most $\frac{1}{9}$ and return the result. \label{step:approximation}
	\end{enumerate}

	We proceed by arguing that this procedure yields an $\error$-approximation of $\gppPartitonFunction{\region}[\gppFugacity][\potential]$ in time $\poly{\volumeMeasure[\region] \error^{-1}}$.
	We start by bounding the probability that the computed value is not an $\error$-approximation.

	First, we assume that, whenever $\degree[\graph] \ge \frac{\eulerE n}{\gppFugacity \volumeMeasure[\region]}$, the algorithm returns no $\error$-approximation in step \ref{step:check_degree}.
	Let $A$ be the event that this happens.
	Second, let $B$ denote the event that the hard-core partition function $\hcPartitionFunction{\graph}[\frac{\gppFugacity \volumeMeasure[\region]}{n}]$ the graph $\graph$ that we drew in step \ref{step:draw_graph} is not an $\frac{\error}{3}$-approximation of $\gppPartitonFunction{\region}[\gppFugacity][\potential]$.
	Finally, let $C$ denote the event we do not manage to compute an $\frac{\error}{3}$-approximation of $\hcPartitionFunction{\graph}[\frac{\gppFugacity \volumeMeasure[\region]}{n}]$ in step \ref{step:approximation}.
	Note that the probability that the above procedure does not output an $\error$-approximation for $\gppPartitonFunction{\region}[\gppFugacity][\potential]$ is upper bounded by
	\[
		\Pr{A \cup (B \cap \compEvent{A}) \cup (C \cap \compEvent{B} \cap \compEvent{A})}
		\le \Pr{A} + \Pr{B} + \Pr{C} .
	\]
	We proceed with bounding each of these probabilities separately.

	To bound $\Pr{A}$, let $z = 24 \max\left\{\frac{1}{\eulerE - \gppFugacity \generalizedTemperedness{\potential}}, \frac{\gppFugacity \generalizedTemperedness{\potential} }{\left(\eulerE - \gppFugacity \generalizedTemperedness{\potential}\right)^{2}}\right\} \gppFugacity \volumeMeasure[\region]$.
	As we are interested in asymptotic behavior in terms of $\volumeMeasure[\region]$, we may assume that $\volumeMeasure[\region]$ is sufficiently large to ensure $z \ge 5$.
	Note that for this, we have to exclude the case $\gppFugacity = 0$, which trivially yields $\gppPartitonFunction{\region}[\gppFugacity][\potential] = 1$.
	Now, observe that for $z \ge 5$ it holds that $z \ln(z)^2 \ge z \ln\left(z \ln(z)^2\right)$.
	Next, observe that $n \ge z \ln(z)^2$.
	Thus, we have $n \ge z \ln(n)$.
	Furthermore, by $n \ge 5 \ln(5)^2 \ge \eulerE \ge 2$, we have
	\begin{align*}
		n - 1
		&\ge \frac{n}{2} \\
		&\ge 12 \max\left\{\frac{1}{\eulerE - \gppFugacity \generalizedTemperedness{\potential}}, \frac{\gppFugacity \generalizedTemperedness{\potential} }{\left(\eulerE - \gppFugacity \generalizedTemperedness{\potential}\right)^{2}}\right\} \gppFugacity \volumeMeasure[\region] \ln(n)	\\
		&\ge 3 \left(\ln\left(9\right) + 1\right) \max\left\{\frac{1}{\eulerE - \gppFugacity \generalizedTemperedness{\potential}}, \frac{\gppFugacity \generalizedTemperedness{\potential} }{\left(\eulerE - \gppFugacity \generalizedTemperedness{\potential}\right)^{2}}\right\} \gppFugacity \volumeMeasure[\region] \ln(n) \\
		&= 3 \left(\ln\left(9\right) \ln(n) + \ln(n)\right) \max\left\{\frac{1}{\eulerE - \gppFugacity \generalizedTemperedness{\potential}}, \frac{\gppFugacity \generalizedTemperedness{\potential} }{\left(\eulerE - \gppFugacity \generalizedTemperedness{\potential}\right)^{2}}\right\} \gppFugacity \volumeMeasure[\region] \\
		&\ge 3 \ln\left(9n\right) \max\left\{\frac{1}{\eulerE - \gppFugacity \generalizedTemperedness{\potential}}, \frac{\gppFugacity \generalizedTemperedness{\potential} }{\left(\eulerE - \gppFugacity \generalizedTemperedness{\potential}\right)^{2}}\right\} \gppFugacity \volumeMeasure[\region] .
	\end{align*}
	Thus, we obtain
	\[
		n \ge 3 \max\left\{\frac{\gppFugacity \generalizedTemperedness{\potential}}{\eulerE - \gppFugacity \generalizedTemperedness{\potential}},  \left(\frac{\gppFugacity \generalizedTemperedness{\potential} }{\eulerE - \gppFugacity \generalizedTemperedness{\potential}}\right)^{2}\right\} \ln\left(9n\right) \generalizedTemperedness{\potential}^{-1}   \volumeMeasure[\region] + 1
	\]
	and by \Cref{lemma:degree_bound}
	\[
		\Pr{\degree[\graph] \ge \frac{\eulerE n}{\gppFugacity \volumeMeasure[\region]}}
		\le \Pr{\degree[\graph] \ge \left(1 + \frac{\eulerE - \gppFugacity \generalizedTemperedness{\potential}}{\gppFugacity  \generalizedTemperedness{\potential}}\right)  \frac{n-1}{\volumeMeasure[\region]}\generalizedTemperedness{\potential}}
		\le \frac{1}{9} .
	\]

	To bound $\Pr{B}$, note that for $n \ge 324 \error^{-2} \max\left\{ \eulerE^6 \gppFugacity^2 \volumeMeasure[\region]^2,\, \ln\left(4 \error^{-1}\right)^2 \right\}$ \Cref{thm:gpp_concentration} yields
	\[
		\Pr{B} = \Pr{\absolute{\hcPartitionFunction{\graph}[\frac{\gppFugacity \volumeMeasure[\region]}{n}] - \gppPartitonFunction{\region}[\gppFugacity][\potential]} \ge  \frac{\error}{3} \gppPartitonFunction{\region}[\gppFugacity][\potential]} \le \frac{1}{9} .
	\]

	Finally, note that, by \Cref{thm:hc_fpras_univariate}, we can obtain an $\frac{\error}{3}$-approximation of $\hcPartitionFunction{\graph}[\frac{\gppFugacity \volumeMeasure[\region]}{n}]$ with error probability at most $\Pr{C} \le \frac{1}{9}$ in time $\bigOTilde{n^2 \error^{-2}}$ as long as $\frac{\gppFugacity \volumeMeasure[\region]}{n} < \criticalFugacity{\degree[\graph]}$.
	As we only run the approximation for graphs $\graph$ with $\degree[\graph] < \frac{\eulerE n}{\gppFugacity \volumeMeasure[\region]}$ it holds that
	\[
		\frac{\gppFugacity \volumeMeasure[\region]}{n}
		< \frac{\eulerE}{\degree[\graph]} < \criticalFugacity{\degree[\graph]} ,
	\]
	proving that the requirement is satisfied.

	We obtain that the error probability is bounded by $\frac{1}{3}$.
	To finish the proof, we need to argue that our algorithm has the desired running time.
	To this end, note that $N \in \bigO{\volumeMeasure[\region]^{2} \error^{-2}}$.
	Thus, we can also choose $n \in \bigO{\volumeMeasure[\region]^{2} \error^{-2}}$.
	By assumption, step \ref{step:draw_graph} can be computed in time $\sampleGraphTime{\region}{\potential}{n} = \sampleGraphTime{\region}{\potential}{\bigO{\volumeMeasure[\region]^2 \error^{-2}}}$.
	Furthermore, step \ref{step:check_degree} can be computed in time $\bigOTilde{n^2 \volumeMeasure[\region]^{-1}} = \bigOTilde{\volumeMeasure[\region]^3 \samplingError^{-4}}$ and, by \Cref{thm:hc_fpras_univariate}, step \ref{step:approximation} runs in time $\bigOTilde{n^2 \error^{-2}} = \bigOTilde{\volumeMeasure[\region]^{4} \error^{-6}}$ for $\frac{\gppFugacity \volumeMeasure[\region]}{n} < \criticalFugacity{\degree[\graph]}$.
	Consequently, the overall running time is in $\bigOTilde{\volumeMeasure[\region]^{4} \error^{-6}} + \sampleGraphTime{\region}{\potential}{\bigO{\volumeMeasure[\region]^2 \error^{-2}}}$.
\end{proof}


	\section{Sampling from repulsive Gibbs point processes}\label{sec:sampling}
In this section, we propose an approximate sampling algorithm for the Gibbs measure of a repulsive Gibbs point process, based in random hard-core models.
More precisely, we investigate the sampling procedure given by \Cref{algo:sampling}

\begin{algorithm}[h]
	\SetAlgoLined
	\KwData{Instance of a repulsive Gibbs point process $(\region, \gppFugacity, \potential)$, error bound $\samplingError \in (0, 1]$}
	\KwResult{multiset of points in $\region$}
	set $n = \left\lceil\max\left\{
		\substack{
			8 \frac{18^2 \cdot 12}{\samplingError^3} \max\left\{\eulerE^6 \gppFugacity^2 \volumeMeasure[\region]^2, \ln\left(\frac{4 \cdot 18}{\samplingError}\right)\right\}, \\
			6 \ln\left(\frac{4 \eulerE}{\samplingError}\right) \max\left\{\frac{1}{\eulerE - \gppFugacity \generalizedTemperedness{\potential}}, \frac{\gppFugacity \generalizedTemperedness{\potential} }{\left(\eulerE - \gppFugacity \generalizedTemperedness{\potential}\right)^{2}}\right\} \gppFugacity \volumeMeasure[\region] \ln\left(3 \ln\left(\frac{4 \eulerE}{\samplingError}\right) \max\left\{\frac{1}{\eulerE - \gppFugacity \generalizedTemperedness{\potential}}, \frac{\gppFugacity \generalizedTemperedness{\potential} }{\left(\eulerE - \gppFugacity \generalizedTemperedness{\potential}\right)^{2}}\right\} \gppFugacity \volumeMeasure[\region]\right)^2
		}
		 \right\} \right\rceil$\;
	for each $i \in [n]$ draw $\randomPoint_i \sim \uniformDistributionOn{\region}$ independently\;
	draw $\edges \subseteq \binom{[n]}{2}$ s.t. $\{i, j\} \in \edges$ with probability $\canonicalEdgeProbability{\potential}[\randomPoint_i][\randomPoint_j] = 1 - \eulerE^{-\potential[\randomPoint_i][\randomPoint_j]}$ independently\;
	set $\samplingGraph = ([n], \edges)$\;
	\eIf{maximum degree $\degree[\samplingGraph] \ge \frac{\eulerE n}{\gppFugacity \volumeMeasure[\region]}$}{
		set $\sampledPointSet = \emptyset$\;
	}
	{
		sample $\spinConfiguration \in \spinConfigurations{\samplingGraph}$ $\frac{\samplingError}{4}$-approximately from the hard-core distribution $\hcGibbsDistribution{\samplingGraph}{\fugacity[n]}$ where $\fugacity[n] = \frac{\gppFugacity \volumeMeasure[\region]}{n}$\;
		set $\sampledPointSet = \{\randomPoint_i \mid i \in [n] \text{ s.t. } \spinConfiguration[i] = 1\}$ (possibly multiset)\;
	}
	\Return $\sampledPointSet$\;
	\caption{\label{algo:sampling}Approximate sampling algorithm for a repulsive point process $(\region, \gppFugacity, \potential)$.}
\end{algorithm}

Our main theorem in this section is as follows.
\begin{theorem}
	\label{thm:sampling}
	Let $(\pointProcessSpace, \dist)$ be a complete, separable metric space, let $\Borel = \Borel[\pointProcessSpace]$ be the Borel algebra and let $\volumeMeasure$ be a locally finite reference measure on $(\pointProcessSpace, \Borel)$.
	Let $\region \subseteq \pointProcessSpace$ be bounded and measurable, let $\gppFugacity \in \R_{\ge 0}$ and let $\potential: \pointProcessSpace^2 \to \R_{\ge 0} \cup \{\infty\}$ be a symmetric repulsive potential.
	Assume we can sample from the uniform distribution $\uniformDistributionOn{\region}$ in time $\samplePointTime{\region}$ and, for every $x, y \in \region$, evaluate $\potential[x][y]$ in time $\evaluatePotentialTime{\potential}$.
	If the Gibbs point process $\GibbsPointProcess{\region}{\gppFugacity}{\potential}$ is simple and
	$	\gppFugacity < \frac{\eulerE}{\generalizedTemperedness{\potential}}$
	then, for every $\samplingError \in \R_{>0}$, \Cref{algo:sampling} samples $\samplingError$-approximately from $\GibbsPointProcess{\region}{\gppFugacity}{\potential}$ and has running time in
	$\bigOTilde{\volumeMeasure[\region]^2 \samplingError^{-4} + \volumeMeasure[\region]^2 \samplingError^{-3} \samplePointTime{\region} + \volumeMeasure[\region]^4 \samplingError^{-6} \evaluatePotentialTime{\potential}}$.
\end{theorem}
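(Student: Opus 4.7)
The plan is to prove the running time by routine accounting and bound the total variation distance by a triangle inequality split into three error sources. Since $\gppFugacity < \eulerE/\generalizedTemperedness{\potential}$ is fixed, the integer $n$ set by \Cref{algo:sampling} satisfies $n \in \bigOTilde{\volumeMeasure[\region]^2 \samplingError^{-3}}$. Drawing the $n$ uniform points costs $\bigOTilde{n \samplePointTime{\region}}$, constructing the $\binom{n}{2}$ edges costs $\bigOTilde{n^2 \evaluatePotentialTime{\potential}}$, computing the maximum degree is $\bigO{n^2}$, and when the algorithm reaches the hard-core sampling step we have $\fugacity[n] = \gppFugacity \volumeMeasure[\region]/n < \eulerE/\degree[\samplingGraph] < \criticalFugacity{\degree[\samplingGraph]}$, so the approximate sampler cited in the introduction runs in time $\bigOTilde{n \samplingError^{-1}}$. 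These sum to the claimed overall running time.

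For correctness, let $\mu_{\mathrm{alg}}$ denote the distribution of the algorithm's output and decompose $\dtv{\mu_{\mathrm{alg}}}{\GibbsPointProcess{\region}{\gppFugacity}{\potential}} \le E_1 + E_2 + E_3$. The summand $E_1 \le \samplingError/4$ accounts for replacing the approximate hard-core sampler by an exact one and passes through the deterministic postprocessing $\spinConfiguration \mapsto \{\randomPoint_i : \spinConfiguration[i] = 1\}$ without increase, by data processing for total variation. The summand $E_2$ is the probability of entering the ``failure'' branch triggered by $\degree[\samplingGraph] \ge \eulerE n/(\gppFugacity \volumeMeasure[\region])$; the second argument of the max in the definition of $n$ in the pseudocode is chosen so that \Cref{lemma:degree_bound} gives $E_2 \le \samplingError/4$.

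The deepest piece is $E_3$, the total variation distance between the Gibbs distribution and the distribution $\mu^*$ obtained from the algorithm with an exact hard-core sampler and no failure branch. Under the simplicity hypothesis, the R\'enyi--M\"onch theorem \cite[Theorem~9.2.XII]{daley2008introduction} reduces this to comparing symmetric Janossy densities with respect to the Poisson point process $\PoissonPointProcess[\gppFugacity]$. The Gibbs density at a configuration $\vectorize{x} = (x_1, \dots, x_k) \in \region^k$ is $\gppFugacity^k \eulerE^{-\hamiltonian[\vectorize{x}]}/\gppPartitonFunction{\region}[\gppFugacity][\potential]$. Symmetrizing over which $k$ of the $n$ vertices contribute the output and marginalizing over the positions of the remaining $n-k$ vertices, the corresponding density for $\mu^*$ is
\[
    \frac{n!}{(n-k)! \volumeMeasure[\region]^n}\, \fugacity[n]^k\, \eulerE^{-\hamiltonian[\vectorize{x}]} \int_{\region^{n-k}} M_n(\vectorize{x}, \vectorize{y})\, \productVolumeMeasure{n-k}[\intD \vectorize{y}],
\]
where $M_n(\vectorize{x}, \vectorize{y})$ is the expectation of $1/\hcPartitionFunction{\samplingGraph}[\fugacity[n]]$ over the random edges of $\samplingGraph$ on vertex positions $(\vectorize{x}, \vectorize{y})$, conditioned on the first $k$ vertices being independent in $\samplingGraph$. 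Three approximations bring this density close in $L^1$ to the Gibbs one: \Cref{thm:gpp_concentration}, applied with $\error, \errorProb$ of order $\samplingError$, replaces $M_n$ by $1/\gppPartitonFunction{\region}[\gppFugacity][\potential]$ up to a factor $1 \pm \bigO{\samplingError}$ on a high-probability event; the combinatorial prefactor $n!/((n-k)! n^k) = \prod_{i < k}(1 - i/n)$ differs from $1$ by at most $k^2/n$; and the $k$-sum can be truncated at $k_0 \in \bigO{\gppFugacity \volumeMeasure[\region] + \log \samplingError^{-1}}$ since both distributions are pointwise dominated by the Poisson density of intensity $\gppFugacity$ on $\region$.

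The main obstacle is the bookkeeping combining these three approximations: the first argument of the max in the definition of $n$, which is $\bigOTilde{\volumeMeasure[\region]^2 \samplingError^{-3}}$, has to be verified large enough that the cumulative $L^1$ error between the two Janossy densities, summed over $k \le k_0$, is at most $\samplingError$, giving $E_3 \le \samplingError/2$. Combining $E_1 \le \samplingError/4$, $E_2 \le \samplingError/4$, and $E_3 \le \samplingError/2$ closes the triangle inequality and establishes the claimed $\samplingError$-approximate sampling guarantee.
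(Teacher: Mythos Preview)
Your proof follows the same architecture as the paper's: the same coupling split into the degree-failure branch, the approximate-versus-exact hard-core sampler, and the comparison of the idealized sampler $\mu^*$ to $\GibbsPointProcess{\region}{\gppFugacity}{\potential}$; and for the last piece the same use of R\'enyi--M\"onch together with the three approximations you list. The paper organizes that piece slightly differently, first conditioning on the good event $\samplingGraph \in \approximationGraphs{n}{\samplingError/12}$ (\Cref{lemma:conditional_density,lemma:conditional_density_bound,lemma:modified_sampler_dtv}) rather than splitting inside the density integral, but this is cosmetic.

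There is, however, one real gap. Your claim that \Cref{thm:gpp_concentration} ``replaces $M_n$ by $1/\gppPartitonFunction{\region}[\gppFugacity][\potential]$ on a high-probability event'' does not follow as stated: the theorem controls $\hcPartitionFunction{\samplingGraph}[\fugacity[n]]$ for $\samplingGraph \sim \canonicalDistribution{n}{\region}{\potential}$ with all $n$ positions random, whereas your $M_n(\vectorize{x},\vectorize{y})$ fixes every position and additionally conditions on $[k] \in \independentSets{\samplingGraph}$. For the density \emph{upper} bound this is harmless---one may simply drop the good-event indicator---but for the \emph{lower} bound you must show that, even after pinning the first $k$ positions to $\vectorize{x}$ and forcing $[k]$ to be independent, the graph still lands in $\approximationGraphs{n}{\cdot}$ with probability $1 - \bigO{\samplingError}$. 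The paper obtains this via a decoupling step that is the actual technical content of the argument: using \Cref{obs:hardcore_bounds} it shows that membership of the induced subgraph $H'$ on the last $n-k$ vertices in $\approximationGraphs{n-k}{\samplingError/18}$ already forces $\samplingGraph \in \approximationGraphs{n}{\samplingError/12}$, and since $H'$ depends only on the \emph{random} positions $\vectorize{y}$, \Cref{thm:gpp_concentration} now applies directly to $H' \sim \canonicalDistribution{n-k}{\region}{\potential}$. This reduction (the lower-bound half of \Cref{lemma:conditional_density_bound}) is what forces the $\samplingError^{-3}$ in the choice of $n$ and should not be dismissed as bookkeeping.
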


\Cref{thm:sampling_simplified} follows immediately from the theorem above.
To prove \Cref{thm:sampling}, we start by analyzing a simplified algorithm, given in \Cref{algo:modifed_sampling}.

\begin{algorithm}[h]
	\SetAlgoLined
	\KwData{Instance of a repulsive Gibbs point process $(\region, \gppFugacity, \potential)$, error bound $\samplingError \in (0, 1]$}
	\KwResult{multiset of points in $\region$}
	set $n = \left\lceil\max\left\{
	\substack{
		8 \frac{18^2 \cdot 12}{\samplingError^3} \max\left\{\eulerE^6 \gppFugacity^2 \volumeMeasure[\region]^2, \ln\left(\frac{4 \cdot 18}{\samplingError}\right)\right\}, \\
		6 \ln\left(\frac{4 \eulerE}{\samplingError}\right) \max\left\{\frac{1}{\eulerE - \gppFugacity \generalizedTemperedness{\potential}}, \frac{\gppFugacity \generalizedTemperedness{\potential} }{\left(\eulerE - \gppFugacity \generalizedTemperedness{\potential}\right)^{2}}\right\} \gppFugacity \volumeMeasure[\region] \ln\left(3 \ln\left(\frac{4 \eulerE}{\samplingError}\right) \max\left\{\frac{1}{\eulerE - \gppFugacity \generalizedTemperedness{\potential}}, \frac{\gppFugacity \generalizedTemperedness{\potential} }{\left(\eulerE - \gppFugacity \generalizedTemperedness{\potential}\right)^{2}}\right\} \gppFugacity \volumeMeasure[\region]\right)^2
	}
	\right\} \right\rceil$\;
	for each $i \in [n]$ draw $\randomPoint_i \sim \uniformDistributionOn{\region}$ independently\;
	draw $\edges \subseteq \binom{[n]}{2}$ s.t. $\{i, j\} \in \edges$ with probability $\canonicalEdgeProbability{\potential}[\randomPoint_i][\randomPoint_j] = 1 - \eulerE^{-\potential[\randomPoint_i][\randomPoint_j]}$ independently\;
	set $\samplingGraph = ([n], \edges)$\;
	sample $\spinConfigurationModified \in \spinConfigurations{\samplingGraph}$ exactly from the hard-core distribution $\hcGibbsDistribution{\samplingGraph}{\fugacity[n]}$ where $\fugacity[n] = \frac{\gppFugacity \volumeMeasure[\region]}{n}$\;
	set $\modifiedSampledPointSet = \{\randomPoint_i \mid i \in [n] \text{ s.t. } \spinConfigurationModified[i] = 1\}$ (possibly multiset)\;
	\Return $\modifiedSampledPointSet$\;
	\caption{\label{algo:modifed_sampling}Modified sampling process}
\end{algorithm}

The main difference between \Cref{algo:sampling} and \Cref{algo:modifed_sampling} is that the latter one does not check if the maximum degree of the sampled graph $\samplingGraph$ is bounded and that is assumes access to a perfect sampler for $\hcGibbsDistribution{\samplingGraph}{\fugacity[n]}$.
It is not clear if such a perfect sampler for the hard-core Gibbs distribution can be realized in polynomial time, especially for arbitrary vertex degrees.
Therefore, \Cref{algo:modifed_sampling} is not suitable for algorithmic applications.
However, the main purpose of \Cref{algo:modifed_sampling} is that the distribution of point multisets that it outputs are much easier to analyze.
We use this, together with a coupling argument, to bound the total variation distance between the output of \Cref{algo:sampling} and $\GibbsPointProcess{\region}{\gppFugacity}{\potential}$.
Once this is done, it remains to show that \Cref{algo:sampling} satisfies the running time requirements, given in \Cref{thm:sampling}.

To analyze the output distribution of \Cref{algo:modifed_sampling}, we start by considering the resulting distribution of multisets of points (or counting measures respectively) when conditioning on the event that the hard-core partition function $\hcPartitionFunction{\samplingGraph}[\fugacity[n]]$ of the drawn graph $\samplingGraph$ is close to the partition function of the continuous process $\gppPartitonFunction{\region}[\gppFugacity][\potential]$.
More specifically, for any given $n$ and $\alpha \in \R_{\ge 0}$, let $\approximationGraphs{n}{\alpha} = \{H \in \graphs{n} \mid \absolute{\hcPartitionFunction{H}[\fugacity[n]] - \gppPartitonFunction{\region}[\gppFugacity][\potential]} \le \alpha \gppPartitonFunction{\region}[\gppFugacity][\potential]\}$.
We derive an explicit density for the output of \Cref{algo:modifed_sampling} with respect to a Poisson point process under the condition that $\samplingGraph \in \approximationGraphs{n}{\alpha}$ for some sufficiently small $\alpha$.
To this end, we use the following characterization of simple point processes via so called \emph{void probabilities}.

\begin{theorem}[{R\'{e}nyi--Mönch, see \cite[Theorem 9.2.XII]{daley2008introduction}}]
	\label{thm:pp_void_probabilities}
	Let $(\pointProcessSpace, \dist)$ be a complete, separable metric space, let $\Borel = \Borel[\pointProcessSpace]$ be the associated Borel algebra.
	Let $P$ and $Q$ be simple point process on $(\pointProcessSpace, \dist)$.
	If, for $\countingMeasure_P \sim P$ and $\countingMeasure_Q \sim Q$ and for all bounded $B \in \Borel$, it holds that
	\[
		\Pr{\countingMeasure_{P}(B) = 0} = \Pr{\countingMeasure_{Q}(B)= 0},
	\]
	then $P = Q$.
\end{theorem}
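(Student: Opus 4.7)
The plan is to reduce to a $\pi$-$\lambda$ argument on cylinder events and then upgrade void probabilities to full counting probabilities via a partition-refinement argument that crucially uses simplicity. Since $\countingSigmaAlgebra$ is generated by the evaluation maps $\{\countFunction{A} \mid A \in \Borel\}$, Dynkin's $\pi$-$\lambda$ theorem tells us it suffices to show $P$ and $Q$ agree on the $\pi$-system of cylinder events $\bigl\{\countingMeasure \in \countingMeasures \mid \countingMeasure[A_1] = k_1, \dots, \countingMeasure[A_r] = k_r\bigr\}$ indexed by pairwise disjoint bounded $A_1, \dots, A_r \in \Borel$ and $k_1, \dots, k_r \in \N$. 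Reduction to disjoint families is routine: any finite tuple of bounded Borel sets refines to a disjoint one via intersections and complements, and counts on the original tuple are fixed integer combinations of counts on the refined atoms.

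The first, essentially immediate, step is that single-set void probabilities determine all joint void probabilities over disjoint families, since for pairwise disjoint bounded $A_1, \dots, A_r$,
\[
    \Pr{\countingMeasure[A_1] = 0, \dots, \countingMeasure[A_r] = 0} = \Pr{\countingMeasure[\bigcup_{i = 1}^{r} A_i] = 0},
\]
and the right-hand side is covered by the hypothesis.

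The main step upgrades joint voids to joint counts using simplicity. Fix pairwise disjoint bounded $A_1, \dots, A_r$. Using separability of $(\pointProcessSpace, \dist)$, for each $m \in \N_{\ge 1}$ cover each $A_i$ by countably many balls of radius $\frac{1}{2m}$ centered at a dense sequence and disjointify to obtain a countable Borel partition $\bigl(A_{i, j}^{(m)}\bigr)_{j \in \N_{\ge 1}}$ of $A_i$ into cells of diameter at most $\frac{1}{m}$. Because $\bigcup_i A_i$ is bounded and $\countingMeasure$ is locally finite, almost surely only finitely many atoms of $\countingMeasure$ lie in $\bigcup_i A_i$; by simplicity they are pairwise distinct with positive minimum separation. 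Hence, for all sufficiently large $m$ (depending on $\countingMeasure$), each cell carries at most one atom, so almost surely under either $P$ or $Q$,
\[
    \countingMeasure[A_i] = \lim_{m \to \infty} \sum_{j \in \N_{\ge 1}} \ind{\countingMeasure[A_{i, j}^{(m)}] \ge 1} = \lim_{m \to \infty} \sum_{j \in \N_{\ge 1}} \bigl(1 - \ind{\countingMeasure[A_{i, j}^{(m)}] = 0}\bigr).
\]
For each fixed $m$, the joint law of the void-indicator family $\bigl(\ind{\countingMeasure[A_{i, j}^{(m)}] = 0}\bigr)_{i, j}$ is determined by joint voids on disjoint bounded sets, hence by the preceding paragraph it is determined by the single-set void function $B \mapsto \Pr{\countingMeasure[B] = 0}$. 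Passing to the almost-sure limit shows that the joint law of $(\countingMeasure[A_1], \dots, \countingMeasure[A_r])$ coincides under $P$ and $Q$, finishing the verification on the generating $\pi$-system.

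The main obstacle is precisely the refinement identity above, and this is where simplicity is indispensable: without it, a single cell could contain multiple atoms of $\countingMeasure$ and the indicator sum would under-count, invalidating the reconstruction. A minor technicality is constructing the countable Borel partition in a general CSMS, which is handled by separability as sketched; everything else is standard measure-theoretic bookkeeping, so the hypothesis on void probabilities is exactly enough.
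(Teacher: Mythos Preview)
The paper does not prove this statement at all: it is quoted verbatim as a classical result (R\'{e}nyi--M\"{o}nch, cited from Daley and Vere-Jones) and used as a black box in the analysis of the sampling algorithm. So there is no ``paper's own proof'' to compare against.

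Your argument is correct and is essentially the standard proof of the R\'{e}nyi--M\"{o}nch theorem. The only point worth spelling out a little more carefully is the sentence ``the joint law of the void-indicator family $\bigl(\ind{\countingMeasure[A_{i,j}^{(m)}] = 0}\bigr)_{i,j}$ is determined by joint voids on disjoint bounded sets.'' What you are implicitly using is M\"{o}bius inversion over the Boolean lattice: for any \emph{finite} subfamily $B_1, \dots, B_\ell$ of disjoint bounded cells, knowing $\Pr{\countingMeasure[\bigcup_{j \in S} B_j] = 0}$ for every $S \subseteq [\ell]$ recovers, by inclusion--exclusion, all probabilities of the form $\Pr{\countingMeasure[B_j] = 0 \text{ for } j \in S,\ \countingMeasure[B_j] \ge 1 \text{ for } j \notin S}$, hence the full joint law of the finite indicator tuple. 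The countable sum $\sum_j \ind{\countingMeasure[A_{i,j}^{(m)}] \ge 1}$ is then handled as the monotone limit of finite partial sums, each of whose laws agrees under $P$ and $Q$. Everything else in your write-up is fine as stated.
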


\Cref{thm:pp_void_probabilities} greatly simplifies proving that a given candidate function actually is a valid density for the point process in question, as it implies that it is sufficient to check if it yields the correct void probabilities.

Before we proceed, we introduce some additional notation that is useful for stating and proving our next lemmas.
For a given graph $H = (\vertices, \edges)$, we denote by $\independentSets{H} \subseteq \powerset{\vertices}$ the set of all independent sets in $H$.
Moreover, for every spin configuration $\spinConfiguration \in \spinConfigurations{H}$, we denote by $\configurationToSet{\spinConfiguration}$ the set of all vertices $v \in \vertices$ with $\spinConfiguration[v] = 1$.
Note that, for a hard-core model on $H$ with $\fugacity > 0$, this construction gives a one-to-one correspondence between $\independentSets{H}$ and the set of spin configurations $\spinConfiguration \in \spinConfigurations{H}$ with $\hcGibbsDistribution{H}{\fugacity}[\spinConfiguration] > 0$.
Therefore, it is often convenient to argue about elements in $\independentSets{H}$ instead of using spin configurations.

\begin{lemma}
	\label{lemma:conditional_density}
	Let $(\pointProcessSpace, \dist)$ be a complete, separable metric space, let $\Borel = \Borel[\pointProcessSpace]$ be the Borel algebra and let $\volumeMeasure$ be a locally finite reference measure on $(\pointProcessSpace, \Borel)$.
	Let $\region \subseteq \pointProcessSpace$ be bounded and measurable, let $\gppFugacity \in \R_{\ge 0}$ and let $\potential: \pointProcessSpace^2 \to \R_{\ge 0} \cup \{\infty\}$ be a symmetric repulsive potential.
	Furthermore, for any given $\samplingError \in  (0, 1]$, let $\modfiedSamplerOutput{\samplingError}$ be the point process produced by \Cref{algo:modifed_sampling} conditioned on $\samplingGraph \in \approximationGraphs{n}{\frac{\samplingError}{12}}$, and let $\PoissonPointProcess[\gppFugacity]$ denote a Poisson point process with intensity $\gppFugacity$.
	If the Gibbs point process $\GibbsPointProcess{\region}{\gppFugacity}{\potential}$ is simple, then $\modfiedSamplerOutput{\samplingError}$ has a density with respect to $\PoissonPointProcess$ of the form
	\begin{align*}
		\modifiedSamplingDensity{\samplingError}[\countingMeasure]
		=  \ind{\countingMeasure \in \countingMeasures[\region]} &\Pr{\samplingGraph \in \approximationGraphs{n}{\frac{\samplingError}{12}}}^{-1} \left(\prod_{i=0}^{\countingMeasure[\region] - 1} 1 - \frac{i}{n} \right) \ind{\countingMeasure[\region] \le n} \\
		&\cdot
		\left(\prod_{\{x, y\} \in \binom{\pointSet[\countingMeasure]}{2}} \eulerE^{- \countFunction{x}[\countingMeasure] \countFunction{y}[\countingMeasure] \potential[x][y]}\right)
		\left(\prod_{x \in \pointSet[\countingMeasure]} \eulerE^{- \frac{\countFunction{x}[\countingMeasure] (\countFunction{x}[\countingMeasure]-1)}{2} \potential[x][x]}\right) \densityNormalizingFunction{n}{\countingMeasure[\region]}[\countingMeasureToTuple[\countingMeasure]] \eulerE^{\gppFugacity\volumeMeasure[\region]} ,
	\end{align*}
	where $\countingMeasureToTuple$ maps every finite counting measure $\countingMeasure$ to an arbitrary but fixed tuple $(x_1, \dots, x_{\countingMeasure[\pointProcessSpace]})$ such that $\countingMeasure = \sum_{i = 1}^{\countingMeasure[\pointProcessSpace]} \DiracMeasure{x_i}$ and
	\begin{align*}
		\densityNormalizingFunction{n}{k}[\vectorize{x}] = \sum_{\substack{H \in \approximationGraphs{n}{\frac{\samplingError}{12}}:\\ [k] \in \independentSets{H}}} \frac{1}{\hcPartitionFunction{H}[\fugacity[n]]}
		\bigintss_{\region^{n - k}}
		&\left( \prod_{\substack{(i, j) \in [k] \times [n - k]: \\\{i, j + k\} \in \edges_{H}}} 1 - \eulerE^{-\potential[x_i][y_j]}\right)
		\left( \prod_{\substack{(i, j) \in [k] \times [n - k]: \\\{i, j + k\} \notin \edges_{H}}} \eulerE^{-\potential[x_i][y_j]}\right) \\
		&\left( \prod_{\substack{\{i, j\} \in \binom{[n - k]}{2}: \\\{i + k, j + k\} \in \edges_{H}}} 1 - \eulerE^{-\potential[y_i][y_j]}\right)
		\left( \prod_{\substack{\{i, j\} \in \binom{[n - k]}{2}: \\\{i + k, j + k\} \notin \edges_{H}}} \eulerE^{-\potential[y_i][y_j]}\right)
		\productUniformDistributionOn{\region}{n - k}[\intD \vectorize{y}]
	\end{align*}
	for all $\vectorize{x} = (x_1, \dots, x_k) \in \region^{k}$.
\end{lemma}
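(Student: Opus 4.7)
The plan is to identify $\modifiedSamplingDensity{\samplingError}$ by directly computing $\EWrt{h(\modfiedSamplerOutput{\samplingError})}$ for an arbitrary bounded measurable test function $h$ on $(\countingMeasures, \countingSigmaAlgebra)$, and then matching the result against $\int h \cdot \modifiedSamplingDensity{\samplingError} \, \intD \PoissonPointProcess[\gppFugacity]$ expanded via the standard Poisson--Janossy formula. Conditioning on the uniform points $\vectorize{X} = (X_i)_{i \in [n]}$, the graph $\samplingGraph$, and the hard-core sample $\spinConfigurationModified$, the tower property yields
\[
	\EWrt{h(\modfiedSamplerOutput{\samplingError})} \cdot \Pr{\samplingGraph \in \approximationGraphs{n}{\samplingError/12}} = \sum_{H \in \approximationGraphs{n}{\samplingError/12}} \sum_{I \in \independentSets{H}} \frac{\fugacity[n]^{\size{I}}}{\hcPartitionFunction{H}[\fugacity[n]]} \EWrt{\ind{\samplingGraph = H} h\left(\sum\nolimits_{i \in I} \DiracMeasure{X_i}\right)}.
\]

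The next step is to exploit the exchangeability of $X_1, \dots, X_n$ together with the invariance of $\approximationGraphs{n}{\samplingError/12}$ under vertex relabeling. For any $H \in \approximationGraphs{n}{\samplingError/12}$ and any $I \in \independentSets{H}$ of size $k$, picking a permutation $\pi$ mapping $I$ to $[k]$ and applying the joint change of variables $\vectorize{z} \mapsto \pi^{-1} \vectorize{z}$ combined with $H \mapsto \pi H$ identifies the contribution of $(H, I)$ with that of $(\pi H, [k])$. Summing over the $\binom{n}{k}$ size-$k$ subsets of $[n]$ thus collects a factor $\binom{n}{k}$ and reduces the inner sum to one over graphs $H \in \approximationGraphs{n}{\samplingError/12}$ containing $[k]$ as an independent set. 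For each such $H$, I would split $\Pr{\samplingGraph = H}[\vectorize{X} = \vectorize{z}]$ into three products corresponding to edges inside $[k]$, between $[k]$ and $[n] \setminus [k]$, and inside $[n] \setminus [k]$; since $[k]$ is independent in $H$, the first product is exactly $\prod_{\{i,j\} \in \binom{[k]}{2}} \eulerE^{-\potential[z_i][z_j]}$, while integrating the remaining two products against $\productUniformDistributionOn{\region}{n-k}$ over $z_{k+1}, \dots, z_n$ and summing over the admissible $H$ reproduces precisely $\densityNormalizingFunction{n}{k}[\vectorize{z}]$. Substituting $\fugacity[n] = \gppFugacity \volumeMeasure[\region] / n$, converting $\intD \productUniformDistributionOn{\region}{k} = \volumeMeasure[\region]^{-k} \intD \productVolumeMeasure{k}$, and noting $\binom{n}{k} n^{-k} = \frac{1}{k!} \prod_{i=0}^{k-1}(1 - i/n)$ rearranges the resulting expression into the Poisson--Janossy expansion $\eulerE^{-\gppFugacity \volumeMeasure[\region]} \sum_{k \ge 0} \frac{\gppFugacity^k}{k!} \int_{\region^k} h \cdot \modifiedSamplingDensity{\samplingError} \, \intD \productVolumeMeasure{k}$ applied to the candidate density.

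The main obstacle will be the symmetrization argument: one must verify that the permutation action $(\vectorize{X}, \samplingGraph) \mapsto (\pi^{-1} \vectorize{X}, \pi \samplingGraph)$ preserves the joint law, the event $\{\samplingGraph \in \approximationGraphs{n}{\samplingError/12}\}$, and the value of $\hcPartitionFunction{\samplingGraph}[\fugacity[n]]$, so that contributions from distinct size-$k$ subsets genuinely coincide. A secondary subtlety is the self-interaction factor $\prod_{x \in \pointSet[\countingMeasure]} \eulerE^{-\countFunction{x}[\countingMeasure](\countFunction{x}[\countingMeasure]-1)/2 \cdot \potential[x][x]}$ appearing in $\modifiedSamplingDensity{\samplingError}$: the assumption that $\GibbsPointProcess{\region}{\gppFugacity}{\potential}$ is simple forces $\volumeMeasure\vert_{\region}$ to be non-atomic (otherwise its Gibbs density would assign positive mass to configurations with repeated points), hence $\uniformDistributionOn{\region}$ is non-atomic and $X_1, \dots, X_n$ are almost surely distinct, making the self-interaction factor evaluate to $1$ on this full-measure set of simple configurations. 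One can also finish cleanly via the R\'{e}nyi--M\"{o}nch characterization (\Cref{thm:pp_void_probabilities}), matching void probabilities of both simple point processes to avoid explicit handling of the measure-zero set of non-simple configurations.
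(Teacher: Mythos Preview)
Your core computation—symmetrizing over size-$k$ independent sets to reduce to $I=[k]$, factoring the edge-probability product into the within-$[k]$, cross, and outside parts, and matching against the Poisson--Janossy expansion—is exactly what the paper carries out. The only structural difference is that the paper commits to R\'{e}nyi--M\"{o}nch from the outset and works solely with test functions $h(\countingMeasure)=\ind{\countingMeasure[B]=0}$, whereas you propose arbitrary bounded $h$ with R\'{e}nyi--M\"{o}nch as a fallback. The symmetrization you flag as the ``main obstacle'' is harmless: the joint law of $(\vectorize{X},\samplingGraph)$ is manifestly permutation-invariant by construction, and both $\hcPartitionFunction{\cdot}[\fugacity[n]]$ and membership in $\approximationGraphs{n}{\samplingError/12}$ depend only on the isomorphism class of the graph; the paper dispatches this with a one-word ``by symmetry''.

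There is, however, a genuine gap in your handling of simplicity. You claim that simplicity of $\GibbsPointProcess{\region}{\gppFugacity}{\potential}$ forces $\volumeMeasure\vert_{\region}$ to be non-atomic, but this is false: an atom $x$ with $\potential[x][x]=\infty$ is fully compatible with simplicity, since the factor $\eulerE^{-\potential[x][x]}=0$ already kills any configuration placing two points at $x$. The correct deduction (which the paper gives) is only that, for every $x\in\region$, either $\volumeMeasure[\{x\}]=0$ or $\potential[x][x]=\infty$ (or $\gppFugacity=0$). Under this weaker conclusion $X_1,\dots,X_n$ need \emph{not} be almost surely distinct, so your main route via general test functions would have to treat non-simple configurations explicitly. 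The repair is short: if $X_i=X_j=x$ at an atom, then $\canonicalEdgeProbability{\potential}[x][x]=1$, the edge $\{i,j\}$ is present almost surely, and $i,j$ cannot both lie in the independent set—so $\modfiedSamplerOutput{\samplingError}$ is still simple; on the density side, the self-interaction factor vanishes at any such configuration, so both sides assign it zero mass and agree there too. With this correction your argument goes through; alternatively, take the R\'{e}nyi--M\"{o}nch route you mention at the end, which is precisely the paper's approach.
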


\begin{proof}
	First, observe that $\samplingGraph \sim \canonicalDistribution{n}{\region}{\potential}$.
	As $n \ge 4 \frac{12^3}{\samplingError^{3}} \max\left\{\eulerE^6 \gppFugacity^2 \volumeMeasure[\region]^2, \ln\left(4 \frac{12}{\samplingError}\right)^2\right\}$, \Cref{thm:gpp_concentration} implies that $\Pr{\samplingGraph \in \approximationGraphs{n}{\frac{\samplingError}{12}}} \ge 1 - \frac{\samplingError}{12} > 0$.
	Therefore, conditioning on the event $\samplingGraph \in \approximationGraphs{n}{\frac{\samplingError}{12}}$ is well defined.

	Next, note that, for all $x \in \region$ it holds that
	\[
		\Pr{\countingMeasure[\{x\}] \ge 2}
		\ge \frac{\eulerE^{-\potential[x][x]} \gppFugacity^2 \volumeMeasure[\{x\}]^2}{\gppPartitonFunction{\region}[\gppFugacity][\potential]}
		\ge \frac{\eulerE^{-\potential[x][x]} \gppFugacity^2 \volumeMeasure[\{x\}]^2}{\eulerE^{\gppFugacity \volumeMeasure[\region]}}
	\]
	for $\countingMeasure \sim \GibbsPointProcess{\region}{\gppFugacity}{\potential}$.
	Thus, if $\GibbsPointProcess{\region}{\gppFugacity}{\potential}$ is simple (i.e., $\Pr{\countingMeasure[\{x\}] \ge 2}$ for all $x \in \region$), it holds that $\gppFugacity = 0$ or, for all $x \in \region$, $\volumeMeasure[\{x\}] = 0$ or $\potential[x][x] = \infty$.
	This implies that the output of \Cref{algo:modifed_sampling} is simple as well, and consequently $\modfiedSamplerOutput{\samplingError}$ is a simple point process.

	Knowing that $\modfiedSamplerOutput{\samplingError}$ is simple, \Cref{thm:pp_void_probabilities} implies that, in order to verify that $\modifiedSamplingDensity{\samplingError}$ is indeed a density for $\modfiedSamplerOutput{\samplingError}$, it suffices to prove that it yields the correct void probabilities.
	Formally, this means showing that for all bounded $B \in \Borel$ it holds that
	\[
		\Pr{\modifiedSampledPointSet \cap B = \emptyset}[\samplingGraph \in \approximationGraphs{n}{\frac{\samplingError}{12}}]
		= \int_{\countingMeasures} \ind{\countingMeasure[B] = 0} \modifiedSamplingDensity{\samplingError}[\countingMeasure] \PoissonPointProcess[\gppFugacity][\intD \countingMeasure]
	\]
	for $\modifiedSampledPointSet$ and $\samplingGraph$ as in \Cref{algo:modifed_sampling}.

	To prove this, we first write
	\begin{align*}
		\int_{\countingMeasures} \ind{\countingMeasure[B] = 0} \modifiedSamplingDensity{\samplingError}[\countingMeasure] \PoissonPointProcess[\gppFugacity][\intD \countingMeasure]
		&= \int_{\countingMeasures[\region]} \ind{\countingMeasure[B] = 0} \modifiedSamplingDensity{\samplingError}[\countingMeasure] \PoissonPointProcess[\gppFugacity][\intD \countingMeasure] \\
		&= \eulerE^{-\gppFugacity\volumeMeasure[\region]} \cdot \left( \modifiedSamplingDensity{\samplingError}[\zeroFunction] + \sum_{k \in \N_{\ge 1}} \frac{\gppFugacity^k}{k!} \int_{\region^k} \ind{\forall i \in [k]: x_i \notin B} \modifiedSamplingDensity{\samplingError}[\sum_{i \in [k]} \DiracMeasure{x_i}] \productVolumeMeasure{k}[\intD \vectorize{x}]\right) ,
	\end{align*}
	where $\zeroFunction$ denotes the constant $0$ measure on $\pointProcessSpace$.
	Note that
	\begin{align*}
		&\eulerE^{-\gppFugacity\volumeMeasure[\region]} \modifiedSamplingDensity{\samplingError}[\zeroFunction]\\
		&= \Pr{\samplingGraph \in \approximationGraphs{n}{\frac{\samplingError}{12}}}^{-1} \cdot \sum_{H \in \approximationGraphs{n}{\frac{\samplingError}{12}}} \frac{1}{\hcPartitionFunction{H}[\fugacity[n]]} 
		\bigintss_{\region^{n}}
		\left( \prod_{\substack{\{i, j\} \in \binom{[n]}{2}: \\\{i, j\} \in \edges_{H}}} 1 - \eulerE^{-\potential[y_i][y_j]}\right)
		\left( \prod_{\substack{\{i, j\} \in \binom{[n]}{2}: \\\{i, j\} \notin \edges_{H}}} \eulerE^{-\potential[y_i][y_j]}\right)
		\productUniformDistributionOn{\region}{n}[\intD \vectorize{y}] \\
		&= \Pr{\samplingGraph \in \approximationGraphs{n}{\frac{\samplingError}{12}}}^{-1} \cdot \sum_{H \in \approximationGraphs{n}{\frac{\samplingError}{12}}} \Pr{\configurationToSet{\spinConfigurationModified} = \emptyset}[\samplingGraph = H] \Pr{\samplingGraph = H} \\
		&= \frac{\Pr{\configurationToSet{\spinConfigurationModified} = \emptyset \wedge \samplingGraph \in \approximationGraphs{n}{\frac{\samplingError}{12}}}}{\Pr{\samplingGraph \in \approximationGraphs{n}{\frac{\samplingError}{12}}}} \\
		&= \Pr{\configurationToSet{\spinConfigurationModified} = \emptyset}[\samplingGraph \in \approximationGraphs{n}{\frac{\samplingError}{12}}]
	\end{align*}
	for $\spinConfigurationModified$ as in \Cref{algo:modifed_sampling}.
	We proceed by a case distinction based on $k$.
	For every $k > n$ and $(x_1, \dots, x_k) \in \region^k$ we have $\modifiedSamplingDensity{\samplingError}[\sum_{i \in [k]} \DiracMeasure{x_k}] = 0$.
	Therefore, we get
	\[
		\int_{\region^k} \ind{\forall i \in [k]: x_i \notin B} \modifiedSamplingDensity{\samplingError}[\sum_{i \in [k]} \DiracMeasure{x_i}] \productVolumeMeasure{k}[\intD \vectorize{x}] = 0
	\]
	for all $k > n$.
	Now, consider $k \in [n]$ and observe that for all $\vectorize{x} = (x_1, \dots, x_k) \in \region^k$ we have
	\[
		\densityNormalizingFunction{n}{k}[\countingMeasureToTuple[\sum_{i \in [k]} \DiracMeasure{x_i}]] = \densityNormalizingFunction{n}{k}[\vectorize{x}]
	\]
	by symmetry.
	Moreover, it holds that
	\[
		\frac{\gppFugacity^{k}}{k!} \left(\prod_{i=0}^{k - 1} 1 - \frac{i}{n} \right)
		= \binom{n}{k} \frac{\fugacity[n]^k}{\volumeMeasure[\region]^k} .
	\]
	Therefore, we have
	\begin{align*}
		\eulerE^{-\gppFugacity\volumeMeasure[\region]} \frac{\gppFugacity^{k}}{k!}
		&\int_{\region^k} \ind{\forall i \in [k]: x_i \notin B} \modifiedSamplingDensity{\samplingError}[\sum_{i \in [k]} \DiracMeasure{x_i}] \productVolumeMeasure{k}[\intD \vectorize{x}] \\
		&= \Pr{\samplingGraph \in \approximationGraphs{n}{\frac{\samplingError}{12}}}^{-1} \binom{n}{k} \fugacity[n]^k \bigintsss_{\region^k} \ind{\forall i \in [k]: x_i \notin B} \left(\prod_{{i, j} \in \binom{[k]}{2}} \eulerE^{- \potential[x_i][x_j]} \right)  \densityNormalizingFunction{n}{k}[\vectorize{x}] \productUniformDistributionOn{\region}{k}[\intD \vectorize{x}] .
	\end{align*}
	Next, note that
	\begin{align*}
		&\fugacity[n]^k
		\bigintsss_{\region^k} \ind{\forall i \in [k]: x_i \notin B} \left(\prod_{{i, j} \in \binom{[k]}{2}} \eulerE^{- \potential[x_i][x_j]} \right)  \densityNormalizingFunction{n}{k}[\vectorize{x}] \productUniformDistributionOn{\region}{k}[\intD \vectorize{x}] \\
		&\hspace{1em}= \sum_{H \in \approximationGraphs{n}{\frac{\samplingError}{12}}} \ind{[k] \in \independentSets{H}} \frac{\fugacity[n]^k}{\hcPartitionFunction{H}[\fugacity[n]]} \bigintsss_{\region^n} \ind{\forall i \in [k]: x_i \notin B}
			\left( \prod_{\substack{\{i, j\} \in \binom{[n]}{2}: \\\{i, j\} \in \edges_{H}}} 1 - \eulerE^{-\potential[x_i][x_j]}\right)
			\left( \prod_{\substack{\{i, j\} \in \binom{[n]}{2}: \\\{i, j\} \notin \edges_{H}}} \eulerE^{-\potential[x_i][x_j]}\right)
		\productUniformDistributionOn{\region}{n}[\intD \vectorize{x}] \\
		&\hspace{1em}= \sum_{H \in \approximationGraphs{n}{\frac{\samplingError}{12}}} \Pr{\configurationToSet{\spinConfigurationModified} = [k]}[\samplingGraph = H] \Pr{\samplingGraph = H \wedge \forall i \in [k]: \randomPoint_i \notin B}
	\end{align*}
	for $\randomPoint_1, \dots, \randomPoint_n$ as in \Cref{algo:modifed_sampling}.
	Furthermore, because the event $\configurationToSet{\spinConfigurationModified} = [k]$ is independent of $\randomPoint_1, \dots, \randomPoint_n$ given $\samplingGraph$, it holds that
	\begin{align*}
		&\sum_{H \in \approximationGraphs{n}{\frac{\samplingError}{12}}} \Pr{\configurationToSet{\spinConfigurationModified} = [k]}[\samplingGraph = H] \Pr{\samplingGraph = H \wedge \forall i \in [k]: \randomPoint_i \notin B} \\
		&\hspace{3em}= \sum_{H \in \approximationGraphs{n}{\frac{\samplingError}{12}}} \Pr{\configurationToSet{\spinConfigurationModified} = [k] \wedge \samplingGraph = H \wedge \forall i \in [k]: \randomPoint_i \notin B} \\
		&\hspace{3em}= \Pr{\configurationToSet{\spinConfigurationModified} = [k] \wedge \samplingGraph \in \approximationGraphs{n}{\frac{\samplingError}{12}} \wedge \forall i \in [k]: \randomPoint_i \notin B}
	\end{align*}
	and
	\begin{align*}
		\eulerE^{-\gppFugacity\volumeMeasure[\region]} \frac{\gppFugacity^{k}}{k!} \int_{\region^k} \ind{\forall i \in [k]: x_i \notin B} \modifiedSamplingDensity{\samplingError}[\sum_{i \in [k]} \DiracMeasure{x_i}] \productVolumeMeasure{k}[\intD \vectorize{x}]
		&= \binom{n}{k} \frac{\Pr{\configurationToSet{\spinConfigurationModified} = [k] \wedge \samplingGraph \in \approximationGraphs{n}{\frac{\samplingError}{12}} \wedge \forall i \in [k]: \randomPoint_i \notin B}}{\Pr{\samplingGraph \in \approximationGraphs{n}{\frac{\samplingError}{12}}}} \\
		&= \binom{n}{k} \Pr{\configurationToSet{\spinConfigurationModified} = [k] \wedge \forall i \in [k]: \randomPoint_i \notin B}[\samplingGraph \in \approximationGraphs{n}{\frac{\samplingError}{12}}] \\
		&= \sum_{\vertices' \in \binom{[n]}{k}} \Pr{\configurationToSet{\spinConfigurationModified} = \vertices' \wedge \forall i \in \vertices': \randomPoint_i \notin B}[\samplingGraph \in \approximationGraphs{n}{\frac{\samplingError}{12}}] ,
	\end{align*}
	where the last equality is due to symmetry.
	Combining everything yields
	\begin{align*}
		\int_{\countingMeasures} \ind{\countingMeasure[B] = 0} \modifiedSamplingDensity{\samplingError}[\countingMeasure] \PoissonPointProcess[\gppFugacity][\intD \countingMeasure]
		&= \Pr{\configurationToSet{\spinConfigurationModified} = \emptyset}[\samplingGraph \in \approximationGraphs{n}{\frac{\samplingError}{12}}] + \sum_{k=1}^{n} \sum_{\vertices' \in \binom{[n]}{k}} \Pr{\configurationToSet{\spinConfigurationModified} = \vertices' \wedge \forall i \in \vertices': \randomPoint_i \notin B}[\samplingGraph \in \approximationGraphs{n}{\frac{\samplingError}{12}}] \\
		&= \sum_{\vertices' \in \powerset{[n]}} \Pr{\configurationToSet{\spinConfigurationModified} = \vertices' \wedge \forall i \in \vertices': \randomPoint_i \notin B}[\samplingGraph \in \approximationGraphs{n}{\frac{\samplingError}{12}}] \\
		&= \Pr{\forall i \in \configurationToSet{\spinConfigurationModified}: \randomPoint_i \notin B}[\samplingGraph \in \approximationGraphs{n}{\frac{\samplingError}{12}}] \\
		&= \Pr{\modifiedSampledPointSet \cap B = \emptyset}[\samplingGraph \in \approximationGraphs{n}{\frac{\samplingError}{12}}],
	\end{align*}
	which concludes the proof.
\end{proof}

We proceed by upper and lower bounding the density $\modifiedSamplingDensity{\samplingError}[\countingMeasure]$ in terms of the density of $\GibbsPointProcess{\region}{\gppFugacity}{\potential}$.
To this end, we use the following basic facts about the partition function of the hard-core model.

\begin{observation}[{see \cite{friedrich2021algorithms}}]
	\label{obs:hardcore_bounds}
	For every undirected graph $\graph = (\vertices, \edges)$ the following holds:
	\begin{enumerate}[1.]
		\item For all $\fugacity_1, \fugacity_2 \in \R_{\ge 0}$
		\[
			\hcPartitionFunction{\graph}[\fugacity_1] \le \hcPartitionFunction{\graph}[\fugacity_1 + \fugacity_2] \le \eulerE^{\fugacity_2 \size{\vertices}} \hcPartitionFunction{\graph}[\fugacity_1] .
		\]
		\item For all $\fugacity \in \R_{\ge 0}$ and $S \subseteq \vertices$
		\[
			\hcPartitionFunction{\graph - S}[\fugacity] \le \hcPartitionFunction{\graph}[\fugacity] \le \eulerE^{\fugacity \size{S}}\hcPartitionFunction{\graph - S}[\fugacity],
		\]
		where $\graph - S$ denotes the subgraph of $\graph$ that is induced by $\vertices \setminus S$.\qedhere
	\end{enumerate}
\end{observation}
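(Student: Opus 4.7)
The plan is to expand the hard-core partition function via its defining sum $\hcPartitionFunction{\graph}[\fugacity] = \sum_{I \in \independentSets{\graph}} \fugacity^{\size{I}}$ and combine a termwise monotonicity argument with the elementary inequality $1 + x \le \eulerE^{x}$ valid for all $x \ge 0$. I would prove the second statement first, because the upper bound of the first statement reuses the left-hand inequality of the second.

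Both left inequalities are immediate from the defining sum. For the second part, $\independentSets{\graph - S} \subseteq \independentSets{\graph}$ makes $\hcPartitionFunction{\graph - S}[\fugacity]$ a subsum of $\hcPartitionFunction{\graph}[\fugacity]$. For the first part, $\fugacity_1^{\size{I}} \le (\fugacity_1 + \fugacity_2)^{\size{I}}$ holds termwise since all activities are nonnegative.

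For the upper bound in the second part, I would decompose every $I \in \independentSets{\graph}$ uniquely as $T \cup J$ with $T = I \cap S \in \independentSets{\graph[S]}$ (the induced subgraph on $S$) and $J = I \setminus S \in \independentSets{\graph - S}$. Summing over $I$ and relaxing the constraint that no edges run between $T$ and $J$ yields $\hcPartitionFunction{\graph}[\fugacity] \le \hcPartitionFunction{\graph[S]}[\fugacity] \cdot \hcPartitionFunction{\graph - S}[\fugacity]$. Removing all edges of $\graph[S]$ only enlarges its family of independent sets, so $\hcPartitionFunction{\graph[S]}[\fugacity] \le (1 + \fugacity)^{\size{S}} \le \eulerE^{\fugacity \size{S}}$, which completes the bound.

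For the upper bound in the first part, I would apply the binomial identity $(\fugacity_1 + \fugacity_2)^{\size{I}} = \sum_{T \subseteq I} \fugacity_1^{\size{I \setminus T}} \fugacity_2^{\size{T}}$ and swap the order of summation, reindexing by $(T, J)$ with $J = I \setminus T$. For each fixed $T \in \independentSets{\graph}$, the resulting $J$ ranges precisely over independent sets of $\graph$ disjoint from the closed neighborhood $N[T]$, giving $\hcPartitionFunction{\graph}[\fugacity_1 + \fugacity_2] = \sum_{T \in \independentSets{\graph}} \fugacity_2^{\size{T}} \hcPartitionFunction{\graph - N[T]}[\fugacity_1]$. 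The already-established left inequality of the second part (applied with $S = N[T]$) bounds each $\hcPartitionFunction{\graph - N[T]}[\fugacity_1]$ by $\hcPartitionFunction{\graph}[\fugacity_1]$, and the remaining factor $\sum_{T \in \independentSets{\graph}} \fugacity_2^{\size{T}} = \hcPartitionFunction{\graph}[\fugacity_2]$ is at most $(1 + \fugacity_2)^{\size{\vertices}} \le \eulerE^{\fugacity_2 \size{\vertices}}$. There is no substantive obstacle: the argument is combinatorial bookkeeping plus the inequality $1 + x \le \eulerE^{x}$, and the only care required is the ordering of the two parts.
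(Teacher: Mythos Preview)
Your proof is correct. The paper does not give its own proof of this observation; it is stated with a citation to \cite{friedrich2021algorithms} and used as a black box, so there is nothing to compare against beyond noting that your argument is a clean, self-contained derivation of the cited facts.
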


Using \Cref{obs:hardcore_bounds} we derive the following bounds.
\begin{lemma}
	\label{lemma:conditional_density_bound}
	Consider the setting of \Cref{lemma:conditional_density} and let $\gppDensity$ denote the density of $\GibbsPointProcess{\region}{\gppFugacity}{\potential}$ with respect to $\PoissonPointProcess[\gppFugacity]$.
	For $n$ as in \Cref{algo:modifed_sampling} and all $\countingMeasure \in \countingMeasures$ with $\countingMeasure[\region] \le \min\left\{\sqrt{\frac{\samplingError n}{12}}, \frac{\samplingError}{40 \gppFugacity \volumeMeasure[\region] + \samplingError} n\right\}$ it holds that
	\[
		\left(1 - \frac{\samplingError}{4}\right) \gppDensity[\countingMeasure] \le \modifiedSamplingDensity{\samplingError}[\countingMeasure] \le \left(1 + \frac{\samplingError}{4}\right) \gppDensity[\countingMeasure] .
		\qedhere
	\]
\end{lemma}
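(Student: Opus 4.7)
The plan is to expand the ratio $\modifiedSamplingDensity{\samplingError}[\countingMeasure] / \gppDensity[\countingMeasure]$ using the explicit formulas from \Cref{lemma:conditional_density} and the definition of the Gibbs density from \Cref{sec:prelim:gpp}. Setting $k = \countingMeasure[\region]$ and $\vectorize{x} = \countingMeasureToTuple[\countingMeasure]$, the Hamiltonian factors and the constant $\eulerE^{\gppFugacity \volumeMeasure[\region]}$ appear in both the numerator and the denominator, so they cancel and one is left with
\[
\frac{\modifiedSamplingDensity{\samplingError}[\countingMeasure]}{\gppDensity[\countingMeasure]} = \Pr{\samplingGraph \in \approximationGraphs{n}{\samplingError/12}}^{-1} \cdot \left(\prod_{i=0}^{k-1}\left(1 - \frac{i}{n}\right)\right) \cdot \densityNormalizingFunction{n}{k}[\vectorize{x}]\, \gppPartitonFunction{\region}[\gppFugacity][\potential] .
\]
The first factor lies in $[1, (1-\samplingError/12)^{-1}]$ by \Cref{thm:gpp_concentration} together with the choice of $n$ in \Cref{algo:modifed_sampling}. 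The second factor lies in $[1-\samplingError/12, 1]$ by Bernoulli's inequality and the hypothesis $k \le \sqrt{\samplingError n / 12}$.

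The main work is controlling $\densityNormalizingFunction{n}{k}[\vectorize{x}]\, \gppPartitonFunction{\region}[\gppFugacity][\potential]$. I plan to reinterpret the integral in the definition of $\densityNormalizingFunction{n}{k}$ as a conditional expectation under $\samplingGraph \sim \canonicalDistribution{n}{\region}{\potential}$. Writing $E_k$ for the event that $\samplingGraph$ has no edges within $[k]$, whose conditional probability given $\randomPoint_i = x_i$ equals $\eulerE^{-\hamiltonian[\countingMeasure]}$, a direct identification of the sum with the law of $\samplingGraph$ yields
\[
\densityNormalizingFunction{n}{k}[\vectorize{x}]\, \gppPartitonFunction{\region}[\gppFugacity][\potential] = \E{\frac{\ind{\samplingGraph \in \approximationGraphs{n}{\samplingError/12}}\, \gppPartitonFunction{\region}[\gppFugacity][\potential]}{\hcPartitionFunction{\samplingGraph}[\fugacity[n]]}}[\randomPoint_i = x_i\ \forall\, i \in [k],\, E_k] .
\]
On $\{\samplingGraph \in \approximationGraphs{n}{\samplingError/12}\}$ the integrand lies in $[(1 + \samplingError/12)^{-1}, (1 - \samplingError/12)^{-1}]$ by definition of $\approximationGraphs{n}{\samplingError/12}$. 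Hence, denoting by $q(\vectorize{x})$ the corresponding conditional probability of $\{\samplingGraph \in \approximationGraphs{n}{\samplingError/12}\}$, one obtains the envelope
\[
\frac{q(\vectorize{x})}{1 + \samplingError/12} \le \densityNormalizingFunction{n}{k}[\vectorize{x}]\, \gppPartitonFunction{\region}[\gppFugacity][\potential] \le \frac{1}{1 - \samplingError/12} .
\]

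What remains, and what I expect to be the main obstacle, is a uniform lower bound $q(\vectorize{x}) \ge 1 - \bigO{\samplingError}$. My plan is a vertex-deletion reduction to the unconditional concentration result. Conditional on $\randomPoint_i = x_i$ for $i \in [k]$ and $E_k$, the induced subgraph $\samplingGraph - [k]$ has distribution $\canonicalDistribution{n-k}{\region}{\potential}$, since $\randomPoint_{k+1}, \dots, \randomPoint_n$ and their non-$[k]$-incident edges are independent of the conditioning. Part~2 of \Cref{obs:hardcore_bounds} then gives
\[
\hcPartitionFunction{\samplingGraph - [k]}[\fugacity[n]] \le \hcPartitionFunction{\samplingGraph}[\fugacity[n]] \le \eulerE^{\fugacity[n]\, k}\, \hcPartitionFunction{\samplingGraph - [k]}[\fugacity[n]] ,
\]
and the hypothesis $k \le \samplingError n / (40\, \gppFugacity \volumeMeasure[\region] + \samplingError)$ ensures $\fugacity[n]\, k \le \samplingError/40$. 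Applying \Cref{thm:gpp_concentration} to the Gibbs point process with modified fugacity $\gppFugacity' = \gppFugacity(n-k)/n$ (chosen so that its natural vertex activity on $n-k$ points is exactly $\fugacity[n]$) shows that $\hcPartitionFunction{\samplingGraph - [k]}[\fugacity[n]]$ concentrates around $\gppPartitonFunction{\region}[\gppFugacity'][\potential]$, which itself differs from $\gppPartitonFunction{\region}[\gppFugacity][\potential]$ by at most a factor $\eulerE^{\samplingError/40}$ because $(\gppFugacity - \gppFugacity') \volumeMeasure[\region] \le \samplingError/40$. Chaining these three approximations places $\hcPartitionFunction{\samplingGraph}[\fugacity[n]]$ inside $[1 - \samplingError/12,\, 1 + \samplingError/12]\, \gppPartitonFunction{\region}[\gppFugacity][\potential]$ with conditional probability at least $1 - \bigO{\samplingError}$, which is the required lower bound on $q(\vectorize{x})$. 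Combining the three factor bounds and tracking the constants supplied by the algorithm's choice of $n$ then delivers the claimed envelope $(1 - \samplingError/4)\, \gppDensity[\countingMeasure] \le \modifiedSamplingDensity{\samplingError}[\countingMeasure] \le (1 + \samplingError/4)\, \gppDensity[\countingMeasure]$.
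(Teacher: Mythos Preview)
Your proposal is correct and follows essentially the same route as the paper: the same factorization of $\modifiedSamplingDensity{\samplingError}/\gppDensity$, the same trivial bounds on the first two factors, and the same vertex-deletion reduction to the unconditional concentration result (\Cref{thm:gpp_concentration}) applied on $n-k$ vertices for the lower bound on $\densityNormalizingFunction{n}{k}[\vectorize{x}]\,\gppPartitonFunction{\region}[\gppFugacity][\potential]$. Your conditional-expectation rewriting is in fact a cleaner packaging of what the paper does via explicit sums over graphs.

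The only technical difference is in how the activity mismatch after deleting $[k]$ is absorbed. The paper keeps the continuous fugacity $\gppFugacity$ fixed and instead shifts the discrete activity from $\fugacity[n]$ to $\fugacity[n-k]$ using part~1 of \Cref{obs:hardcore_bounds}, so that \Cref{thm:gpp_concentration} applies directly to $\hcPartitionFunction{\samplingGraph-[k]}[\fugacity[n-k]]$ versus $\gppPartitonFunction{\region}[\gppFugacity][\potential]$. You instead keep the activity $\fugacity[n]$ and shift the continuous fugacity to $\gppFugacity'=\gppFugacity(n-k)/n$, which then requires the extra (easy, but not stated in the paper) comparison $\gppPartitonFunction{\region}[\gppFugacity'][\potential]\le \gppPartitonFunction{\region}[\gppFugacity][\potential]\le \eulerE^{(\gppFugacity-\gppFugacity')\volumeMeasure[\region]}\gppPartitonFunction{\region}[\gppFugacity'][\potential]$; this follows from $\partial_{\gppFugacity}\ln\gppPartitonFunction{\region}[\gppFugacity][\potential]\le \volumeMeasure[\region]$ (a consequence of \Cref{lemma:domination_gpp}). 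Either route closes the argument with the same constants.
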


\begin{proof}
	First, recall that, when $\GibbsPointProcess{\region}{\gppFugacity}{\potential}$ is simple, its density with respect to $\PoissonPointProcess[\gppFugacity]$ can be expressed as
	\[
		\gppDensity[\countingMeasure] = \frac{1}{\gppPartitonFunction{\region}[\gppFugacity][\potential]} \ind{\countingMeasure \in \countingMeasures[\region]}  \left(\prod_{\{x, y\} \in \binom{\pointSet[\countingMeasure]}{2}} \eulerE^{- \countFunction{x}[\countingMeasure] \countFunction{y}[\countingMeasure] \potential[x][y]}\right) \left(\prod_{x \in \pointSet[\countingMeasure]} \eulerE^{- \frac{\countFunction{x}[\countingMeasure] (\countFunction{x}[\countingMeasure]-1)}{2} \potential[x][x]}\right) \eulerE^{\gppFugacity \volumeMeasure[\region]}
	\]
	for every $\countingMeasure \in \countingMeasures$.
	Therefore, we have
	\[
		\modifiedSamplingDensity{\samplingError}[\countingMeasure] = \Pr{\samplingGraph \in \approximationGraphs{n}{\frac{\samplingError}{12}}}^{-1} \left(\prod_{i=0}^{\countingMeasure[\region] - 1} 1 - \frac{i}{n} \right) \ind{\countingMeasure[\region] \le n} \densityNormalizingFunction{n}{\countingMeasure[\region]}[\countingMeasureToTuple[\countingMeasure]] \gppPartitonFunction{\region}[\gppFugacity][\potential] \gppDensity[\countingMeasure] .
	\]
	As we focus on $\countingMeasure$ with $\countingMeasure[\region] \le \sqrt{\frac{\samplingError n}{12}} \le n$, we omit the indicator $\ind{\countingMeasure[\region] \le n}$ from now on.

	We proceed by deriving an upper bound on $\modifiedSamplingDensity{\samplingError}[\countingMeasure]$ for $\countingMeasure \in \countingMeasures$ with $\countingMeasure[\region] \le \min\left\{\sqrt{\frac{\samplingError n}{12}}, \frac{\samplingError}{40 \gppFugacity \volumeMeasure[\region] + \samplingError} n\right\}$.
	To this end, note that
	\[
		\left(\prod_{i=0}^{\countingMeasure[\region] - 1} 1 - \frac{i}{n} \right) \le 1.
	\]
	Moreover, for $\samplingGraph \sim \canonicalDistribution{n}{\region}{\potential}$ and $n \ge 4 \frac{12^3}{\samplingError^{3}} \max\left\{\eulerE^6 \gppFugacity^2 \volumeMeasure[\region]^2, \ln\left(4 \frac{12}{\samplingError}\right)^2\right\}$ \Cref{thm:gpp_concentration} yields $\Pr{\samplingGraph \in \approximationGraphs{n}{\frac{\samplingError}{12}}} \ge 1 - \frac{\samplingError}{12}$.
	Finally, observe that for all $\vectorize{x} \in \region^{k}$ for $k \le n$ we have
	\begin{align*}
		\densityNormalizingFunction{n}{k}[\vectorize{x}]
		&\le \frac{1}{\left(1 - \frac{\samplingError}{12}\right) \gppPartitonFunction{\region}[\gppFugacity][\potential]} \sum_{\substack{H \in \approximationGraphs{n}{\frac{\samplingError}{12}}:\\ [k] \in \independentSets{H}}} \bigintss_{\region^{n - k}}
		\left( \prod_{\substack{(i, j) \in [k] \times [n - k]: \\\{i, j + k\} \in \edges_{H}}} 1 - \eulerE^{-\potential[x_i][y_j]}\right)
		\left( \prod_{\substack{(i, j) \in [k] \times [n - k]: \\\{i, j + k\} \notin \edges_{H}}} \eulerE^{-\potential[x_i][y_j]}\right) \\
		&\hspace*{12em}\cdot \left( \prod_{\substack{\{i, j\} \in \binom{[n - k]}{2}: \\\{i + k, j + k\} \in \edges_{H}}} 1 - \eulerE^{-\potential[y_i][y_j]}\right)
		\left( \prod_{\substack{\{i, j\} \in \binom{[n - k]}{2}: \\\{i + k, j + k\} \notin \edges_{H}}} \eulerE^{-\potential[y_i][y_j]}\right) \productUniformDistributionOn{\region}{n - k}[\intD \vectorize{y}] \\
		&= \frac{1}{\left(1 - \frac{\samplingError}{12}\right) \gppPartitonFunction{\region}[\gppFugacity][\potential]} \bigintss_{\region^{n - k}} \sum_{\substack{H \in \approximationGraphs{n}{\frac{\samplingError}{12}}:\\ [k] \in \independentSets{H}}}
		\left( \prod_{\substack{(i, j) \in [k] \times [n - k]: \\\{i, j + k\} \in \edges_{H}}} 1 - \eulerE^{-\potential[x_i][y_j]}\right)
		\left( \prod_{\substack{(i, j) \in [k] \times [n - k]: \\\{i, j + k\} \notin \edges_{H}}} \eulerE^{-\potential[x_i][y_j]}\right) \\
		&\hspace*{12em}\cdot\left( \prod_{\substack{\{i, j\} \in \binom{[n - k]}{2}: \\\{i + k, j + k\} \in \edges_{H}}} 1 - \eulerE^{-\potential[y_i][y_j]}\right)
		\left( \prod_{\substack{\{i, j\} \in \binom{[n - k]}{2}: \\\{i + k, j + k\} \notin \edges_{H}}} \eulerE^{-\potential[y_i][y_j]}\right) \productUniformDistributionOn{\region}{n - k}[\intD \vectorize{y}] \\
		&\le \frac{1}{\left(1 - \frac{\samplingError}{12}\right) \gppPartitonFunction{\region}[\gppFugacity][\potential]} \int_{\region^{n - k}} 1\, \productUniformDistributionOn{\region}{n - k}[\intD \vectorize{y}] \\
		&\le \frac{1}{\left(1 - \frac{\samplingError}{12}\right) \gppPartitonFunction{\region}[\gppFugacity][\potential]} .
	\end{align*}
	Given that $\samplingError \le 1$ we get
	\[
		\modifiedSamplingDensity{\samplingError}[\countingMeasure]
		\le \left(1 - \frac{\samplingError}{12}\right)^{-2} \gppDensity[\countingMeasure]
		\le \left(1 + \frac{\samplingError}{11}\right)^{2} \gppDensity[\countingMeasure]
		\le \left(1 + \frac{\samplingError}{4}\right) \gppDensity[\countingMeasure] ,
	\]
	which proves the upper bound.

	For the lower bound, note that
	\[
		\Pr{\samplingGraph \in \approximationGraphs{n}{\frac{\samplingError}{12}}}^{-1} \ge 1
	\]
	and for $\countingMeasure[\region] \le \sqrt{\frac{\samplingError n}{12}}$
	\[
		\left(\prod_{i=0}^{\countingMeasure[\region] - 1} 1 - \frac{i}{n} \right)
		\ge \left(1 - \frac{\countingMeasure[\region]}{n}\right)^{\countingMeasure[\region]}
		\ge 1 - \frac{\countingMeasure[\region]^2}{n}
		\ge 1 - \frac{\samplingError}{12}.
	\]
	We proceed by lower bounding $\densityNormalizingFunction{n}{k}[\vectorize{x}]$.
	First, observe that
	\begin{align*}
		\densityNormalizingFunction{n}{k}[\vectorize{x}]
		&\ge \frac{1}{\left(1 + \frac{\samplingError}{12}\right) \gppPartitonFunction{\region}[\gppFugacity][\potential]} \sum_{\substack{H \in \approximationGraphs{n}{\frac{\samplingError}{12}}:\\ [k] \in \independentSets{H}}} \bigintss_{\region^{n - k}}
		\left( \prod_{\substack{(i, j) \in [k] \times [n - k]: \\\{i, j + k\} \in \edges_{H}}} 1 - \eulerE^{-\potential[x_i][y_j]}\right)
		\left( \prod_{\substack{(i, j) \in [k] \times [n - k]: \\\{i, j + k\} \notin \edges_{H}}} \eulerE^{-\potential[x_i][y_j]}\right) \\
		&\hspace*{14em}\cdot\left( \prod_{\substack{\{i, j\} \in \binom{[n - k]}{2}: \\\{i + k, j + k\} \in \edges_{H}}} 1 - \eulerE^{-\potential[y_i][y_j]}\right)
		\left( \prod_{\substack{\{i, j\} \in \binom{[n - k]}{2}: \\\{i + k, j + k\} \notin \edges_{H}}} \eulerE^{-\potential[y_i][y_j]}\right) \productUniformDistributionOn{\region}{n - k}[\intD \vectorize{y}] \\
		&= \frac{1}{\left(1 + \frac{\samplingError}{12}\right) \gppPartitonFunction{\region}[\gppFugacity][\potential]} \sum_{\substack{H \in \graphs{n}:\\ [k] \in \independentSets{H}}} \ind{H \in \approximationGraphs{n}{\frac{\samplingError}{12}}} \bigintss_{\region^{n - k}}
		\left( \prod_{\substack{(i, j) \in [k] \times [n - k]: \\\{i, j + k\} \in \edges_{H}}} 1 - \eulerE^{-\potential[x_i][y_j]}\right)
		\left( \prod_{\substack{(i, j) \in [k] \times [n - k]: \\\{i, j + k\} \notin \edges_{H}}} \eulerE^{-\potential[x_i][y_j]}\right) \\
		&\hspace*{14em}\cdot\left( \prod_{\substack{\{i, j\} \in \binom{[n - k]}{2}: \\\{i + k, j + k\} \in \edges_{H}}} 1 - \eulerE^{-\potential[y_i][y_j]}\right)
		\left( \prod_{\substack{\{i, j\} \in \binom{[n - k]}{2}: \\\{i + k, j + k\} \notin \edges_{H}}} \eulerE^{-\potential[y_i][y_j]}\right) \productUniformDistributionOn{\region}{n - k}[\intD \vectorize{y}].
	\end{align*}
	Next, for each graph $H \in \graphs{n}$, let $H' = ([n-k], E')$ denote the subgraph that results from $H - [k]$ after relabeling each vertex in $i \in [n] \setminus [k]$ to $i - k \in [n-k]$ (note that this relabeling is formally required for $H' \in \graphs{n-k}$).
	By \Cref{obs:hardcore_bounds} and the fact that $\fugacity[n] \le \fugacity[n-k]$ and $\hcPartitionFunction{H'}[\fugacity] = \hcPartitionFunction{H - [k]}[\fugacity]$ for all $\fugacity \in \R_{\ge 0}$ we have
	\[
		\hcPartitionFunction{H}[\fugacity[n]]
		\le \eulerE^{\fugacity[n] k}\hcPartitionFunction{H'}[\fugacity[n]] \le \eulerE^{\frac{k}{n} \gppFugacity \volumeMeasure[\region]} \hcPartitionFunction{H'}[\fugacity[n-k]] .
	\]
	On the other hand, note that
	\begin{align*}
		\fugacity[n - k] = \frac{\gppFugacity \volumeMeasure[\region]}{n - k}
		&= \frac{n}{n(n - k)} \gppFugacity \volumeMeasure[\region]
		= \left(\frac{n - k}{n(n - k)} + \frac{k}{n(n-k)}\right) \gppFugacity \volumeMeasure[\region] \\
		&= \left(\frac{1}{n} + \frac{k}{n(n-k)}\right) \gppFugacity \volumeMeasure[\region]
		= \fugacity[n] + \frac{k}{n(n-k)} \gppFugacity \volumeMeasure[\region].
	\end{align*}
	Therefore, \Cref{obs:hardcore_bounds} yields
	\[
		\hcPartitionFunction{H}[\fugacity[n-k]] \le \eulerE^{\frac{k}{n-k} \gppFugacity \volumeMeasure[\region]} \hcPartitionFunction{H}[\fugacity[n]]
	\]
	and
	\[
		\hcPartitionFunction{H}[\fugacity[n]] \ge \eulerE^{-\frac{k}{n-k} \gppFugacity \volumeMeasure[\region]} \hcPartitionFunction{H}[\fugacity[n-k]]
		\ge \eulerE^{-\frac{k}{n-k} \gppFugacity \volumeMeasure[\region]} \hcPartitionFunction{H'}[\fugacity[n-k]].
	\]
	Thus, for $k \le \frac{\samplingError}{40 \gppFugacity \volumeMeasure[\region] + \samplingError} n$ we have
	\[
		\eulerE^{-\frac{\samplingError}{40}} \hcPartitionFunction{H'}[\fugacity[n-k]] \le \hcPartitionFunction{H}[\fugacity[n]] \le  \eulerE^{\frac{\samplingError}{40}}\hcPartitionFunction{H'}[\fugacity[n-k]] .
	\]
	As $\eulerE^{-\frac{\samplingError}{40}}\left(1 - \frac{\samplingError}{18}\right) \ge \left(1 - \frac{\samplingError}{12}\right)$ and $\eulerE^{\frac{\samplingError}{40}}\left(1 + \frac{\samplingError}{18}\right) \le \left(1 + \frac{\samplingError}{12}\right)$ for all $\samplingError \in [0, 1]$, this means that $H' \in \approximationGraphs{n-k}{\frac{\samplingError}{18}}$ is a sufficient condition for $H \in \approximationGraphs{n}{\frac{\samplingError}{12}}$ and
	\begin{align*}
		&\sum_{\substack{H \in \graphs{n}:\\ [k] \in \independentSets{H}}}
		\ind{H \in \approximationGraphs{n}{\frac{\samplingError}{12}}} \bigintss_{\region^{n - k}}
		\left( \prod_{\substack{(i, j) \in [k] \times [n - k]: \\\{i, j + k\} \in \edges_{H}}} 1 - \eulerE^{-\potential[x_i][y_j]}\right)
		\left( \prod_{\substack{(i, j) \in [k] \times [n - k]: \\\{i, j + k\} \notin \edges_{H}}} \eulerE^{-\potential[x_i][y_j]}\right) \\
		&\hspace*{10em}\left( \prod_{\substack{\{i, j\} \in \binom{[n - k]}{2}: \\\{i + k, j + k\} \in \edges_{H}}} 1 - \eulerE^{-\potential[y_i][y_j]}\right)
		\left( \prod_{\substack{\{i, j\} \in \binom{[n - k]}{2}: \\\{i + k, j + k\} \notin \edges_{H}}} \eulerE^{-\potential[y_i][y_j]}\right) \productUniformDistributionOn{\region}{n - k}[\intD \vectorize{y}] \\
		& \hspace*{2em} \ge  \sum_{H' \in \graphs{n-k}}
		\ind{H' \in \approximationGraphs{n-k}{\frac{\samplingError}{18}}} \bigintss_{\region^{n - k}}
		\left( \prod_{\substack{\{i, j\} \in \binom{[n - k]}{2}: \\\{i,j\} \in \edges_{H'}}} 1 - \eulerE^{-\potential[y_i][y_j]}\right)
		\left( \prod_{\substack{\{i, j\} \in \binom{[n - k]}{2}: \\\{i, j\} \notin \edges_{H'}}} \eulerE^{-\potential[y_i][y_j]}\right) \\
		&\hspace*{10em} \sum_{F \subseteq [k]\times[n-k]}
		\left( \prod_{\substack{(i, j) \in [k] \times [n - k]: \\(i, j) \in F}} 1 - \eulerE^{-\potential[x_i][y_j]}\right)
		\left( \prod_{\substack{(i, j) \in [k] \times [n - k]: \\(i, j) \notin F}} \eulerE^{-\potential[x_i][y_j]}\right)
		\productUniformDistributionOn{\region}{n - k}[\intD \vectorize{y}]\\
		& \hspace{2em} =  \sum_{H' \in \graphs{n-k}}
		\ind{H' \in \approximationGraphs{n-k}{\frac{\samplingError}{18}}} \bigintss_{\region^{n - k}}
		\left( \prod_{\substack{\{i, j\} \in \binom{[n - k]}{2}: \\\{i,j\} \in \edges_{H'}}} 1 - \eulerE^{-\potential[y_i][y_j]}\right)
		\left( \prod_{\substack{\{i, j\} \in \binom{[n - k]}{2}: \\\{i, j\} \notin \edges_{H'}}} \eulerE^{-\potential[y_i][y_j]}\right)
		\productUniformDistributionOn{\region}{n - k}[\intD \vectorize{y}] \\
		& \hspace{2em} =  \Pr{\samplingGraph' \in \approximationGraphs{n-k}{\frac{\samplingError}{18}}}
	\end{align*}
	for $\samplingGraph' \sim \canonicalDistribution{n-k}{\region}{\potential}$.
	Next, observe that $n  \ge 1$ we have $k \le \min\left\{\sqrt{\frac{\samplingError n}{12}}, \frac{\samplingError}{40 \gppFugacity \volumeMeasure[\region] + \samplingError} n\right\} \le \frac{n}{2}$ and $n - k \ge \frac{n}{2}$.
	Therefore, for $n \ge 8 \frac{18^2 \cdot 12}{\samplingError^3} \max\left\{\eulerE^6 \gppFugacity^2 \volumeMeasure[\region]^2, \ln\left(\frac{4 \cdot 18}{\samplingError}\right)^2\right\}$ \Cref{thm:gpp_concentration} yields $\Pr{\samplingGraph' \in \approximationGraphs{n-k}{\frac{\samplingError}{18}}} \ge 1 - \frac{\samplingError}{12}$ for $\samplingGraph' \sim \canonicalDistribution{n-k}{\region}{\potential}$.
	Consequently, we have
	\[
		\densityNormalizingFunction{n}{k}[\vectorize{x}]
		\ge \frac{1 - \frac{\samplingError}{12}}{1 + \frac{\samplingError}{12}} \cdot \frac{1}{\gppPartitonFunction{\region}[\gppFugacity][\potential]}
	\]
	and
	\[
		\modifiedSamplingDensity{\samplingError}[\countingMeasure]
		\ge \left(1 - \frac{\samplingError}{12}\right)^{2} \left(1 + \frac{\samplingError}{12}\right)^{-1} \gppDensity[\countingMeasure]
		\ge \left(1 - \frac{\samplingError}{12}\right)^{3} \gppDensity[\countingMeasure]
		\ge \left(1 - \frac{\samplingError}{4}\right) \gppDensity[\countingMeasure],
	\]
	which concludes the proof.
\end{proof}

We proceed by using \Cref{lemma:conditional_density,lemma:conditional_density_bound} to bound the total variation distance between $\GibbsPointProcess{\region}{\gppFugacity}{\potential}$ and the output distribution of \Cref{algo:modifed_sampling}.
However, as \Cref{lemma:conditional_density_bound} only provides information for point sets that are sufficiently small compared to $n$, we need a different way to deal with large point configurations.
To this end, the following two results are useful.
The first lemma is a domination result for the size of independent sets, drawn from a hard-core model.

\begin{lemma}
	\label{lemma:domination_hc}
	Let $\graph \in \graphs{n}$ for some $n \in \N$ and let $\fugacity \in \R_{\ge 0}$.
	For $\spinConfiguration \sim \hcGibbsDistribution{\graph}{\fugacity}$ it holds that $\size{\configurationToSet{\spinConfiguration}}$ is stochastically dominated by a binomial random variable with $n$ trials and success probability $\frac{\fugacity}{1 + \fugacity}$.
\end{lemma}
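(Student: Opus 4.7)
The plan is to recognize that $\mathrm{Bin}(n, \fugacity/(1+\fugacity))$ is precisely the distribution of $\size{\configurationToSet{\spinConfigurationModified}}$ for $\spinConfigurationModified \sim \hcGibbsDistribution{\graph_0}{\fugacity}$, where $\graph_0 = ([n], \emptyset)$ is the edgeless graph on $n$ vertices: on $\graph_0$ the hard-core model factorizes, assigning each vertex the value $1$ independently with probability $\fugacity/(1+\fugacity)$. Consequently, it suffices to build a coupling $(\spinConfiguration, \spinConfigurationModified)$ of $\hcGibbsDistribution{\graph}{\fugacity}$ and $\hcGibbsDistribution{\graph_0}{\fugacity}$ satisfying $\spinConfiguration[i] \le \spinConfigurationModified[i]$ for every $i \in [n]$ almost surely, since then $\size{\configurationToSet{\spinConfiguration}} \le \size{\configurationToSet{\spinConfigurationModified}}$ pointwise under the coupling, from which the stochastic dominance follows immediately.

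To produce such a coupling I would use a standard grand coupling of heat-bath Glauber dynamics for the two models. Starting both chains from the all-zero configuration, each step draws a uniformly random vertex $i \in [n]$ and an independent uniform $U \in [0, 1]$ and resamples only coordinate $i$ in both chains from shared randomness: in the $\graph_0$-chain set $\spinConfigurationModified[i] = 1$ iff $U \le \fugacity/(1+\fugacity)$, and in the $\graph$-chain set $\spinConfiguration[i] = 1$ iff $U \le \fugacity/(1+\fugacity)$ and every neighbor of $i$ in $\graph$ is currently assigned $0$. These are the correct heat-bath updates for $\hcGibbsDistribution{\graph_0}{\fugacity}$ and $\hcGibbsDistribution{\graph}{\fugacity}$, respectively.

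A one-line induction on the number of steps shows that the coordinate-wise inequality $\spinConfiguration \le \spinConfigurationModified$ is preserved: untouched coordinates keep the invariant by hypothesis, and at the resampled vertex the event that triggers $\spinConfiguration[i] = 1$ is strictly stronger than the event that triggers $\spinConfigurationModified[i] = 1$. Both chains are irreducible and aperiodic on their finite state spaces, so they converge in total variation to their unique stationary measures $\hcGibbsDistribution{\graph}{\fugacity}$ and $\hcGibbsDistribution{\graph_0}{\fugacity}$; compactness of $\{0, 1\}^n \times \{0, 1\}^n$ then allows extraction of a limit coupling that still satisfies the coordinate-wise ordering, giving the desired coupling of the two stationary distributions. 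The only step with any real content is verifying the inductive invariant; ergodicity on finite state spaces and the passage to the limit are entirely routine.
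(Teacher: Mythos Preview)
Your argument is correct, but it takes a different route from the paper. The paper builds a direct sequential coupling: it samples $\spinConfiguration$ by revealing the spins one vertex at a time, noting that $\Pr{\spinConfiguration[i] = 1 \mid \spinConfiguration[1], \dots, \spinConfiguration[i-1]} \le \frac{\fugacity}{1+\fugacity}$ for every $i$ (which follows from the injection that flips $\spinConfiguration[i]$ from $1$ to $0$), and then couples this sequential construction with an i.i.d.\ Bernoulli$\bigl(\frac{\fugacity}{1+\fugacity}\bigr)$ process so that the hard-core set is always a subset of the Bernoulli set. This is entirely elementary and avoids any limiting argument. Your approach instead realises the dominance as a comparison between the hard-core model on $\graph$ and on the edgeless graph via a monotone grand coupling of heat-bath Glauber dynamics, and then passes to the stationary limit. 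Both are standard coupling arguments; the paper's is shorter and self-contained, while yours is the classical Holley-style monotone-dynamics route, which generalises more readily to other antiferromagnetic comparisons but carries the overhead of invoking ergodicity and extracting a limit coupling. Either way the lemma follows.
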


\begin{proof}
	We use a coupling argument to prove this statement.
	Consider the following procedure for sampling a set $S_n \subseteq [n]$:
	\begin{enumerate}
		\item Start with $S_0 = \emptyset$.
		\item For each $i \in [n]$, set $S_i = S_{i-1} \cup \{i\}$ with probability $\Pr{\spinConfiguration[i] = 1}[\bigwedge_{j \in [i-1]} \spinConfiguration[j] = \ind{j \in S_{i-1}}]$ for $\spinConfiguration \sim \hcGibbsDistribution{\graph}{\fugacity}$.
	\end{enumerate}
	Note that the resulting set $S_n$ follows the same distribution as $S_{\spinConfiguration}$ for $\spinConfiguration \sim \hcGibbsDistribution{\graph}{\fugacity}$.
	Due to the definition of this process, it suffices to consider sequences $(S_i)_{i \in [n] \cup \{0\}}$ such that the event $\{\bigwedge_{j \in [i-1]} \spinConfiguration[j] = \ind{j \in S_{i-1}}\}$ has non-zero probability.
	Further, note that
	\[
		\Pr{\spinConfiguration[i] = 1}[\bigwedge_{j \in [i-1]} \spinConfiguration[j] = \ind{j \in S_{i-1}}] \le \frac{\fugacity}{1 + \fugacity}
	\]
	for all $i \in [n]$.
	Now, we consider a modified process $(S'_i)_{i \in [n] \cup \{0\}}$ with $S'_0 = \emptyset$ and $S'_i = S'_{i-1} \cup \{i\}$ with probability $\frac{\fugacity}{1 + \fugacity}$.
	Observe that $(S_i)_{i \in [n] \cup \{0\}}$ and $(S'_i)_{i \in [n] \cup \{0\}}$ can be coupled in such a way that $S_i \subseteq S'_i$ whenever $S_{i-1} \subseteq S'_{i-1}$ for all $i \in [n]$.
	As initially $S_0 = S'_0$, the same coupling yields $S_n \subseteq S'_n$.
	Finally, observing that $\size{S'_n}$ follows a binomial distribution with $n$ trials and success probability $\frac{\fugacity}{1 + \fugacity}$ concludes the proof.
\end{proof}

The second lemma is the analog of \Cref{lemma:domination_hc} for repulsive point processes.
However, proving it is slightly more technically involved.
We start by introducing some additional notation and terminology.
For two counting measures $\countingMeasure_1, \countingMeasure_2 \in \countingMeasures$, we write $\countingMeasure_1 \le \countingMeasure_2$ if $\countingMeasure_1(B) \le \countingMeasure_2(B)$ for every $B \in \Borel$.
A measurable function $h: \countingMeasures \to \R$ is called \emph{increasing} if $h(\countingMeasure_1) \le h(\countingMeasure_2)$ for all $\countingMeasure_1 \le \countingMeasure_2$.
Moreover, for some $\PoissonIntensity \in \R_{\ge 0}$, let $\PoissonPointProcess[\PoissonIntensity]$ denote the Poisson point process with intensity $\PoissonIntensity$ and let $\pointProcess$ be a point process that has a density $f_{\pointProcess}$ with respect to $\PoissonPointProcess[\PoissonIntensity]$.
A function $\PapangelouIntensity: \countingMeasures \times \pointProcessSpace \to \R_{\ge 0}$ is called a \emph{Papangelou intensity} for $\pointProcess$ (w.r.t. $\PoissonPointProcess[\PoissonIntensity]$) if, for all $\countingMeasure \in \countingMeasures$ and $x \in \pointProcessSpace$, it holds that
\[
	f_{\pointProcess}(\countingMeasure + \DiracMeasure{x}) = \PapangelouIntensity[][\countingMeasure][x] f_{\pointProcess}(\countingMeasure) .
\]

The domination lemma we are aiming for is implied by the following result.

\begin{theorem}[{\cite[Theorem $1.1$]{georgii1997stochastic}}]
	\label{thm:PapangelouIntensity}
	Let $\PoissonPointProcess[\PoissonIntensity]$ be a Poisson point process of intensity $\PoissonIntensity \in \R_{\ge 0}$ and let $\pointProcess_1, \pointProcess_2$ be point processes that are absolutely continuous with respect to $\PoissonPointProcess[\PoissonIntensity]$.
	Assume $\pointProcess_1$ and $\pointProcess_2$ have Papangelou intensities $\PapangelouIntensity[1]$ and $\PapangelouIntensity[2]$.
	If, for all $x \in \pointProcessSpace$ and $\countingMeasure_1, \countingMeasure_2 \in \countingMeasures$ with $\countingMeasure_1 \le \countingMeasure_2$, $\PapangelouIntensity[1][\countingMeasure_1][x] \le \PapangelouIntensity[2][\countingMeasure_2][x]$, then, for all increasing $h: \countingMeasures \to \R$, it holds that
	\[
		\int_{\countingMeasures} h(\countingMeasure) \pointProcess_1(\intD \countingMeasure) \le \int_{\countingMeasures} h(\countingMeasure) \pointProcess_2(\intD \countingMeasure). \qedhere
	\]
\end{theorem}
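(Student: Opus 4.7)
The plan is to prove the domination through a coupling: construct a joint distribution of $(\countingMeasure_1, \countingMeasure_2)$ with marginals $\pointProcess_1$ and $\pointProcess_2$ such that $\countingMeasure_1 \le \countingMeasure_2$ almost surely. From such a coupling, any increasing measurable $h$ satisfies $h(\countingMeasure_1) \le h(\countingMeasure_2)$ pointwise, and taking expectations yields the stated inequality.

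The natural realization uses spatial birth-and-death dynamics in the spirit of Preston and Kendall. For each $i \in \{1, 2\}$, consider the continuous-time Markov process $(\xi_i(t))_{t \ge 0}$ on $\countingMeasures$ in which every point of $\xi_i(t)$ independently dies at rate $1$, and, given the current configuration $\countingMeasure$, new points are born in a measurable set $A \subseteq \pointProcessSpace$ during $[t, t+\intD t]$ at rate $\intD t \int_A \PapangelouIntensity[i][\countingMeasure][x] \volumeMeasure[\intD x]$. Under the integrability assumptions implicit in the existence of the density of $\pointProcess_i$ with respect to $\PoissonPointProcess[\PoissonIntensity]$, this process admits $\pointProcess_i$ as its unique reversible stationary distribution; one verifies this by checking detailed balance against $\PoissonPointProcess[\PoissonIntensity]$ using the defining relation $f_{\pointProcess_i}(\countingMeasure + \DiracMeasure{x}) = \PapangelouIntensity[i][\countingMeasure][x] f_{\pointProcess_i}(\countingMeasure)$, which is equivalent to the Georgii--Nguyen--Zessin equation.

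To couple the two processes, start from the shared initial state $\xi_1(0) = \xi_2(0) = \zeroFunction$. Drive births in both processes by a single Poisson process of triples $(x, t, u) \in \pointProcessSpace \times \R_{\ge 0} \times \R_{\ge 0}$ whose intensity is large enough to dominate both birth rates; for each arrival, accept the birth of $x$ into $\xi_i(t)$ exactly when $u \le \PapangelouIntensity[i][\xi_i(t^-)][x]$, and use a shared death clock so that every live point carries the same independent exponential lifetime in both processes. Whenever $\xi_1(t^-) \le \xi_2(t^-)$, the theorem's hypothesis yields $\PapangelouIntensity[1][\xi_1(t^-)][x] \le \PapangelouIntensity[2][\xi_2(t^-)][x]$, so any birth accepted in $\xi_1$ is automatically accepted in $\xi_2$, and shared deaths trivially preserve the order. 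Induction on the (almost surely locally finite) sequence of events shows $\xi_1(t) \le \xi_2(t)$ for all $t \ge 0$; letting $t \to \infty$ and invoking ergodicity of each chain delivers the desired coupled pair $(\countingMeasure_1, \countingMeasure_2)$.

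The main technical obstacle lies in discharging the analytic pieces: rigorously constructing the birth-and-death dynamics, verifying that $\pointProcess_i$ is its unique stationary distribution, and ensuring convergence to stationarity in total variation. When $\PapangelouIntensity[2]$ is not uniformly bounded or when $\volumeMeasure[\pointProcessSpace] = \infty$, no single dominating Poisson process of births exists globally, so one must couple on an exhaustion of $\pointProcessSpace$ by bounded regions with suitably truncated Papangelou intensities and then pass to a weak limit -- this is where the original argument in the cited work is most delicate, and where one genuinely uses that $\pointProcess_1$ and $\pointProcess_2$ are absolutely continuous with respect to a common reference Poisson process.
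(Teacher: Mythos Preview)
The paper does not prove this theorem; it is quoted verbatim as \cite[Theorem~1.1]{georgii1997stochastic} and used as a black box to derive \Cref{lemma:domination_gpp}. Your sketch via spatial birth-and-death dynamics and a monotone coupling is indeed the strategy of the original Georgii--K\"uneth argument, and your last paragraph correctly identifies where the analytic work lies (non-explosion, ergodicity, exhaustion when the space or the Papangelou intensity is unbounded). So there is nothing to compare against in the present paper, and your outline is faithful to the cited source.
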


With that, we show the following simple domination result.

\begin{lemma}
	\label{lemma:domination_gpp}
	Let $(\pointProcessSpace, \dist)$ be a complete, separable metric space, let $\Borel = \Borel[\pointProcessSpace]$ be the Borel algebra and let $\volumeMeasure$ be a locally finite reference measure on $(\pointProcessSpace, \Borel)$.
	Let $\region \subseteq \pointProcessSpace$ be bounded and measurable, let $\gppFugacity \in \R_{\ge 0}$ and let $\potential: \pointProcessSpace^2 \to \R_{\ge 0} \cup \{\infty\}$ be a symmetric repulsive potential.
	For $\countingMeasure \sim \GibbsPointProcess{\region}{\gppFugacity}{\potential}$ it holds that $\countingMeasure[\region]$ is dominated by a Poisson random variable with parameter $\gppFugacity \volumeMeasure[\region]$.
\end{lemma}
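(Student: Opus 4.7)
The plan is to apply the Papangelou intensity comparison theorem (Theorem on $\PapangelouIntensity$) with $\pointProcess_1 = \GibbsPointProcess{\region}{\gppFugacity}{\potential}$ and $\pointProcess_2 = \PoissonPointProcess[\gppFugacity]$ (the full reference Poisson process on $\pointProcessSpace$), both taken as absolutely continuous with respect to $\PoissonPointProcess[\gppFugacity]$. Clearly $\pointProcess_2$ has density $f_2 \equiv 1$ with respect to itself, so $\PapangelouIntensity[2][\countingMeasure][x] = 1$ for every $\countingMeasure \in \countingMeasures$ and $x \in \pointProcessSpace$. For $\pointProcess_1$, I would read off the density
\[
	f_1(\countingMeasure) = \frac{\ind{\countingMeasure \in \countingMeasures[\region]} \eulerE^{-\hamiltonian[\countingMeasure]} \eulerE^{\gppFugacity \volumeMeasure[\region]}}{\gppPartitonFunction{\region}[\gppFugacity][\potential]}
\]
from \Cref{sec:prelim:gpp} and compute the corresponding Papangelou intensity.

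The key step is to verify that $\PapangelouIntensity[1][\countingMeasure][x] \le 1$ everywhere. For $\countingMeasure \in \countingMeasures[\region]$ and $x \in \region$, the ratio $f_1(\countingMeasure + \DiracMeasure{x}) / f_1(\countingMeasure)$ equals $\eulerE^{-(\hamiltonian[\countingMeasure + \DiracMeasure{x}] - \hamiltonian[\countingMeasure])}$. Using the explicit form of the Hamiltonian, a short computation shows
\[
	\hamiltonian[\countingMeasure + \DiracMeasure{x}] - \hamiltonian[\countingMeasure] = \sum_{y \in \pointSet[\countingMeasure] \setminus \{x\}} \countFunction{y}[\countingMeasure] \potential[x][y] + \countFunction{x}[\countingMeasure] \potential[x][x],
\]
which is non-negative because $\potential$ is repulsive. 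Hence the exponential is bounded by $1$. For $x \notin \region$ the density $f_1$ at $\countingMeasure + \DiracMeasure{x}$ vanishes, so $\PapangelouIntensity[1][\countingMeasure][x] = 0 \le 1$; for $\countingMeasure \notin \countingMeasures[\region]$ the intensity can be set arbitrarily (both sides of the defining equation vanish) and we pick $0$. In every case the monotonicity hypothesis $\PapangelouIntensity[1][\countingMeasure_1][x] \le \PapangelouIntensity[2][\countingMeasure_2][x]$ from \Cref{thm:PapangelouIntensity} is trivially satisfied whenever $\countingMeasure_1 \le \countingMeasure_2$, since the right-hand side is identically $1$.

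Finally, I would apply \Cref{thm:PapangelouIntensity} to the increasing functional $h_k(\countingMeasure) = \ind{\countingMeasure[\region] \ge k}$ for each $k \in \N$. This yields
\[
	\Pr{\countingMeasure_1[\region] \ge k}[\countingMeasure_1 \sim \GibbsPointProcess{\region}{\gppFugacity}{\potential}] \le \Pr{\countingMeasure_2[\region] \ge k}[\countingMeasure_2 \sim \PoissonPointProcess[\gppFugacity]],
\]
and since $\countingMeasure_2[\region]$ is by definition Poisson with parameter $\gppFugacity \volumeMeasure[\region]$, this is precisely the desired stochastic domination. The only mild subtlety is the measurability/well-definedness of the Papangelou intensities on the $\pointProcess_1$-null set $\{\countingMeasure \notin \countingMeasures[\region]\}$, but this is handled by the standard convention of assigning any value where the defining equation is degenerate, so no real obstacle arises.
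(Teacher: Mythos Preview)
Your proposal is correct and follows essentially the same route as the paper: both apply \Cref{thm:PapangelouIntensity} with the Gibbs process as $\pointProcess_1$, compute its Papangelou intensity from the density, observe it is bounded by the Papangelou intensity of a dominating process because $\potential \ge 0$, and then test against the increasing indicators $h_k(\countingMeasure)=\ind{\countingMeasure[\region]\ge k}$. The only cosmetic difference is your choice of $\pointProcess_2$: you take the full Poisson process $\PoissonPointProcess[\gppFugacity]$ on $\pointProcessSpace$ (density $1$, Papangelou intensity $\equiv 1$), whereas the paper takes the Poisson process restricted to $\region$ (density $\ind{\countingMeasure\in\countingMeasures[\region]}$, Papangelou intensity $\ind{x\in\region}$); since only $\countingMeasure[\region]$ is evaluated in the end, both yield the same Poisson$(\gppFugacity\volumeMeasure[\region])$ bound.
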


\begin{proof}
	Let $\PoissonPointProcess[\gppFugacity]$ denote a Poisson point process with intensity $\gppFugacity$.
	Note that
	a density of
	\[
		f_1(\countingMeasure) = \frac{1}{\gppPartitonFunction{\region}[\gppFugacity][\potential]} \ind{\countingMeasure \in \countingMeasures[\region]}\left(\prod_{\{x, y\} \in \binom{\pointSet[\countingMeasure]}{2}} \eulerE^{- \countFunction{x}[\countingMeasure] \countFunction{y}[\countingMeasure] \potential[x][y]}\right) \left(\prod_{x \in \pointSet[\countingMeasure]} \eulerE^{- \frac{\countFunction{x}[\countingMeasure] (\countFunction{x}[\countingMeasure]-1)}{2} \potential[x][x]}\right) \eulerE^{\gppFugacity \volumeMeasure[\region]}
	\]
	is a density for $\GibbsPointProcess{\region}{\gppFugacity}{\potential}$ with respect to $\PoissonPointProcess[\gppFugacity]$.
	Therefore,
	\[
		\PapangelouIntensity[1][\countingMeasure][x] = \ind{x \in \region} \prod_{y \in \pointSet[\countingMeasure]} \eulerE^{- \countFunction{y}[\countingMeasure] \potential[x][y]}
	\]
	is a Papangelou intensity for
	$\GibbsPointProcess{\region}{\gppFugacity}{\potential}$.
	Moreover, let $\pointProcess$ denote the point process defined by the density $f_2(\countingMeasure) = \ind{\countingMeasure \in \countingMeasures[\region]}$ and observe that $\PapangelouIntensity[2][\countingMeasure][x] = \ind{x \in \region}$ is a Papangelou intensity for $\pointProcess$.

	For all $k \in \N$, let $h_k(\countingMeasure) = \ind{\countingMeasure[\region] \ge k}$ and observe that $h_k$ is increasing.
	Further, note that, for all $x \in \pointProcessSpace$ and $\countingMeasure_1, \countingMeasure_2 \in \countingMeasures$, it holds that
	\[
		\PapangelouIntensity[1][\countingMeasure_1][x]
		= \ind{x \in \region} \prod_{y \in \pointSet[\countingMeasure_1]} \eulerE^{- \countFunction{y}[\countingMeasure_1] \potential[x][y]}
		\le \ind{x \in \region}
		= \PapangelouIntensity[2][\countingMeasure_2][x] .
	\]
	By \Cref{thm:PapangelouIntensity}, this implies that for all $k \in \N$
	\[
		\int_{\countingMeasures} h_k(\countingMeasure) \GibbsPointProcess{\region}{\gppFugacity}{\potential}(\intD \countingMeasure) \le \int_{\countingMeasures} h_k(\countingMeasure) \pointProcess(\intD \countingMeasure).
	\]
	Consequently, for $\countingMeasure \sim \GibbsPointProcess{\region}{\gppFugacity}{\potential}$ and $\xi \sim \pointProcess$ and for all $k \in \N$, it holds that
	\[
		\Pr{\countingMeasure[\region] \ge k} \le \Pr{\xi(\region) \ge k}
	\]
	and observing that $\xi(\region)$ follows a Poisson distribution with parameter $\gppFugacity \volumeMeasure[\region]$ concludes the proof.
\end{proof}

We now bound the total variation distance between the output of \Cref{algo:modifed_sampling} and $\GibbsPointProcess{\region}{\gppFugacity}{\potential}$.

\begin{lemma}
	\label{lemma:modified_sampler_dtv}
	Let $(\pointProcessSpace, \dist)$ be a complete, separable metric space, let $\Borel = \Borel[\pointProcessSpace]$ be the Borel algebra and let $\volumeMeasure$ be a locally finite reference measure on $(\pointProcessSpace, \Borel)$.
	Let $\region \subseteq \pointProcessSpace$ be bounded and measurable, let $\gppFugacity \in \R_{\ge 0}$ and let $\potential: \pointProcessSpace^2 \to \R_{\ge 0} \cup \{\infty\}$ be a symmetric repulsive potential.
	For every given $\samplingError \in  (0, 1]$, \Cref{algo:modifed_sampling} is an $\frac{\samplingError}{2}$-approximate sampler from $\GibbsPointProcess{\region}{\gppFugacity}{\potential}$.
\end{lemma}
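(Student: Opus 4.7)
The plan is to decompose the output of \Cref{algo:modifed_sampling} according to the event $E = \{\samplingGraph \in \approximationGraphs{n}{\samplingError/12}\}$. Writing the unconditional law of $\modifiedSampledPointSet$ as a mixture of its conditional law on $E$, namely $\modfiedSamplerOutput{\samplingError}$, and on $\compEvent{E}$, convexity of the total variation distance gives
\[
    \dtv{\text{law}(\modifiedSampledPointSet)}{\GibbsPointProcess{\region}{\gppFugacity}{\potential}} \le \Pr{\compEvent{E}} + \dtv{\modfiedSamplerOutput{\samplingError}}{\GibbsPointProcess{\region}{\gppFugacity}{\potential}}.
\]
Since $\samplingGraph \sim \canonicalDistribution{n}{\region}{\potential}$ and the $n$ chosen in \Cref{algo:modifed_sampling} satisfies the hypothesis of \Cref{thm:gpp_concentration} with $\error = \errorProb = \samplingError/12$, the first term is at most $\samplingError/12$.

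For the second term, both measures are absolutely continuous with respect to $\PoissonPointProcess[\gppFugacity]$: the densities are $\modifiedSamplingDensity{\samplingError}$ from \Cref{lemma:conditional_density} and the standard Gibbs density $\gppDensity$. Hence
\[
    \dtv{\modfiedSamplerOutput{\samplingError}}{\GibbsPointProcess{\region}{\gppFugacity}{\potential}} = \tfrac{1}{2} \int_{\countingMeasures} \absolute{\modifiedSamplingDensity{\samplingError}[\countingMeasure] - \gppDensity[\countingMeasure]} \,\PoissonPointProcess[\gppFugacity][\intD \countingMeasure].
\]
I would split this integral at the threshold $M \coloneqq \min\{\sqrt{\samplingError n / 12},\, \samplingError n / (40 \gppFugacity \volumeMeasure[\region] + \samplingError)\}$ up to which \Cref{lemma:conditional_density_bound} applies. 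On $\{\countingMeasure[\region] \le M\}$ the density bound yields $\absolute{\modifiedSamplingDensity{\samplingError}[\countingMeasure] - \gppDensity[\countingMeasure]} \le (\samplingError/4)\gppDensity[\countingMeasure]$, so this part contributes at most $\samplingError/8$ because $\gppDensity$ integrates to one with respect to $\PoissonPointProcess[\gppFugacity]$.

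For the tail $\{\countingMeasure[\region] > M\}$, the triangle inequality bounds the contribution by $\tfrac{1}{2}\bigl(\Pr{\absolute{\modifiedSampledPointSet} > M \mid E} + \Pr{\countingMeasure[\region] > M}\bigr)$ with $\countingMeasure \sim \GibbsPointProcess{\region}{\gppFugacity}{\potential}$. The first probability is handled by \Cref{lemma:domination_hc}, which, conditional on $\samplingGraph$, dominates $\absolute{\modifiedSampledPointSet}$ by $\mathrm{Bin}(n, \fugacity[n]/(1+\fugacity[n]))$ of mean at most $\gppFugacity \volumeMeasure[\region]$; the second by \Cref{lemma:domination_gpp}, which dominates $\countingMeasure[\region]$ by a Poisson variable of mean $\gppFugacity \volumeMeasure[\region]$. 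The value of $n$ mandated by the algorithm is large enough that $M \gg \gppFugacity \volumeMeasure[\region]$, so standard Chernoff-type bounds on the binomial and Poisson tails render both probabilities of order $\samplingError$. The main obstacle is purely bookkeeping: one must verify that the precise constants in the lower bound on $n$ (notably the $18^{2} \cdot 12 / \samplingError^{3}$ factor) are chosen so that $\samplingError/12 + \samplingError/8 + \text{(tail terms)} \le \samplingError/2$, which is exactly what the explicit definition of $n$ in \Cref{algo:modifed_sampling} is engineered to guarantee.
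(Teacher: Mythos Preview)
Your proposal is correct and follows essentially the same approach as the paper. The only cosmetic differences are: (i) you handle the conditioning on $E$ via convexity of total variation, whereas the paper phrases this step as a coupling (rejection-sampling) argument; (ii) you keep the factor $\tfrac{1}{2}$ in the $L^1$ formula for total variation, which the paper omits; and (iii) for the tail probabilities at $\countingMeasure[\region]>M$ the paper simply uses Markov's inequality (noting $M \ge \tfrac{12}{\samplingError}\gppFugacity\volumeMeasure[\region]$ gives each tail $\le \samplingError/12$) rather than Chernoff-type bounds, which are unnecessary here.
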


\begin{proof}
	We start by bounding the total variation distance between $\dtv{\GibbsPointProcess{\region}{\gppFugacity}{\potential}}{\modfiedSamplerOutput{\samplingError}}$ for $\modfiedSamplerOutput{\samplingError}$ as in \Cref{lemma:conditional_density}.
	The statement then follows from a coupling argument.
	Let $\PoissonPointProcess[\gppFugacity]$ denote a Poisson point process of intensity $\gppFugacity$.
	Let $\modifiedSamplingDensity{\samplingError}$ be the density of $\modfiedSamplerOutput{\samplingError}$ with respect to $\PoissonPointProcess[\gppFugacity]$ as given in \Cref{lemma:conditional_density} and let $\gppDensity$ denote the density of $\GibbsPointProcess{\region}{\gppFugacity}{\potential}$ with respect to $\PoissonPointProcess$.
	Moreover, set $m = \min\left\{\sqrt{\frac{\samplingError n}{12}}, \frac{\samplingError}{40 \gppFugacity \volumeMeasure[\region] + \samplingError} n\right\}$.
	Note that the total variation distance can be expressed as
	\begin{align*}
		\dtv{\GibbsPointProcess{\region}{\gppFugacity}{\potential}}{\modfiedSamplerOutput{\samplingError}}
		&= \int_{\countingMeasures} \absolute{\gppDensity[\countingMeasure] - \modifiedSamplingDensity{\samplingError}[\countingMeasure]} \PoissonPointProcess[\gppFugacity][\intD \countingMeasure] \\
		&= \int_{\countingMeasures} \ind{\countingMeasure[\region] \le m} \absolute{\gppDensity[\countingMeasure] - \modifiedSamplingDensity{\samplingError}[\countingMeasure]} \PoissonPointProcess[\gppFugacity][\intD \countingMeasure] + \int_{\countingMeasures} \ind{\countingMeasure[\region] > m} \absolute{\gppDensity[\countingMeasure] - \modifiedSamplingDensity{\samplingError}[\countingMeasure]} \PoissonPointProcess[\gppFugacity][\intD \countingMeasure] .
	\end{align*}
	By \Cref{lemma:conditional_density_bound}, we get
	\[
		\int_{\countingMeasures} \ind{\countingMeasure[\region] \le m} \absolute{\gppDensity[\countingMeasure] - \modifiedSamplingDensity{\samplingError}[\countingMeasure]} \PoissonPointProcess[\gppFugacity][\intD \countingMeasure]
		\le \frac{\samplingError}{4} \int_{\countingMeasures} \ind{\countingMeasure[\region] \le m} \gppDensity[\countingMeasure] \PoissonPointProcess[\gppFugacity][\intD \countingMeasure]
		\le \frac{\samplingError}{4} .
	\]
	Further, it holds that
	\begin{align*}
		\int_{\countingMeasures} \ind{\countingMeasure[\region] > m} \absolute{\gppDensity[\countingMeasure] - \modifiedSamplingDensity{\samplingError}[\countingMeasure]} \PoissonPointProcess[\gppFugacity][\intD \countingMeasure]
		&\le \int_{\countingMeasures} \ind{\countingMeasure[\region] > m} \gppDensity[\countingMeasure] \PoissonPointProcess[\gppFugacity][\intD \countingMeasure] + \int_{\countingMeasures} \ind{\countingMeasure[\region] > m} \modifiedSamplingDensity{\samplingError}[\countingMeasure] \PoissonPointProcess[\gppFugacity][\intD \countingMeasure] \\
		&= \Pr{\xi(\region) > m} + \Pr{\size{\modifiedSampledPointSet} > m}[\samplingGraph \in \approximationGraphs{n}{\frac{\samplingError}{12}}]
	\end{align*}
	for $\xi \sim \GibbsPointProcess{\region}{\gppFugacity}{\potential}$, and $\samplingGraph$ and $\modifiedSampledPointSet$ as in \Cref{algo:modifed_sampling}.

	We proceed by bounding each of these probability separately.
	Note that, by our choice of $n$ it holds that $m \ge \frac{12}{\samplingError} \gppFugacity \volumeMeasure[\region]$.
	By \Cref{lemma:domination_gpp}, we have $\EWrt{\xi(\region)}[\xi \sim \GibbsPointProcess{\region}{\gppFugacity}{\potential}] \le \gppFugacity \volumeMeasure[\region]$.
	Thus, Markov's inequality yields $\Pr{\xi(\region) > m} \le \frac{\samplingError}{12}$.
	Moreover, note that $\size{\modifiedSampledPointSet} = \size{\configurationToSet{\spinConfigurationModified}}$ for $\spinConfigurationModified$ as in \Cref{algo:modifed_sampling}.
	As $\spinConfigurationModified \sim \hcGibbsDistribution{H}{\fugacity[n]}$ for some $H \in \approximationGraphs{n}{\frac{\samplingError}{12}} \subseteq \graphs{n}$ and \Cref{lemma:domination_hc} applies to all such graphs, we get
	\[
		\E{\size{\modifiedSampledPointSet}}[\samplingGraph \in \approximationGraphs{n}{\frac{\samplingError}{12}}] \le \frac{\fugacity[n]}{1+\fugacity[n]} n \le \fugacity[n] n = \gppFugacity \volumeMeasure[\region] .
	\]
	Again, applying Markov's inequality gives $\Pr{\size{\modifiedSampledPointSet} > m}[\samplingGraph \in \approximationGraphs{n}{\frac{\samplingError}{12}}] \le \frac{\samplingError}{12}$.
	Consequently, we have
	\[
		\dtv{\GibbsPointProcess{\region}{\gppFugacity}{\potential}}{\modfiedSamplerOutput{\samplingError}} \le \frac{\samplingError}{4} + \frac{\samplingError}{6} = \frac{5}{12} \samplingError .
	\]

	To finish the proof, we now relate the output of \Cref{algo:modifed_sampling} with $\modfiedSamplerOutput{\samplingError}$ by using a coupling argument.
	To this end, note that \Cref{algo:modifed_sampling} can be used to sample from $\modfiedSamplerOutput{\samplingError}$ by simply restarting the sampler whenever $\samplingGraph \notin \approximationGraphs{n}{\frac{\samplingError}{12}}$.
	For our choice of $n$ we know that with a probability of $\Pr{\samplingGraph \in \approximationGraphs{n}{\frac{\samplingError}{12}}} \ge 1 - \frac{\samplingError}{12}$ only a single run of \Cref{algo:modifed_sampling} is required.
	By this coupling, the total variation distance between the output of \Cref{algo:modifed_sampling} and $\modfiedSamplerOutput{\samplingError}$ is at most $\frac{\samplingError}{12}$.
	Finally, applying triangle inequality shows that the total variation distance between the output of \Cref{algo:modifed_sampling} and $\GibbsPointProcess{\region}{\gppFugacity}{\potential}$ is bounded by $\frac{5}{12} \samplingError + \frac{\samplingError}{12} = \frac{\samplingError}{2}$, which concludes the proof.
\end{proof}

Using \Cref{lemma:modified_sampler_dtv}, we are able to prove that \Cref{algo:sampling} is an $\samplingError$-approximate sampler for $\GibbsPointProcess{\region}{\gppFugacity}{\potential}$.
In order to argue that \Cref{algo:sampling} also satisfies the running time requirements, given in \Cref{thm:sampling}, we require an efficient approximate sampler from the hard-core distribution $\hcGibbsDistribution{\samplingGraph}{\fugacity[n]}$.
To this end, we use the following known result.

\begin{theorem}[{\cite[Theorem $5$]{anari2021entropic}}]
	\label{thm:hc_sampler}
	Let $\graph = (\vertices, \edges)$ be an undirected graph with maximum vertex degree bounded by $\degree[\graph] \in \N_{\ge 2}$ and let $\fugacity \in \R_{\ge 0}$ with
	\[
		\fugacity < \criticalFugacity{\degree[\graph]} = \frac{\left(\degree[\graph] - 1\right)^{\degree[\graph] - 1}}{\left(\degree[\graph] - 2\right)^{\degree[\graph]}}.
	\]
	Then, for all $\error \in (0, 1]$, there is an $\error$-approximate sampler for the hard-core Gibbs distribution $\hcGibbsDistribution{\graph}{\fugacity}$ with an expected running time of $\bigO{\size{\vertices} \ln\left(\frac{\size{\vertices}}{\error}\right)}$.
\end{theorem}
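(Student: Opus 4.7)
The plan is to prove the theorem by establishing rapid mixing of the Glauber dynamics chain on $\spinConfigurations{\graph}$ with stationary distribution $\hcGibbsDistribution{\graph}{\fugacity}$. I would first set up the chain: at each step pick a vertex $v \in \vertices$ uniformly at random and resample $\spinConfiguration[v]$ from its conditional distribution under $\hcGibbsDistribution{\graph}{\fugacity}$ given the spins of $\vertices \setminus \{v\}$. This conditional is Bernoulli with parameter $\fugacity/(1 + \fugacity)$ when no neighbor of $v$ is in state $1$ and forces $\spinConfiguration[v] = 0$ otherwise. The chain is reversible, and with a standard data structure that tracks the occupied-neighbor count of each vertex, one step executes in $\bigO{1}$ amortized time after $\bigO{\size{\vertices}}$ preprocessing; hence it suffices to prove that $\bigO{\size{\vertices} \ln(\size{\vertices} \error^{-1})}$ steps reach total-variation distance $\error$ from stationarity.

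The core of the argument is to establish a modified log-Sobolev inequality (MLSI) for $\hcGibbsDistribution{\graph}{\fugacity}$ with constant $\bigOmega{\size{\vertices}^{-1}}$, following the spectral-to-entropic-independence template. Step one: derive spectral independence of $\hcGibbsDistribution{\graph}{\fugacity}$ using Weitz's self-avoiding walk tree, which dominates the pairwise vertex-influence matrix of the hard-core model on $\graph$ by that on the infinite $(\degree[\graph] - 1)$-ary tree. In the regime $\fugacity < \criticalFugacity{\degree[\graph]}$ the tree model is strictly below the uniqueness threshold, so standard tree-recursion contraction yields an influence-matrix spectral norm that depends only on the gap $\criticalFugacity{\degree[\graph]} - \fugacity$ and not on $\size{\vertices}$. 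Step two: upgrade spectral independence to entropic independence by exploiting that $\hcGibbsDistribution{\graph}{\fugacity}$ is a homogeneous log-concave measure on the hypercube indexed by $\vertices$; here one applies the localization-scheme or down-up-walk analysis of the entropic independence framework to extract an MLSI of the stated order.

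The main obstacle is step two. Spectral independence alone only delivers a Poincaré inequality and a crude $\bigO{\size{\vertices}^{2} \ln(\size{\vertices} \error^{-1})}$ mixing-time bound, which is a factor $\size{\vertices}$ too slow. Closing this gap requires proving entropy contraction for the down-up walk associated with $\hcGibbsDistribution{\graph}{\fugacity}$, which leans on the log-concavity of the generating polynomial of the independent-set distribution and on delicate inductive entropy-factorization arguments. Once the MLSI is in hand, the standard conversion between MLSI constants and total-variation mixing time delivers the claimed $\bigO{\size{\vertices} \ln(\size{\vertices} \error^{-1})}$ running time of the $\error$-approximate sampler.
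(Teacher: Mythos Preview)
The paper does not prove this statement at all: \Cref{thm:hc_sampler} is quoted verbatim from \cite[Theorem~5]{anari2021entropic} and used as a black box in the proof of \Cref{thm:sampling}. So there is no ``paper's own proof'' to compare against; your outline is really a sketch of how the cited reference establishes the result.

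As a sketch of the argument in \cite{anari2021entropic}, your plan is broadly on the right track (Glauber dynamics, spectral independence via Weitz's tree below the uniqueness threshold, then an upgrade to an MLSI of order $\bigOmega{\size{\vertices}^{-1}}$ via the entropic-independence framework), but step two contains a genuine inaccuracy. The hard-core measure $\hcGibbsDistribution{\graph}{\fugacity}$ is \emph{not} homogeneous---independent sets have varying sizes---and its multivariate generating polynomial is not log-concave in general. The route taken in \cite{anari2021entropic} and the companion works is instead to show that spectral independence, together with bounded marginals (which hold here since $\fugacity$ is bounded), yields \emph{fractional log-concavity} / entropic independence, and then to propagate entropy contraction through a localization scheme (or equivalently the field-dynamics comparison of \cite{chen2021optimal}). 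If you want to turn your outline into an actual proof you should replace the ``homogeneous log-concave'' justification with the correct mechanism: $\bigO{1}$-spectral independence plus $\bigO{1}$-bounded marginals implies $\bigOmega{1}$-entropic independence, which in turn gives the $\bigOmega{\size{\vertices}^{-1}}$ MLSI for Glauber dynamics.
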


\begin{proof}[Proof of \Cref{thm:sampling}]
	We start by arguing that \Cref{algo:sampling} is an $\samplingError$-approximate sampler for $\GibbsPointProcess{\region}{\gppFugacity}{\potential}$.
	To this end, we show that the total variation distance between the output distributions of \Cref{algo:sampling} and \Cref{algo:modifed_sampling} is bounded by $\frac{\samplingError}{2}$.
	Using the triangle inequality and \Cref{lemma:modified_sampler_dtv} then yields the desired result.
	To bound the total variation distance between the \Cref{algo:sampling} and \Cref{algo:modifed_sampling} by $\frac{\samplingError}{2}$, it suffices to construct a coupling of both algorithms such that their output coincides with a probability of at least $1 - \frac{\samplingError}{2}$.
	This is, we want to find a coupling of both algorithms such that $\sampledPointSet \neq \modifiedSampledPointSet$ with probability at most $\frac{\samplingError}{2}$, where $\sampledPointSet$ and $\modifiedSampledPointSet$ are as in \Cref{algo:sampling} and \Cref{algo:modifed_sampling}.

	To construct such a coupling, we start by letting both algorithms draw the same points $\randomPoint_1, \dots, \randomPoint_n$ and construct the same graph $\samplingGraph$.
	If $\degree[\samplingGraph] \ge \frac{\eulerE n}{\gppFugacity \volumeMeasure[\region]}$, then we may just assume $\sampledPointSet \neq \modifiedSampledPointSet$.
	Otherwise, if $\degree[\samplingGraph] < \frac{\eulerE n}{\gppFugacity \volumeMeasure[\region]}$, then $\spinConfiguration = \spinConfigurationModified$ is a sufficient condition for $\sampledPointSet = \modifiedSampledPointSet$.
	As $\spinConfigurationModified$ is drawn from $\hcGibbsDistribution{\samplingGraph}{\fugacity[n]}$ and $\spinConfiguration$ is drawn from an $\frac{\samplingError}{4}$ approximation of that distribution, they can be coupled in such a way that $\Pr{\spinConfigurationModified \neq \spinConfiguration} \le \frac{\samplingError}{4}$.
	Using this coupling of \Cref{algo:sampling} and \Cref{algo:modifed_sampling}, we have
	\[
		\Pr{\sampledPointSet \neq \modifiedSampledPointSet}
		\le \Pr{\degree[\samplingGraph] \ge \frac{\eulerE n}{\gppFugacity \volumeMeasure[\region]}} + \frac{\samplingError}{4} \cdot \Pr{\degree[\samplingGraph] < \frac{\eulerE n}{\gppFugacity \volumeMeasure[\region]}}
		\le  \Pr{\degree[\samplingGraph] \ge \frac{\eulerE n}{\gppFugacity \volumeMeasure[\region]}} + \frac{\samplingError}{4} .
	\]
	Therefore, it remains to prove that $\degree[\samplingGraph] \ge \frac{\eulerE n}{\gppFugacity \volumeMeasure[\region]}$ with probability at most $\frac{\samplingError}{4}$, where $\samplingGraph \sim \canonicalDistribution{n}{\region}{\potential}$.
	We follow a similar arguments as in the proof of \Cref{thm:approximate_gpp}.
	Note that, for our choice of $n$, there exists $z \ge 3 \ln\left(\frac{4 \eulerE}{\samplingError}\right) \max\left\{\frac{1}{\eulerE - \gppFugacity \generalizedTemperedness{\potential}}, \frac{\gppFugacity \generalizedTemperedness{\potential} }{\left(\eulerE - \gppFugacity \generalizedTemperedness{\potential}\right)^{2}}\right\} \gppFugacity \volumeMeasure[\region]$ such that $n = 2 z \ln(z)^2$.
	Moreover, we have $n \ge \eulerE \ge 2$ and, for $\volumeMeasure[\region]$ (consequently $z$) sufficiently large, it holds that $2 z \ln(z)^2 \ge 2 z \ln\left(2 z \ln(z)^2\right) = 2 z \ln(n)$.
	Therefore, we have
	\begin{align*}
		n - 1
		&\ge \frac{n}{2} \\
		&\ge 3 \ln\left(\frac{4 \eulerE}{\samplingError}\right) \max\left\{\frac{1}{\eulerE - \gppFugacity \generalizedTemperedness{\potential}}, \frac{\gppFugacity \generalizedTemperedness{\potential} }{\left(\eulerE - \gppFugacity \generalizedTemperedness{\potential}\right)^{2}}\right\} \gppFugacity \volumeMeasure[\region] \ln(n) \\
		&\ge 3 \ln\left(\frac{4n}{\samplingError}\right) \max\left\{\frac{1}{\eulerE - \gppFugacity \generalizedTemperedness{\potential}}, \frac{\gppFugacity \generalizedTemperedness{\potential} }{\left(\eulerE - \gppFugacity \generalizedTemperedness{\potential}\right)^{2}}\right\} \gppFugacity \volumeMeasure[\region]
	\end{align*}
	and by \Cref{lemma:degree_bound}
	\[
		\Pr{\degree[\samplingGraph] \ge \frac{\eulerE n}{\gppFugacity \volumeMeasure[\region]}}
		\le \Pr{\degree[\graph] \ge \left(1 + \frac{\eulerE - \gppFugacity \generalizedTemperedness{\potential}}{\gppFugacity  \generalizedTemperedness{\potential}}\right)  \frac{n-1}{\volumeMeasure[\region]}\generalizedTemperedness{\potential}}
		\le \frac{\samplingError}{4} .
	\]

	To prove \Cref{thm:sampling}, it remains to show that \Cref{algo:sampling} satisfies the given running time requirements.
	To this end, note that, for all $\gppFugacity < \frac{\eulerE}{\generalizedTemperedness{\potential}}$, it holds that $n \in \bigOTilde{\volumeMeasure[\region]^2 \samplingError^{-3}}$.
	Therefore, sampling $\randomPoint_1, \dots, \randomPoint_n$ requires a running time of $\bigOTilde{n \samplePointTime{\region}} = \bigOTilde{\volumeMeasure[\region]^2 \samplingError^{-3} \samplePointTime{\region}}$.
	Moreover, the graph can be constructed in time $\bigOTilde{n^2 \evaluatePotentialTime{\potential}} = \bigOTilde{\volumeMeasure[\region]^4 \samplingError^{-6} \evaluatePotentialTime{\potential}}$ and $\degree[\samplingGraph] \ge \frac{\eulerE n}{\gppFugacity \volumeMeasure[\region]}$ can be checked in $\bigO{1}$ if we keep track of $\degree[\samplingGraph]$ while constructing the graph.
	Finally, for $\degree[\graph] < \frac{\eulerE n}{\gppFugacity \volumeMeasure[\region]}$ it holds that
	\[
		\fugacity[n]
		\le \frac{\gppFugacity \volumeMeasure[\region]}{n}
		< \frac{\eulerE}{\degree[\graph]} < 	\criticalFugacity{\degree[\graph]}.
	\]
	Thus, \Cref{thm:hc_sampler} guarantees the existence of an $\frac{\samplingError}{8}$-approximate sampler from $\hcGibbsDistribution{\samplingGraph}{\fugacity[n]}$ with an expected running time in $\bigO{n \ln\left(\frac{n}{\samplingError}\right)} = \bigO{\volumeMeasure[\region]^2 \samplingError^{-3} \ln\left(\frac{\volumeMeasure[\region]}{\samplingError}\right)}$.
	Note that, by Markov's inequality, the probability that this sampler takes more than $\frac{8}{\samplingError}$ times its expected running time is bounded by $\frac{\samplingError}{8}$.
	Therefore, if we run the sampler from \Cref{thm:hc_sampler} with an error bound of $\frac{\samplingError}{8}$ and, whenever the algorithm takes more than $\frac{8}{\samplingError}$ times its expected running time, stop it and return an arbitrary spin configuration, this results in an $\frac{\samplingError}{4}$-approximate sampler with a guaranteed running time in $\bigOTilde{\volumeMeasure[\region]^2 \samplingError^{-4}}$.
	Consequently, \Cref{algo:sampling} runs in time $\bigOTilde{\volumeMeasure[\region]^2 \samplingError^{-4} + \volumeMeasure[\region]^2 \samplingError^{-3} \samplePointTime{\region} + \volumeMeasure[\region]^4 \samplingError^{-6} \evaluatePotentialTime{\potential}}$, which concludes the proof.
\end{proof}

	\section{Potential-weighted connective constant and strong spatial mixing} \label{sec:connective_constant}
Throughout this section, we consider the setting introduced in \Cref{sec:approximation}.
In this section, we relate the potential-weighted connective constant $\pwcc{\potential}$ of a repulsive potential with a high-probability bound on a modified version of the connective constant of a graph $\graph \sim \canonicalDistribution{n}{\region}{\potential}$.
This modified connective constant represents the growth rate of a truncated version of the self-avoiding walk tree as used by Weitz~\cite{weitz2006counting}.
Besides giving a graphical interpretation for the potential-weighted connective constant, an immediate consequence of the result is that the hard-core models studied in \Cref{sec:approximation} with high probability exhibit strong spatial mixing for $\fugacity < \frac{\eulerE}{\pwcc{\potential}}$.

\subsection{Potential-weighted connective constant} \label{sec:pwcc}
The potential-weighted connective constant was introduced in \cite{michelen2021potential} to measure the strength of interaction induced by a potential $\potential$ in a way that is, compared to the temperedness constant $\generalizedTemperedness{\potential}$, more sensitive to the particular geometry of the underlying space $\pointProcessSpace$.
To this end, we set $\kPWCC{\potential}{0} = 1$ and, for $k \in \N_{\ge 1}$,
\[
	\kPWCC{\potential}{k} = \sup_{x_0 \in \pointProcessSpace} \int_{\pointProcessSpace^{k}} \prod_{j=1}^{k} \left(\exp\left(- \sum_{i=0}^{j-2} \ind{\dist[x_i][x_j] < \dist[x_i][x_{i+1}]} \potential[x_i][x_j]\right) \cdot \left(1 - \eulerE^{- \potential[x_{j-1}][x_j]}\right)\right) \productVolumeMeasure{k}[\intD \vectorize{x}] ,
\]
where $\vectorize{x} = (x_1, \dots, x_{k})$.
The potential-weighted connective constant is now defined as
\[
	\pwcc{\potential} = \lim_{k \to \infty} \kPWCC{\potential}{k}^{1/k} = \inf_{k \to \infty} \kPWCC{\potential}{k}^{1/k},
\]
where existence if the limit and the second equality are implied by the fact that we assume $\potential$ to be repulsive, which implies that $\kPWCC{\potential}{k}$ is sub-multiplicative.
Note that for all $k \in \N$ it holds that $\kPWCC{\potential}{k} \le \generalizedTemperedness{\potential}^k$ and therefore $\pwcc{\potential} \le \generalizedTemperedness{\potential}$.

\subsection{Strong spatial mixing}
Strong spatial mixing is a frequently used notion of correlation decay in discrete spin systems.
In this section, we focus on strong spatial mixing for the hard-core model.
Recall that for a graph $\graph = (\vertices_{\graph}, \edges_{\graph})$ and a parameter $\fugacity \in \R_{\ge 0}$ we write $\hcGibbsDistribution{\graph}{\fugacity}$ for the hard-core distribution on $\graph$ at weight $\fugacity$, which is a distribution on $\spinConfigurations{\graph}$ the set of all functions $\spinConfigurations{\graph} = \{\spinConfiguration: \vertices \to \{0, 1\}\}$.
We extend this notation to conditional distributions.
To this end, let $S \subseteq \vertices_{\graph}$ and let $\tau: S \to \{0, 1\}$.
Write $\spinConfiguration_{S} = \tau$ for the event that $\spinConfiguration \sim \hcGibbsDistribution{\graph}{\fugacity}$ coincides with $\tau$ on $S$.
We call $\tau$ \emph{feasible} if $\hcGibbsDistribution{\graph}{\fugacity}[\spinConfiguration_{S} = \tau] > 0$.
In that case, we write
\[
	\hcGibbsDistribution{\graph}{\fugacity}[\,\cdot\,\mid\, \spinConfiguration_{S} = \tau]
	= \frac{\hcGibbsDistribution{\graph}{\fugacity}[\cdot]}{\hcGibbsDistribution{\graph}{\fugacity}[\spinConfiguration_{S} = \tau]}
\]
for the distribution of $\spinConfiguration \sim \hcGibbsDistribution{\graph}{\fugacity}$ conditioned on $\spinConfiguration_{S} = \tau$.
Often, strong spatial mixing is phrased in terms of the so-called \emph{occupation ratios}.
For $\graph$ and $\fugacity$ as above and $v \in \vertices_{\graph}$ we write
\[
	\occupationRatio{\graph}{\fugacity}[v] = \frac{\hcGibbsDistribution{\graph}{\fugacity}[\spinConfiguration(v)=1]}{\hcGibbsDistribution{\graph}{\fugacity}[\spinConfiguration(v)=0]} .
\]
Further, for $S \subset \vertices_{\graph}$ with $v \notin S$ and feasible $\tau: S \to \{0, 1\}$ we define $\occupationRatio{\graph}{\fugacity}[v][\tau]$ analogously using the distribution $\hcGibbsDistribution{\graph}{\fugacity}[\,\cdot\,\mid\, \spinConfiguration_{S} = \tau]$ instead.
Based on that, we now state the definition of strong spatial mixing as given in \cite{sinclair2017spatial}.
\begin{definition}[\cite{sinclair2017spatial}] \label{def:ssm}
	The hard-core model with vertex activity $\fugacity \in \R_{\ge 0}$ is set to satisfy strong spatial mixing on a family of graphs $\graphFamily$ if there exists a constant $\mixingConstant \in [0, 1)$ such that for all $\graph \in \graphFamily$, vertices $v \in \vertices_{\graph}$, $S \subseteq \vertices_{\graph} \setminus \{v\}$, and feasible $\tau, \tau': S \to \{0, 1\}$ it holds that
	\[
		\absolute{\occupationRatio{\graph}{\fugacity}[v][\tau] - \occupationRatio{\graph}{\fugacity}[v][\tau']} \le \bigO{\mixingConstant^{s}} ,
	\]
	where $s$ is the graph distance between $v$ and the vertices on which $\tau$ and $\tau'$ differ (i.e., $\{ u \in S \mid \tau(u) \neq \tau'(u)\}$).
	We call $\mixingConstant$ the decay rate.
\end{definition}

\begin{remark} \label{remark:ssm}
	Note that this definition is actually weaker than the definition of (exponential) strong spatial mixing that is usually used in the literature (cf. \cite{weitz2006counting} Definition $2.2$ and the remark following it).
	Usually, this definition requires the existence of constants $\alpha \ge 0, \beta > 0$ (independent of $\graph, v, \tau_1$ and $\tau_2$) such that
	\[
		\absolute{\occupationRatio{\graph}{\fugacity}[v][\tau] - \occupationRatio{\graph}{\fugacity}[v][\tau']}
		\le \alpha \eulerE^{- \beta s} .
	\]
	In contrast, \Cref{def:ssm} only requires such a decay if the distance $s \ge s_0$ is sufficiently large, where the required lower bound $s_0$ might depend on $\size{\vertices_{\graph}}$.
	In fact, this is the case for the strong spatial mixing result for families of finite graphs in \cite{sinclair2017spatial}, which requires $s \ge s_0 \in \bigTheta{\ln(\size{\vertices_{\graph}})}$.
	Such a bound on $s_0$ still allows for applying Weitz's algorithm to obtain an efficient deterministic approximation algorithm for the partition function of the hard-core model.
	However, it is not sufficient for other applications that require the exponential decay to also hold at constant distances, such as recent exact sampling algorithms \cite{anand2022perfect,feng2022perfect} or the rapid mixing result for Glauber dynamics in \cite{weitz2006counting}.
\end{remark}

\subsection{Self-avoiding walk tree and Weitz tree} \label{sec:saw_tree}
Due to \cite{weitz2006counting}, it is well known that strong spatial mixing on a given graph can be studied in terms of a tree construction that is closely related to the \emph{self-avoiding walk tree}.
We refer to this construction as the \emph{Weitz tree}.
Moreover, it was shown by \cite{sinclair2017spatial} that bounding the growth-rate of Weitz trees for a family of graphs can be used to find a vertex activity regime in which the graph family exhibits strong spatial mixing.
In the following paragraphs, we briefly introduce both trees.

\paragraph*{Self-avoiding walk tree}
Given an undirected graph $\graph=(\vertices_{\graph}, \edges_{\graph})$ and a vertex $\rootVertex \in \vertices_{\graph}$, we denote by $\sawTree[\rootVertex][\graph] = (\vertices_{\sawTree}, \edges_{\sawTree})$ the self-avoiding walk tree with root $\rootVertex$.
It is constructed as follow:
\begin{enumerate}
	\item Let $\paths[\rootVertex][\graph]$ denote the set of all simple paths $p = \rootVertex, v_1, \dots, v_k$ (identified by their vertex sequence) in $\graph$ starting at $\rootVertex$, where we call $k$ the length of $p$.
	$\vertices_{\sawTree}$ contains exactly one vertex $w_{p}$ for every such path $p \in \paths[\rootVertex][\graph]$.
	In particular, there is a unique vertex that corresponds to the path $p = \rootVertex$ of length $0$, which is denoted by $w_{\rootVertex}$ and considered the root of $\sawTree[\rootVertex][\graph]$.
	\item For two paths $p, p' \in \paths[\rootVertex][\graph]$ we say that $w_p \in \vertices_{\sawTree}$ is a child of $w_{p'} \in \vertices_{\sawTree}$ (and connect them by an edge in $\edges_{\sawTree}$) if and only if $p$ is obtained from $p'$ by adding one vertex.
	That is, if $p' = \rootVertex, v_1, \dots v_k$, then $p = \rootVertex, v_1, \dots v_k, v_{k+1}$ for some vertex $v_{k+1} \notin \vertices_{\graph} \setminus \{\rootVertex, v_1, \dots v_k\}$ adjacent to $v_k$.
\end{enumerate}

\paragraph*{Weitz tree}
For our purpose, it is crucial to differentiate between the self-avoiding walk tree and a truncated version that we call the Weitz tree.
Its construction is analogous to the self-avoiding walk tree above but with an additional restriction on the set of paths involved.
This restriction on the paths allows for some degrees of freedom, as it depends on assigning an ordering to the neighbors $\neighbors{\graph}{v}$  of each vertex $v \in \vertices_{\graph}$.
Formally, we call a family of functions $\neighborOrderings = (\neighborOrder{v})_{v \in \vertices_{\graph}}$ a \emph{neighborhood ordering} for $\graph$ if, for every $v \in \vertices_{\graph}$, it holds that $\neighborOrder{v}$ is a bijection $\neighbors{\graph}{v} \to [\size{\neighbors{\graph}{v}}]$.
For a vertex $\rootVertex \in \vertices_{\graph}$ we now write $\orderedPaths[\rootVertex][\neighborOrderings][\graph]$ for the set of simple paths $p = v_0, v_1, \dots, v_k$ in $\graph$ with the following properties
\begin{enumerate}
	\item It holds that $v_0 = \rootVertex$.
	\item For every $2 \le i \le k$ and all $0 \le j \le i-2$ it holds that, if $v_i \in \neighbors{\graph}{v_j}$, then $\neighborOrder{v_j}[v_i] > \neighborOrder{v_j}[v_{j+1}]$.
\end{enumerate}
Obviously it holds that $\orderedPaths[\rootVertex][\neighborOrderings][\graph] \subseteq \paths[\rootVertex][\graph]$.
The Weitz tree $\WeitzTree[\rootVertex][\neighborOrderings][\graph]$ with root $\rootVertex$ is now defined analogously to the self-avoiding walk tree $\sawTree[\rootVertex][\graph]$ but restricted to paths in $\orderedPaths[\rootVertex][\neighborOrderings][\graph]$.
It is not hard to see that $\WeitzTree[\rootVertex][\neighborOrderings][\graph]$ is in fact a subtree of $\sawTree[\rootVertex][\graph]$.
Since we are going to study a notion of growth rate of $\WeitzTree[\rootVertex][\neighborOrderings][\graph]$, it is useful to denote by $\layer{k}[\rootVertex][\neighborOrderings][\graph]$ the number of vertices at layer $k \in \N$ of $\WeitzTree[\rootVertex][\neighborOrderings][\graph]$.
This is equal to the number of paths $p = \rootVertex, v_1, \dots, v_k \in \orderedPaths[\rootVertex][\neighborOrderings][\graph]$ of length~$k$.

\begin{remark}
	Note that this construction of the Weitz tree is not exactly as described in \cite{weitz2006counting}.
	The goal of constructing those trees is to study hard-core models on them.
	In the original construction, paths where allowed to close cycles.
	Whenever this happens, the spin of the vertex that closes the cycle was fixed to either~$0$ or~$1$, depending on the chosen neighborhood ordering.
	For the hard-core model, this corresponds to either removing the vertex (for spin~$0$), or the vertex and all its neighbors (for spin~$1$).
	This procedure leads to our notion of the Weitz tree.
\end{remark}

\subsection{Connective constant and strong spatial mixing}
With the definition of strong spatial mixing and the construction of the Weitz tree given, we now get to the definition of the connective constant and its implications for strong spatial mixing.
Our definition of connective constant is inspired by \cite{sinclair2017spatial}.
However, there are two things that should be noted.
Firstly, we emphasize that our definition of connective constant refers to the Weitz tree instead of the full self-avoiding walk.
Secondly, for our application to random graphs from $\canonicalDistribution{n}{\region}{\potential}$, it is important which neighborhood ordering is used for constructing the tree.
To reflect this, we use the following definition.

\begin{definition}\label{def:connective_constant}
	We say a family of graphs $\graphFamily$ has a connective constant bounded by $\connectiveConstant$ if there are constants $a, c \ge 0$ such that for all $\graph \in \graphFamily$ the following holds: there is a neighborhood ordering $\neighborOrderings$ for $\graph$ such that for all $m \ge a \ln(\size{\vertices_{\graph}})$ and all $\rootVertex \in \vertices_{\graph}$ we have
	\[
		\sum_{k = 0}^{m} \layer{k}[\rootVertex][\neighborOrderings][\graph] \le c \connectiveConstant^{m}. \qedhere
	\]
\end{definition}

In \cite{sinclair2017spatial}, it was proven that a bound on the connective constant of a graph family immediately translates to a regime of strong spatial mixing.
For our purposes, we will actually need some more detailed information about the rate of decay.
The following statement can be extracted from the proof of the main theorem in \cite{sinclair2017spatial}.
\begin{theorem}[\cite{sinclair2017spatial}]\label{thm:ssm_from_cc}
	Suppose $\graphFamily$ is a family of graphs with connective constant bounded by $\connectiveConstant$ as in \Cref{def:connective_constant} for constants $a$ and $c$.
	For all $\varepsilon > 0$, $\fugacity \le \eulerE^{-\varepsilon}\criticalFugacity{\connectiveConstant}$, and graphs $\graph \in \graphFamily$ the following holds:
	For all $v \in \vertices_{\graph}$, all $S \subseteq \vertices_{\graph} \setminus \{v\}$, and all feasible $\tau, \tau': S \to \{0, 1\}$ that only differ at distance $s \ge a \ln(\size{\vertices_{\graph}})$ from $v$, it holds that
	\[
		\absolute{\occupationRatio{\graph}{\fugacity}[v][\tau] - \occupationRatio{\graph}{\fugacity}[v][\tau']} \le c^{1/q} \frac{M}{L} \eulerE^{- \varepsilon s / q} ,
	\]
	where $1 < q \le 2$, $M = \sinh^{-1}\left(\sqrt{\fugacity}\right)$ and $L = \frac{1}{2 \sqrt{\fugacity \cdot (\fugacity + 1)}}$.
\end{theorem}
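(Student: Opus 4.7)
The plan is to follow the approach of Sinclair, Srivastava, Stefankovič, and Yin (building on Weitz), which reduces strong spatial mixing on a graph to analysing a recursive system on the Weitz tree, then controls that recursion via a suitably chosen potential function and Hölder's inequality. Concretely, the first step is the standard tree reduction: by Weitz's construction (with the neighborhood ordering $\neighborOrderings$ witnessing the connective-constant bound), for any $v \in \vertices_{\graph}$ and feasible $\tau$ on $S$, the occupation ratio $\occupationRatio{\graph}{\fugacity}[v][\tau]$ equals the occupation ratio at the root of $\WeitzTree[v][\neighborOrderings][\graph]$ with boundary conditions obtained by fixing vertices encountered when paths would close cycles, together with the restriction prescribed by $\tau$ propagated to the appropriate leaves. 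Thus it suffices to bound the analogous quantity on the tree.

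On the tree the occupation ratio satisfies the standard recursion $R = \fugacity \prod_{i} \tfrac{1}{1 + R_i}$ over children $i$. The next step is to switch to the potential variable $y = \Phi(R)$ with $\Phi(x) = \sinh^{-1}(\sqrt{x})$, which is exactly the symmetrizing potential that achieves the tree threshold $\criticalFugacity{\connectiveConstant}$. Under this change of coordinates the recursion becomes contractive: using the mean value theorem one bounds $|\Phi(R^{\tau}) - \Phi(R^{\tau'})|$ by a weighted sum of differences $|\Phi(R_i^{\tau}) - \Phi(R_i^{\tau'})|$ over the children, with weights $|\partial_i F|$ satisfying a sharp Hölder-type inequality at $\fugacity = \criticalFugacity{d}$. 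Applying Hölder with exponent $q$ (where $1 < q \le 2$ is chosen as in \cite{sinclair2017spatial} to match the symmetrization) yields, after taking $q$-th powers, an inequality of the form
\[
    |\Phi(R^{\tau}_{\rootVertex}) - \Phi(R^{\tau'}_{\rootVertex})|^{q} \le \sum_{u \in \layer{s}[\rootVertex][\neighborOrderings][\graph]} \alpha_{u} \cdot |\Phi(R^{\tau}_{u}) - \Phi(R^{\tau'}_{u})|^{q},
\]
where the path-weights $\alpha_{u}$ satisfy $\sum_{u} \alpha_{u} \le \left(\fugacity / \criticalFugacity{\connectiveConstant}\right)^{s} \cdot \#\{\text{nodes at level } s\}$ once all intermediate levels are processed. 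Because $\fugacity \le \eulerE^{-\varepsilon}\criticalFugacity{\connectiveConstant}$, this contributes a factor $\eulerE^{-\varepsilon s}$, and the connective-constant bound $\sum_{k \le s} \layer{k}[\rootVertex][\neighborOrderings][\graph] \le c \connectiveConstant^{s}$ handles the combinatorial growth; crucially, the two $\connectiveConstant$-factors cancel.

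Combining these pieces one gets $|\Phi(R^{\tau}_{\rootVertex}) - \Phi(R^{\tau'}_{\rootVertex})| \le c^{1/q} \cdot \max_{u \in \layer{s}} |\Phi(R^{\tau}_{u}) - \Phi(R^{\tau'}_{u})| \cdot \eulerE^{-\varepsilon s / q}$. The final step converts back to occupation ratios: $R^{\tau}_{u}, R^{\tau'}_{u} \in [0, \fugacity]$, so $|\Phi(R^{\tau}_{u}) - \Phi(R^{\tau'}_{u})| \le \sinh^{-1}(\sqrt{\fugacity}) = M$, and by the mean value theorem $|R^{\tau}_{\rootVertex} - R^{\tau'}_{\rootVertex}| \le \frac{1}{\Phi'(R^*)} |\Phi(R^{\tau}_{\rootVertex}) - \Phi(R^{\tau'}_{\rootVertex})|$ where the worst-case slope of $\Phi$ on $[0, \fugacity]$ is $\Phi'(\fugacity) = L = \frac{1}{2\sqrt{\fugacity(\fugacity+1)}}$. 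Putting this together yields the stated bound $c^{1/q} \frac{M}{L} \eulerE^{-\varepsilon s / q}$.

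The main technical obstacle is the Hölder step: verifying that at $\fugacity = \criticalFugacity{d}$ one has the tight inequality $\sum_{i=1}^{d} |\partial_{i} F(y_{1}, \ldots, y_{d})|^{p} \le 1$ (with $p$ the Hölder conjugate of $q$) uniformly over the relevant range of $y_{i}$, and then extracting the exact multiplicative slack $\fugacity / \criticalFugacity{\connectiveConstant} \le \eulerE^{-\varepsilon}$ per level. This is precisely the calculation worked out in \cite{sinclair2017spatial}; since the theorem is attributed there, the role of the present proof is to package the known estimate with the explicit prefactor $c^{1/q} M / L$ needed for later applications, which requires carefully tracking the constants through the reduction rather than introducing any new ideas. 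The fact that the neighborhood ordering is allowed to depend on $\graph$ (as in \Cref{def:connective_constant}) is important: it matches Weitz's freedom in choosing which vertices to fix when closing a cycle, and it is this flexibility that makes the Weitz-tree layer counts $\layer{k}[\rootVertex][\neighborOrderings][\graph]$ (rather than the full self-avoiding walk tree counts) the correct combinatorial object.
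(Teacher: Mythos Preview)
Your proposal is correct and matches the approach the paper defers to: the paper does not give an independent proof of this theorem but simply states (in the remark following it) that the bound ``follows immediately from tracking the constants in the proof of the main theorem of \cite{sinclair2014approximation}'', which is precisely the Weitz-tree reduction plus $\Phi(x)=\sinh^{-1}(\sqrt{x})$ potential plus H\"older argument you outline. Your identification of where $M$, $L$, $c^{1/q}$, and the $\eulerE^{-\varepsilon s/q}$ factor arise is accurate, and your emphasis on the freedom in the neighborhood ordering (so that the Weitz-tree layer counts, not the full SAW counts, control the bound) is exactly the point the paper needs for its application.
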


\begin{remark}
	\Cref{thm:ssm_from_cc} follows immediately from tracking the constants in the proof of the main theorem of \cite{sinclair2014approximation}.
	Note further that in the proof in \cite{sinclair2014approximation} the bound from \Cref{thm:ssm_from_cc} is stated with $\frac{L}{M}$ instead of $\frac{M}{L}$.
	However, closely inspecting the proof and the lemmas used therein reveals that this is attributed to a typo.
\end{remark}

\subsection{Connective constant and spatial mixing for random discretizations of Gibbs point processes}
Our first main result of this section is the following bound on the connective constant for random graphs from $\canonicalDistribution{n}{\region}{\potential}$.
\begin{theorem}\label{thm:connective_constant}
	Let $\region \subseteq \pointProcessSpace$ be a bounded and measurable region with volume $\volumeMeasure[\region]>0$.
	Let $\potential: \pointProcessSpace^2 \to \R_{\ge 0} \cup \{\infty\}$ be a symmetric repulsive potential with $\generalizedTemperedness{\potential} < \infty$ and $\pwcc{\potential} > 0$.
	For every $\varepsilon > 0$ there exists some $n_{0} \in \bigTheta{\volumeMeasure[\region]}$ such that for all $n \ge n_0$ the following holds:
	There exists a family of graphs $\graphFamily \subseteq \graphs{n}$ with connective constant bounded by $\eulerE^{\varepsilon} \frac{n}{\volumeMeasure[\region]} \pwcc{\potential}$ such that the constants $a, c$ from \Cref{def:connective_constant} are independent of $\region$ and $n$, and for $\graph \sim \canonicalDistribution{n}{\region}{\potential}$ it holds that
	\[
		\Pr{\graph \in \graphFamily} \ge 1 - \frac{1}{n} . \qedhere
	\]
\end{theorem}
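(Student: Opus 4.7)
The plan is to choose a distance-aware neighborhood ordering so that the Weitz-tree truncation condition lines up exactly with the nearest-reach indicators inside the definition of $\kPWCC{\potential}{k}$, bound expected layer sizes by that quantity, and then apply Markov's inequality together with a union bound. Concretely, given $\graph \sim \canonicalDistribution{n}{\region}{\potential}$ with the underlying positions $x_1, \dots, x_n \in \region$ that generate it, I take $\neighborOrder{v}$ to order each neighbor $u$ of $v$ by increasing $\dist[x_v][x_u]$ (with ties broken arbitrarily).

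Fix $\rootVertex \in [n]$, $k \ge 1$, and write $v_0 \coloneqq \rootVertex$. Expand $\E{\layer{k}[\rootVertex][\neighborOrderings][\graph]}$ as a sum over ordered tuples $(v_1, \dots, v_k)$ of $k$ distinct vertices in $[n] \setminus \{\rootVertex\}$ of the probability that $(v_0, v_1, \dots, v_k)$ is a valid length-$k$ walk in $\WeitzTree[\rootVertex][\neighborOrderings][\graph]$. Conditioned on all positions, the edges of $\graph$ are independent Bernoullis with parameter $1 - \eulerE^{-\potential[x_u][x_v]}$, so this probability factors as
\[
\prod_{j=1}^{k} \bigl(1 - \eulerE^{-\potential[x_{v_{j-1}}][x_{v_j}]}\bigr) \prod_{2 \le j \le k,\, 0 \le i \le j-2} \psi_{ij},
\]
where, by the distance-based ordering, $\psi_{ij} = 1$ if $\dist[x_{v_i}][x_{v_j}] > \dist[x_{v_i}][x_{v_{i+1}}]$ and $\psi_{ij} = \eulerE^{-\potential[x_{v_i}][x_{v_j}]}$ otherwise; in particular $\psi_{ij} \le \eulerE^{-\ind{\dist[x_{v_i}][x_{v_j}] < \dist[x_{v_i}][x_{v_{i+1}}]}\,\potential[x_{v_i}][x_{v_j}]}$. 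Integrating the iid uniform positions $x_{v_1}, \dots, x_{v_k}$, enlarging the integration domain from $\region^k$ to $\pointProcessSpace^k$, taking $\sup_{x_{\rootVertex} \in \pointProcessSpace}$, and summing over the at most $n^k$ ordered tuples then yields
\[
\E{\layer{k}[\rootVertex][\neighborOrderings][\graph]} \le \Bigl(\tfrac{n}{\volumeMeasure[\region]}\Bigr)^{k} \kPWCC{\potential}{k} .
\]

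Since $\pwcc{\potential} = \inf_k \kPWCC{\potential}{k}^{1/k}$, there is a constant $k_0 = k_0(\varepsilon, \potential)$ with $\kPWCC{\potential}{k} \le (\eulerE^{\varepsilon/2} \pwcc{\potential})^k$ for every $k \ge k_0$; setting $\connectiveConstant = \eulerE^{\varepsilon} \tfrac{n}{\volumeMeasure[\region]} \pwcc{\potential}$, this gives $\E{\layer{k}[\rootVertex]} \le (\eulerE^{-\varepsilon/2}\connectiveConstant)^{k}$ for all $k \ge k_0$. Markov's inequality at threshold $T_k = 2 n^3 (\eulerE^{-\varepsilon/2}\connectiveConstant)^{k}$ gives $\Pr{\layer{k}[\rootVertex] \ge T_k} \le 1/(2n^3)$, and a union bound over all $n$ choices of root and all $k \in \{k_0, \dots, n\}$ places $\graph$ in a family $\graphFamily \subseteq \graphs{n}$ where $\layer{k}[\rootVertex] \le T_k$ simultaneously for all roots and all $k \ge k_0$, with probability at least $1 - 1/n$ (the deterministic bound $\layer{k}[\rootVertex] \le n^k$ handles $k < k_0$). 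Taking $n_0 \in \bigTheta{\volumeMeasure[\region]}$ large enough to ensure $\connectiveConstant \ge \max(\eulerE, 2\eulerE^{\varepsilon/2})$ and $a > \max(6/\varepsilon, k_0 + 1)$, then for any $m \ge a \ln n$ a geometric-series bound $\sum_{k=k_0}^{m} T_k \le 4 n^3 \eulerE^{-\varepsilon m/2}\connectiveConstant^{m} \le 4\connectiveConstant^{m}$ combined with the absorption $\sum_{k < k_0} n^{k} \le k_0 n^{k_0} \le k_0 \connectiveConstant^{m}$ (using $\connectiveConstant^{a \ln n} \ge n^{a} \ge n^{k_0}$) yields $\sum_{k=0}^{m} \layer{k}[\rootVertex] \le c \connectiveConstant^{m}$ for $c = 4 + k_0$, with both $a$ and $c$ depending only on $\varepsilon$ and $\potential$.

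The main obstacle is verifying that the distance-based neighborhood ordering really does reproduce the nearest-reach indicators appearing in the definition of $\kPWCC{\potential}{k}$: any naive ordering would only yield the weaker bound $\E{\layer{k}[\rootVertex]} \le (n\generalizedTemperedness{\potential}/\volumeMeasure[\region])^{k}$, discarding the improvement $\pwcc{\potential} \le \generalizedTemperedness{\potential}$ on which the sharpness of the theorem rests. Careful bookkeeping of the factors $\psi_{ij}$ and of measure-zero ties in $\dist$ is what bridges the Weitz-tree truncation on the random graph and the continuous potential-weighted connective constant.
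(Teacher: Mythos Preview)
Your proof is correct and takes essentially the same route as the paper: both use the distance-based neighbor ordering to align the Weitz-tree truncation with the indicators in $\kPWCC{\potential}{k}$, derive $\E{\layer{k}[\rootVertex][\neighborOrderings][\graph]} \le (n/\volumeMeasure[\region])^{k}\,\kPWCC{\potential}{k}$, and finish with Markov plus a union bound (the paper applies Markov to the partial sums $\sum_{k\le m}\layer{k}[\rootVertex][\neighborOrderings][\graph]$ and union-bounds over $m$, you apply it layer by layer and union-bound over $k$, but this is cosmetic). One point worth making explicit is that $\graphFamily$ must be a set of \emph{graphs}, not of (position, graph) pairs: the paper handles this by distinguishing a joint event $B$ on $(\vectorize{x},\graph)$ from the graph event $A=\{\graph:\exists\,\neighborOrderings \text{ satisfying the bounds}\}$ and observing $B\subseteq\region^{n}\times A$, a step you are implicitly using when you say the Markov bounds ``place $\graph$ in a family $\graphFamily$''.
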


\begin{proof}
	Fix some $\varepsilon > 0$.
	By the definition of $\pwcc{\potential}$ and the fact that $\pwcc{\potential} > 0$, we know that there is some $k_0$ such that $\kPWCC{\potential}{k} \le \eulerE^{k \varepsilon/2} \pwcc{\potential}^{k}$ for all $k \ge k_0$.
	We set $a = 4 \varepsilon^{-1}$, $c = (1 - \eulerE^{-\varepsilon/2}) \cdot \left((\generalizedTemperedness{\potential} / \pwcc{\potential})^{k_0} + 2\right)$, and $n_0 = \max\left\{\frac{2}{\pwcc{\potential}} \volumeMeasure[\region], \eulerE^{ a^{-1} k_0} \right\}$.

	Next, fix any $n \ge n_0$.
	Our goal is to show that, for $\graph \sim \canonicalDistribution{n}{\region}{\potential}$, we can find some neighborhood ordering for $\graph$ such that, with probability at least $1 - 1/n$, the requirements of \Cref{def:connective_constant} are satisfied for $a, c$ as above and $\connectiveConstant = \eulerE^{\varepsilon} \frac{n}{\volumeMeasure[\region]} \pwcc{\potential}$.

	We start by constructing the class of neighborhood orderings we consider.
	Let $\graph \in \graphs{n}$ be a graph on $[n]$.
	For every $\vectorize{x} = (x_1, \dots, x_n) \in \region^{n}$ we construct a neighborhood ordering $\neighborOrderings_{x, \graph} = (\neighborOrder{i})_{i \in [n]}$ on $\graph$ such that, for every vertex $i \in [n]$ and neighbors $j_1, j_2 \in \neighbors{\graph}{i}$, it holds that $\dist[x_i][x_{j_1}] < \dist[x_i][x_{j_2}]$ implies $\neighborOrder{i}[j_1] <  \neighborOrder{i}[j_2]$.
	That is, we order the neighbors of each vertex increasingly by distance, breaking ties arbitrarily (e.g., by vertex IDs).
	It now suffices to show that, for $\graph \sim \canonicalDistribution{n}{\region}{\potential}$, the following event has probability at least $1-1/n$: there is a sequence of points $\vectorize{x} \in \region^{n}$ such that for all $m \ge a \ln(n)$ and all $\rootVertex \in [n]$
	\[
		\sum_{k=0}^{m} \layer{k}[\rootVertex][\neighborOrderings_{\vectorize{x}, \graph}][\graph] \le c \left(\eulerE^{\varepsilon} \frac{n}{\volumeMeasure[\region]} \pwcc{\potential}\right)^{m}.
	\]
	We denote this event by $A$, so that the desired statement is simply expressed as $\canonicalDistribution{n}{\region}{\potential}(A) \ge 1-1/n$.

	To prove this, we study a distribution $\kappa$ on the space $\region^{n} \times \graphs{n}$, equipped with the product sigma field $\BorelOn{\region}^{n} \tensor \powerset{\graphs{n}}$.
	Consider the following procedure:
	\begin{enumerate}
		\item For each $i \in [n]$, draw a uniform random point $X_i \sim \uniformDistributionOn{\region}$ independently.
		We call the resulting random vector $X=(X_i)_{i \in [n]}$
		\item Construct a graph $\graph$ on vertex set $[n]$ as follows.
		For all $i, j \in [n]$ with $i \neq j$, connect vertices $i$ and $j$ with an edge with probability $1 - \eulerE^{- \potential[X_i][X_j]}$ independently.
	\end{enumerate}
	We take $\kappa$ to be the distribution of $(X, \graph)$ generated as above.
	We further write $\kappa_{X}$ and $\kappa_{\graph}$ for the respective marginals and note that $\kappa_{X} = \productUniformDistributionOn{\region}{n}$ and $\kappa_{\graph} = \canonicalDistribution{n}{\region}{\potential}$.
	Further, let $B$ denote the following event: for all $m \ge a \ln(n)$ and all $\rootVertex \in [n]$ it holds that
	\[
		\sum_{k=0}^{m} \layer{k}[\rootVertex][\neighborOrderings_{X, \graph}][\graph] \le c \left(\eulerE^{\varepsilon} \frac{n}{\volumeMeasure[\region]} \pwcc{\potential}\right)^{m}.
	\]
	Note that the main difference between the events $A$ and $B$ is that $A$ asks for the existence of a point sequence~$\vectorize{x}$ for a given random graph $\graph$, whereas $B$ is a statement about a random pair $(X, \graph)$.
	In particular, it holds that $B \subseteq \region^{n} \times A$ and consequently
	\[
		\canonicalDistribution{n}{\region}{\potential}(A) = \kappa_{\graph}(A) = \kappa(\region^{n} \times A) \ge \kappa(B).
	\]
	Thus, the theorem is proven by showing that $\kappa(B) \ge 1 - 1/n$.

	We prove this by first bounding the expectation of $\layer{k}[\rootVertex][\neighborOrderings_{X, \graph}][\graph]$ with respect to $\kappa$ for every fixed $\rootVertex \in [n]$.
	To this end, define $\psi: \pointProcessSpace^{k + 1} \to [0, 1]$ by
	\[
		\psi(y_0, y_1, \dots, y_k) = \prod_{j=1}^{k} \left(\exp\left(- \sum_{i=0}^{j-2} \ind{\dist[y_i][y_j] < \dist[y_i][y_{i+1}]} \potential[y_i][y_j]\right) \cdot \left(1 - \eulerE^{- \potential[y_{j-1}][y_j]}\right)\right) .
	\]
	For every sequence of distinct vertices $i_1, \dots, i_k \in [n] \setminus \{\rootVertex\}$ it now holds that
	\[
		\E{\ind{\rootVertex, i_1, \dots, i_k \in \orderedPaths[\rootVertex][\neighborOrderings_{X, \graph}][\graph]}}[X] \le \psi(X_{\rootVertex}, X_{i_1}, \dots, X_{i_k}).
	\]
	Applying linearity of expectation to sum over all such sequences of distinct vertices $i_1, \dots, i_k \in [n] \setminus \{\rootVertex\}$ and applying law of total expectation then yields that
	\begin{align*}
		\E{\layer{k}[\rootVertex][\neighborOrderings_{X, \graph}][\graph]}
		&\le \volumeMeasure[\region]^{-n} \int_{\region^n} \sum_{i_1, \dots, i_k} \psi(x_{\rootVertex}, x_{i_1}, \dots, x_{i_k}) \productVolumeMeasure{n}[\intD \vectorize{x}] \\
		&\le \volumeMeasure[\region]^{-k} \frac{(n-1)!}{(n-1-k)!} \sup_{z_0 \in \region} \int_{\region^{k}} \psi(z_{0}, z_{1}, \dots, z_{k}) \productVolumeMeasure{k}[\intD \vectorize{z}] \\
		&\le \left(\frac{n}{\volumeMeasure[\region]}\right)^{k} \kPWCC{\potential}{k} ,
	\end{align*}
	where $\vectorize{x} = (x_1, \dots, x_n)$ and $\vectorize{z} = (z_1, \dots, z_k)$.

	Next, we aim to obtain a tail bound for $\sum_{k=0}^{m} \layer{k}[\rootVertex][\neighborOrderings_{X, \graph}][\graph]$ for every $m \ge a \ln(n)$.
	To this end, we first bound the expectation, starting with splitting up the sum as
	\[
		\E{\sum_{k=0}^{m} \layer{k}[\rootVertex][\neighborOrderings_{X, \graph}][\graph]}
		\le \sum_{k=0}^{m} \left(\frac{n}{\volumeMeasure[\region]}\right)^{k} \kPWCC{\potential}{k}
		= \sum_{k=0}^{k_{0}-1} \left(\frac{n}{\volumeMeasure[\region]}\right)^{k} \kPWCC{\potential}{k} + \sum_{k=k_0}^{m} \left(\frac{n}{\volumeMeasure[\region]}\right)^{k} \kPWCC{\potential}{k} .
	\]
	We proceed by bounding each of the sums separately.
	For the first sum, note that it trivially holds that $\kPWCC{\potential}{k} \le \generalizedTemperedness{\potential}^{k}$ and, in particular, $\pwcc{\potential} \le \generalizedTemperedness{\potential}$.
	Moreover, for our choice of $n_0$, we $\frac{n}{\volumeMeasure[\region]} \generalizedTemperedness{\potential} \ge 2$.
	Combining both observations yields
	\[
		\sum_{k=0}^{k_{0}-1} \left(\frac{n}{\volumeMeasure[\region]}\right)^{k} \kPWCC{\potential}{k}
		\le \sum_{k=0}^{k_{0}-1} \left(\frac{n}{\volumeMeasure[\region]} \generalizedTemperedness{\potential}\right)^{k}
		= \frac{\left(\frac{n}{\volumeMeasure[\region]} \generalizedTemperedness{\potential}\right)^{k_0} - 1}{\frac{n}{\volumeMeasure[\region]} \generalizedTemperedness{\potential} - 1}
		\le \left(\frac{n}{\volumeMeasure[\region]} \generalizedTemperedness{\potential}\right)^{k_0} .
	\]
	For the second sum, recall that $\kPWCC{\potential}{k} \le \eulerE^{k \varepsilon/2} \pwcc{\potential}^k$ for all $k \ge k_0$.
	Therefore, we have
	\[
		\sum_{k=k_0}^{m} \left(\frac{n}{\volumeMeasure[\region]}\right)^{k} \kPWCC{\potential}{k}
		\le \sum_{k=k_0}^{m} \left( \eulerE^{\varepsilon/2} \frac{n}{\volumeMeasure[\region]} \pwcc{\potential} \right)^{k}
		\le  \frac{\left(\eulerE^{\varepsilon/2} \frac{n}{\volumeMeasure[\region]} \pwcc{\potential}\right)^{m + 1}}{\eulerE^{\varepsilon/2} \frac{n}{\volumeMeasure[\region]} \pwcc{\potential} - 1} .
	\]
	Combining both bounds yields
	\begin{align*}
		\E{\sum_{k=0}^{m} \layer{k}[\rootVertex][\neighborOrderings_{X, \graph}][\graph]}
		\le \left(\frac{\left(\frac{n}{\volumeMeasure[\region]} \generalizedTemperedness{\potential}\right)^{k_0}}{\left(\eulerE^{\varepsilon/2} \frac{n}{\volumeMeasure[\region]} \pwcc{\potential}\right)^{m}} + \frac{\eulerE^{\varepsilon/2} \frac{n}{\volumeMeasure[\region]} \pwcc{\potential}}{\eulerE^{\varepsilon/2} \frac{n}{\volumeMeasure[\region]} \pwcc{\potential} - 1}\right) \cdot \left(\eulerE^{\varepsilon/2} \frac{n}{\volumeMeasure[\region]} \pwcc{\potential}\right)^{m} .
	\end{align*}
	Since $\eulerE^{\varepsilon/2} \frac{n}{\volumeMeasure[\region]} \pwcc{\potential} \ge 2$ for $n \ge n_0$, we further bound
	\begin{align*}
		\E{\sum_{k=0}^{m} \layer{k}[\rootVertex][\neighborOrderings_{X, \graph}][\graph]}
		&\le \left(\left(\frac{\frac{n}{\volumeMeasure[\region]} \generalizedTemperedness{\potential}}{\eulerE^{\varepsilon/2} \frac{n}{\volumeMeasure[\region]} \pwcc{\potential}}\right)^{k_0} + 2\right) \cdot \left(\eulerE^{\varepsilon/2} \frac{n}{\volumeMeasure[\region]} \pwcc{\potential}\right)^{m} \\
		&\le \left(\left(\frac{\generalizedTemperedness{\potential}}{\pwcc{\potential}}\right)^{k_0} + 2\right) \cdot \left(\eulerE^{\varepsilon/2} \frac{n}{\volumeMeasure[\region]} \pwcc{\potential}\right)^{m}.
	\end{align*}

	Using this upper bound for the expectation and Markov's inequality, we have for $(X, \graph) \sim \kappa$ that
	\[
		\Pr{\sum_{k=0}^{m} \layer{k}[\rootVertex][\neighborOrderings_{X, \graph}][\graph] \ge c \left(\eulerE^{\varepsilon} \frac{n}{\volumeMeasure[\region]} \pwcc{\potential}\right)^{m}} \le (1 - \eulerE^{-\varepsilon/2}) \cdot \eulerE^{ - m \varepsilon / 2} .
	\]
	Applying the union bound over $m \ge m_0 = a \ln(n)$ we have
	\begin{align*}
		\Pr{\exists m \ge m_0: ~\sum_{k=0}^{m} \layer{k}[\rootVertex][\neighborOrderings_{X, \graph}][\graph] \ge c \left(\eulerE^{\varepsilon} \frac{n}{\volumeMeasure[\region]} \pwcc{\potential}\right)^{m}}
		\le (1 - \eulerE^{-\varepsilon/2}) \cdot \sum_{m = m_0}^{\infty} \eulerE^{- m \varepsilon / 2}
		\le n^{-2} ,
	\end{align*}
	where the last inequality is obtained by factoring out $\eulerE^{-m_0 \varepsilon/2}$ and using the fact that $m_0 = a \ln(n) \ge 4 \varepsilon^{-1} \ln(n)$.
	Finally, applying the union bound over the choice of root vertices $\rootVertex$ shows that $\kappa(B^{c}) \le 1/n$ and consequently $\kappa(B) \ge 1-1/n$, proving the theorem.
\end{proof}

We proceed by studying which notion of strong spatial mixing for graphs from $\canonicalDistribution{n}{\region}{\potential}$ we can obtain from \Cref{thm:connective_constant}.
To this end, we apply \Cref{thm:ssm_from_cc}, which yields the following result.

\begin{theorem} \label{thm:ssm_discretization}
	Suppose $\region \subseteq \pointProcessSpace$ is a bounded and measurable region with volume $\volumeMeasure[\region]>0$.
	Let $\potential: \pointProcessSpace^2 \to \R_{\ge 0} \cup \{\infty\}$ be a symmetric repulsive potential with $\generalizedTemperedness{\potential} < \infty$ and $\pwcc{\potential} > 0$, and let $\gppFugacity < \frac{\eulerE}{\pwcc{\potential}}$.
	There exists some $n_0 \in \bigTheta{\volumeMeasure[\region]}$ such that for all $n \ge n_0$ there is a family of graphs $\graphFamily \subseteq \graphs{n}$ with the following properties:
	\begin{enumerate}
		\item For $\graph \sim \canonicalDistribution{n}{\region}{\potential}$ it holds that
		\[
			\Pr{\graph \in \graphFamily} \ge 1 - \frac{1}{n} .
		\]
		\item Set $\fugacity = \gppFugacity \cdot \frac{\volumeMeasure[\region]}{n}$. There are constants $\alpha \ge 0, \beta > 0, a \ge 0$ independent $n$ and $\region$ such that for every $\graph \in \graphFamily$, all $v \in \vertices_{\graph}$, $S \subseteq \vertices_{\graph} \setminus \{v\}$ and feasible $\tau, \tau': S \to \{0, 1\}$ that only differ at distance $s \ge a \ln(n)$ from $v$ it holds that
		\[
			\absolute{\occupationRatio{\graph}{\fugacity}[v][\tau] - \occupationRatio{\graph}{\fugacity}[v][\tau']} \le \alpha \eulerE^{-\beta s} . \qedhere
		\]
	\end{enumerate}
\end{theorem}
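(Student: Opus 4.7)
The plan is to stitch together \Cref{thm:connective_constant} and \Cref{thm:ssm_from_cc}. For a suitably small $\varepsilon_1 > 0$ (depending on the slack in $\gppFugacity < \eulerE/\pwcc{\potential}$), \Cref{thm:connective_constant} directly produces a family $\graphFamily \subseteq \graphs{n}$ of connective constant at most $\connectiveConstant_n \coloneqq \eulerE^{\varepsilon_1} \frac{n}{\volumeMeasure[\region]} \pwcc{\potential}$ with the desired high-probability guarantee, which delivers item $1$ immediately. For item $2$, the strategy is to exhibit some $\varepsilon_2 > 0$ with $\fugacity \le \eulerE^{-\varepsilon_2} \criticalFugacity{\connectiveConstant_n}$ for all $n \ge n_0$, then invoke \Cref{thm:ssm_from_cc} to obtain exponential decay with rate $\beta \coloneqq \varepsilon_2/q$ and prefactor $\alpha \coloneqq c^{1/q} M/L$.

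To fix $\varepsilon_1$ and $\varepsilon_2$, I will write $\gppFugacity \pwcc{\potential} = \eulerE^{1-\Delta}$ for the available slack $\Delta > 0$. The asymptotic expansion $\criticalFugacity{\degree} = (\eulerE/\degree)(1 + O(1/\degree))$, obtained by expanding $(1 + 1/(\degree-2))^{\degree-1}$ in $1/\degree$, yields a universal threshold $D_0 = D_0(\Delta)$ beyond which $\criticalFugacity{\degree} \ge \eulerE^{1-\Delta/4}/\degree$. Choosing $\varepsilon_1 = \Delta/4$ in \Cref{thm:connective_constant} and enlarging that theorem's $n_0$ so that $\connectiveConstant_n \ge D_0$ whenever $n \ge n_0$ (which preserves $n_0 \in \bigTheta{\volumeMeasure[\region]}$, since the added requirement only inflates the implicit constant by a factor depending on $\Delta$ and $\pwcc{\potential}$), a short ratio computation gives $\fugacity/\criticalFugacity{\connectiveConstant_n} \le \eulerE^{-\Delta/2}$, so $\varepsilon_2 \coloneqq \Delta/2$ works.

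The step I expect to be most delicate is checking that the constants $\alpha$, $\beta$, $a$ returned by \Cref{thm:ssm_from_cc} are genuinely independent of $n$ and $\region$. Both $\beta = \varepsilon_2/q$ and the threshold $a$ inherited from \Cref{thm:connective_constant} depend only on fixed quantities, but $\alpha = c^{1/q} M/L = 2 c^{1/q} \sinh^{-1}(\sqrt{\fugacity}) \sqrt{\fugacity(\fugacity + 1)}$ involves $\fugacity = \gppFugacity \volumeMeasure[\region]/n$, which a priori varies. The resolution will rely on the fact that $n \ge n_0$ together with $n_0 \in \bigTheta{\volumeMeasure[\region]}$ forces $n \ge c_0 \volumeMeasure[\region]$ for a constant $c_0$ determined only by $\gppFugacity$, $\pwcc{\potential}$, and $\generalizedTemperedness{\potential}$; hence $\fugacity \le \gppFugacity/c_0$ uniformly, and by monotonicity of $\fugacity \mapsto \sinh^{-1}(\sqrt{\fugacity}) \sqrt{\fugacity(\fugacity + 1)}$ the prefactor $M/L$ is capped by a universal constant, giving the desired uniform $\alpha$.
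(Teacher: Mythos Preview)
Your proposal is correct and follows essentially the same route as the paper: choose a slack parameter, apply \Cref{thm:connective_constant} to get the graph family with connective constant $\connectiveConstant_n \asymp \frac{n}{\volumeMeasure[\region]}\pwcc{\potential}$, verify $\fugacity \le \eulerE^{-\varepsilon_2}\criticalFugacity{\connectiveConstant_n}$, invoke \Cref{thm:ssm_from_cc}, and then argue that $M/L$ stays bounded because $\fugacity = \gppFugacity\volumeMeasure[\region]/n$ is uniformly bounded once $n_0 \in \bigTheta{\volumeMeasure[\region]}$.

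The only cosmetic difference is in how the comparison $\fugacity \le \eulerE^{-\varepsilon_2}\criticalFugacity{\connectiveConstant_n}$ is established. You use the asymptotic $\criticalFugacity{\degree} \ge \eulerE^{1-\Delta/4}/\degree$ for $\degree \ge D_0(\Delta)$ and enlarge $n_0$ to force $\connectiveConstant_n \ge D_0$. The paper instead (implicitly) uses the exact inequality $\criticalFugacity{\degree} > \eulerE/\degree$, valid for all $\degree \ge 3$ since $(1+1/(\degree-2))^{\degree-1} > \eulerE$, which avoids that extra enlargement of $n_0$. Either works. One small loose end in your write-up: you set $\beta = \varepsilon_2/q$ and $\alpha = c^{1/q} M/L$, but \Cref{thm:ssm_from_cc} only guarantees $q \in (1,2]$ without pinning it down; to get genuinely $n$-independent constants you should replace these by the uniform choices $\beta = \varepsilon_2/2$ and $\alpha = \max(c,\sqrt{c})\cdot \sup M/L$, which your argument already supports.
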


\begin{proof}
	First, note that for $\gppFugacity < \frac{\eulerE}{\pwcc{\potential}}$ we can choose $\varepsilon > 0$ such that $\gppFugacity \le \frac{\eulerE^{1 - \varepsilon}}{\pwcc{\potential}}$.
	By \Cref{thm:connective_constant} we know that for $n_0 \in \bigTheta{\volumeMeasure[\region]}$ sufficiently large it holds that for all $n \ge n_0$ that there is a graph family $\graphFamily \subseteq\graphs{n}$ with connective constant bounded by $\connectiveConstant = \eulerE^{\varepsilon/2} \frac{n}{\volumeMeasure[\region]} \pwcc{\potential}$ such that $\graph \sim \canonicalDistribution{n}{\region}{\potential}$ is in $\graphFamily$ with probability at least $1 - \frac{1}{n}$.
	Thus, $\graphFamily$ satisfies the first requirement of our theorem.

	We proceed by establishing the second part of the theorem.
	To this end, let $\graphFamily$ and $\connectiveConstant$ be as above.
	Further, let $a, c$ be the constants given by \Cref{thm:connective_constant} and recall that they are independent of $n$ and $\volumeMeasure[\region]$.
	Observe that
	\[
		\fugacity = \gppFugacity \cdot \frac{\volumeMeasure[\region]}{n} \le \frac{\eulerE^{1 - \varepsilon}}{\pwcc{\potential}} \cdot \frac{\volumeMeasure[\region]}{n} \le \frac{\eulerE^{1 - \varepsilon/2}}{\connectiveConstant} .
	\]
	Thus, applying \Cref{thm:ssm_from_cc} proves our claim as soon as we show that $\frac{M}{L}$ for $M = \sinh^{-1}\left(\sqrt{\fugacity}\right)$ and $L = \frac{1}{2 \sqrt{\fugacity \cdot (\fugacity + 1)}}$ is uniformly bounded in $n$ and $\volumeMeasure[\region]$.
	Recalling that $\fugacity = \gppFugacity \cdot \frac{\volumeMeasure[\region]}{n}$, we see that $\frac{M}{L}$ is decreasing in $n$ and bounded by a constant independent of $\volumeMeasure[\region]$ as soon as $n_{0} \ge \gppFugacity \cdot \volumeMeasure[\region]$.
\end{proof}

\begin{remark}
	Note that from the proof of \Cref{thm:ssm_discretization} it actually even follows that the parameter $\alpha$ goes to~$0$ as $n$ increases.
	This is to be expected, since the occupation ratios go to zero as $\fugacity[n]$ decreases.
\end{remark}

We obtain the following corollary.

\begin{corollary} \label{cor:ssm_discretization}
	Suppose $\region \subseteq \pointProcessSpace$ is a bounded and measurable region with volume $\volumeMeasure[\region]>0$.
	Let $\potential: \pointProcessSpace^2 \to \R_{\ge 0} \cup \{\infty\}$ be a symmetric repulsive potential with $\generalizedTemperedness{\potential} < \infty$ and $\pwcc{\potential} > 0$, and let $\gppFugacity < \frac{\eulerE}{\pwcc{\potential}}$.
	There exists some $n_0 \in \bigTheta{\volumeMeasure[\region]}$ such that for all $n \ge n_0$ there is a family of graphs $\graphFamily \subseteq \graphs{n}$ with the following properties:
	\begin{enumerate}
		\item For $\graph \sim \canonicalDistribution{n}{\region}{\potential}$ it holds that
		\[
		\Pr{\graph \in \graphFamily} \ge 1 - \frac{1}{n} .
		\]
		\item The hard-core model with activity $\fugacity = \gppFugacity \cdot \frac{\volumeMeasure[\region]}{n}$ exhibits strong spatial mixing in the sense of \Cref{def:ssm} on $\graphFamily$ with decay rate independent of $n$ and $\region$. \qedhere
	\end{enumerate}
\end{corollary}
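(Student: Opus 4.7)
The plan is to derive this corollary directly from \Cref{thm:ssm_discretization} by observing that the exponential decay bound stated there is exactly the form required by \Cref{def:ssm}, modulo rewriting $\eulerE^{-\beta s}$ as $\mixingConstant^s$ for $\mixingConstant = \eulerE^{-\beta}$.

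Concretely, I would fix $\varepsilon > 0$ such that $\gppFugacity \le \eulerE^{1-\varepsilon} / \pwcc{\potential}$ and invoke \Cref{thm:ssm_discretization} to obtain $n_0 \in \bigTheta{\volumeMeasure[\region]}$, the graph family $\graphFamily \subseteq \graphs{n}$, and constants $\alpha \ge 0$, $\beta > 0$, $a \ge 0$ independent of $n$ and $\region$. The first property (the high-probability bound $1 - 1/n$ for $\graph \sim \canonicalDistribution{n}{\region}{\potential}$) is immediate. For the second property, I would set $\mixingConstant = \eulerE^{-\beta} \in [0, 1)$, which is independent of $n$ and $\region$ since $\beta$ is. Then for every $\graph \in \graphFamily$, every $v \in \vertices_{\graph}$, every $S \subseteq \vertices_{\graph} \setminus \{v\}$, and every pair of feasible boundary conditions $\tau, \tau'\colon S \to \{0,1\}$ differing at graph distance $s \ge a \ln(n)$ from $v$, \Cref{thm:ssm_discretization} gives
\[
    \absolute{\occupationRatio{\graph}{\fugacity}[v][\tau] - \occupationRatio{\graph}{\fugacity}[v][\tau']} \le \alpha \eulerE^{-\beta s} = \alpha \mixingConstant^s \in \bigO{\mixingConstant^s},
\]
which matches \Cref{def:ssm} (keeping in mind, as discussed in \Cref{remark:ssm}, that the definition implicitly allows a lower bound $s_0$ on the distance that may depend on $\size{\vertices_{\graph}}$; here $s_0 = a \ln(n)$ suffices).

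There is essentially no obstacle here beyond bookkeeping: all the heavy lifting was done in \Cref{thm:connective_constant} (the high-probability connective-constant bound) and \Cref{thm:ssm_from_cc} (converting a connective-constant bound into exponential decay of occupation ratios). The only point worth flagging is that one must verify that the decay rate $\mixingConstant = \eulerE^{-\beta}$ does not secretly depend on $n$ or $\volumeMeasure[\region]$; this follows because $\beta$ is produced by \Cref{thm:ssm_from_cc} from the data $(\varepsilon, \connectiveConstant, q, M/L)$, and in the proof of \Cref{thm:ssm_discretization} the quantities $\varepsilon$ and $q$ are chosen independently of $n$ and $\region$, while $\connectiveConstant$ enters only through the multiplicative factor $c^{1/q} M/L$, which affects $\alpha$ but not $\beta$.
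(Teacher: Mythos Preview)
Your proposal is correct and matches the paper's approach: the corollary is stated there without a separate proof, as an immediate consequence of \Cref{thm:ssm_discretization}, and your argument (set $\mixingConstant = \eulerE^{-\beta}$ and invoke \Cref{remark:ssm} for the distance threshold) is exactly the intended reading. The only minor redundancy is your explicit choice of $\varepsilon$ and the final paragraph digging into why $\beta$ is independent of $n$ and $\region$; this is already guaranteed by the statement of \Cref{thm:ssm_discretization}, so you can simply cite it.
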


We finish this section with discussing some algorithmic consequences of the results above.
However, we keep the discussion informal since the results can be obtained from standard techniques.
We start with the observation that \Cref{thm:ssm_from_cc} does in fact not only imply strong spatial mixing on each graph in the family $\graphFamily$, but it also implies a notion of strong spatial mixing in the Weitz trees that were used for bounding the connective constant of $\graphFamily$.
More precisely, each of the trees exhibits strong spatial mixing in term of the occupation ratio of the root, given boundary conditions sufficiently far down the tree (see \cite{sinclair2017spatial}).
This can be turned into a deterministic approximation algorithm for the partition function of the hard-core model as argued in \cite{weitz2006counting,sinclair2017spatial} with running time $\size{\vertices_{\graph}}^{\bigO{\log(\connectiveConstant)}}$ for every graph in $\graph \in \graphFamily$.
Using \Cref{thm:gpp_concentration}, we may use this to obtain a randomized algorithm with running time $\volumeMeasure[\region]^{\bigO{\log(\volumeMeasure[\region])}}$ (i.e., quasi-polynomial in $\volumeMeasure[\region]$) for approximating $\gppPartitonFunction{\region}[\gppFugacity][\potential]$ for $\gppFugacity < \frac{\eulerE}{\pwcc{\potential}}$ via the following procedure:
We first draw a random graph $\graph \sim \canonicalDistribution{n}{\region}{\potential}$ together with its (random) vertex locations $X = X_1, \dots, X_n \in \region^{n}$ for $n$ sufficiently large to satisfy \Cref{thm:gpp_concentration} and \Cref{thm:ssm_discretization}.
We then use the algorithm given in \cite{weitz2006counting} together with the Weitz trees based on the neighborhood ordering $\neighborOrderings_{X, \graph}$ (see proof of \Cref{thm:connective_constant}) to approximate the hard-core partition function on $\graph$ at vertex activity $\gppFugacity \frac{\volumeMeasure[\region]}{n}$.
An analogous procedure can be used to approximately sample from $\GibbsPointProcess{\region}{\gppFugacity}{\potential}$ with similar running time.

	\section*{Aknowledgements}
	Andreas Göbel was funded by the project PAGES (project No. 467516565) of the German Research Foundation (DFG).

	\bibliographystyle{plainurl}
	\bibliography{references.bib}
	
\end{document}